\documentclass[preprint,12pt]{imsart}

\RequirePackage[numbers]{natbib}
\usepackage[english]{babel}
\usepackage{amsmath}
\usepackage{amssymb}
\usepackage{ulem}
\usepackage{natbib}
\usepackage{comment}
\usepackage{url}
\usepackage{color}
\usepackage{hyperref}
\usepackage[pdftex]{graphicx}
\usepackage[right = 2cm, top=2cm, bottom=3cm,left=2cm]{geometry}
\usepackage{mathdots}
\usepackage{amsthm}
\usepackage{bbm}
\usepackage{enumitem}
\usepackage{array,multirow}
\usepackage{xr}


\normalem

\usepackage{framed}

\pagestyle{plain}

\usepackage[T1]{fontenc}
\usepackage[utf8x]{inputenc}

\DeclareMathOperator{\argmax}{argmax}
\DeclareMathOperator{\argmin}{argmin}

\begin{document}


\setlength\parindent{0pt}

\newcommand{\Rho}{P}
\newcommand{\IN}{\mathbb{N}}
\newcommand{\IQ}{\mathbb{Q}}
\newcommand{\IZ}{\mathbb{Z}}
\newcommand{\IR}{\mathbb{R}}
\newcommand{\IC}{\mathbb{C}}
\newcommand{\Ima}{\mbox{Im}}
\newcommand{\dif}{\ \mbox{d}}
\newcommand{\cov}{\mbox{cov}}

\newcommand{\Lp}{\mathcal{L}}
\newcommand{\sI}{\mathcal{I}}
\newcommand{\sA}{\mathcal{A}}
\newcommand{\sB}{\mathcal{B}}
\newcommand{\sP}{\mathcal{P}}
\newcommand{\sE}{\mathcal{E}}
\newcommand{\sF}{\mathcal{F}}
\newcommand{\sL}{\mathcal{L}}
\newcommand{\sG}{\mathcal{G}}
\newcommand{\sH}{\mathcal{H}}
\newcommand{\sT}{\mathcal{T}}
\newcommand{\sV}{\mathcal{V}}

\renewcommand{\Re}{\mbox{Re }}
\renewcommand{\Im}{\mbox{Im }}

\newcommand{\reff}[1]{(\ref{#1})}

\newcommand{\IP}{\mathbb{P}}
\newcommand{\IE}{\mathbb{E}}
\newcommand{\Ii}{\mathbbm{1}}
\newcommand{\supp}{\mbox{supp}}
\newcommand{\Hess}{\mbox{Hess}}
\newcommand{\Var}{\mbox{Var}}
\newcommand{\sX}{\mathcal{X}}
\newcommand{\Kov}{\mbox{Kov}}
\newcommand{\Cov}{\mbox{Cov}}
\newcommand{\tr}{\mbox{tr}}
\newcommand{\gdw}{\Leftrightarrow}
\newcommand{\pto}{\overset{p}{\to}}
\newcommand{\fsto}{\overset{a.s.}{\to}}
\newcommand{\dto}{\overset{d}{\to}}
\newcommand{\lto}{\overset{L^2}{\to}}
\newcommand{\sD}{\mathcal{D}}
\newcommand{\iid}{\overset{\mbox{iid}}{\sim}}
\renewcommand{\l}{\ell}

\newcommand{\lima}{\mbox{l.i.m.}}

\renewcommand{\supp}{\text{supp}}

\newcommand{\err}{\mbox{err}}
\newcommand{\bias}{\mbox{bias}}

\newcommand{\norm}[1]{\left\lVert#1\right\rVert}

\newtheorem{theorem}{Theorem}[section]
\newtheorem{corollary}[theorem]{Corollary}
\newtheorem{definition}[theorem]{Definition}
\newtheorem{proposition}[theorem]{Proposition}
\newtheorem{lemma}[theorem]{Lemma}
\newtheorem{remark}[theorem]{Remark}
\newtheorem{example}[theorem]{Example}
\newtheorem{assumption}[theorem]{Assumption}

\newcommand{\note}[1]
{$^{(!)}$\marginpar[{\hfill\tiny{\sf{#1}}}]{\tiny{\sf{(!) #1}}}}


\begin{frontmatter}

\title{A supreme test for periodic explosive GARCH }
\runtitle{ GARCH detection}

\begin{aug}
  
  \author{\fnms{Stefan} \snm{Richter}\thanksref{a,e1}\ead[label=e1,mark]{stefan.richter@iwr.uni-heidelberg.de}},
  \author{\fnms{Weining}  \snm{Wang}\thanksref{b,e2}\ead[label=e2,mark]{weining.wang@city.ac.uk}}
  \and
  \author{\fnms{Wei Biao}  \snm{Wu}\thanksref{c,e3}\ead[label=e3,mark]{wbwu@galton.uchicago.edu}}%

  \runauthor{Richter, Wang and Wu}

  \affiliation{Heidelberg University and University of Chicago}

  \address[a]{Institut f\"{u}r Angewandte Mathematik, Universit\"{a}t Heidelberg, Im Neuenheimer Feld 205, 69120 Heidelberg, Germany. \printead{e1}}
  
  \address[b]{Department of Economics, City, U of London, center of applied statistics, HU Berlin, Unter den Linden 6, 10099 Berlin, Germany.  \printead{e2}}

  \address[c]{Department of Statistics, University of Chicago, 5734 S. University Avenue, Chicago, IL 60637, USA.  \printead{e3}}

\end{aug}

\begin{abstract}
We develop a uniform test for detecting  and dating explosive behavior of a strictly stationary GARCH$(r,s)$ (generalized autoregressive conditional heteroskedasticity)  process. Namely, we test the null hypothesis of a globally stable GARCH process with constant parameters against an alternative where there is an 'abnormal' period with changed parameter values. During this period, the change may lead to an explosive behavior of the volatility process. It is assumed that both the magnitude and the timing of the breaks are unknown.
We develop a double supreme test for the existence of a break, and then provide an algorithm to identify the period of change.
Our theoretical results hold under mild moment assumptions on the innovations of the GARCH process. Technically, the existing properties for the QMLE in the GARCH model need to be reinvestigated to hold uniformly over all possible periods of change. The key results involve a uniform weak Bahadur representation for the estimated parameters, which leads to weak convergence of the test statistic to the supreme of a Gaussian Process.
In simulations we show that the test has good size and power for reasonably large time series lengths. We apply the test to Apple asset returns and Bitcoin returns. \end{abstract}

\begin{keyword}
\kwd{GARCH}
\kwd{IGARCH}
\kwd{Change-point Analysis}
\kwd{Concentration Inequalities}
\kwd{Uniform Test}
\end{keyword}

\end{frontmatter}

\section{Introduction}
Volatility is an important indicator for economic and financial stability.
There is a growing evidence for the unstable behavior of the historical volatility of numerous micro level as well as macro level  data, such as individual stocks prices, asset returns, VIX, inflation and unemployment. \cite{bloom2007uncertainty} document the unstable behavior of higher moments of many economic variables, such as R\&D (research and develop) rates related to the uncertainty about future productivity. It is understood that the nature of uncertainty is the unpredictability of any model to the future path of a time series. Therefore it may be connected with a change of the parameter values in the underlying data generating process. A direct empirical fact is that one often sees a sudden explosive behavior in the second moment of the process which bounces back after a while.
For example, in Figure \ref{figure:bitcoin} we have plotted a realization of a piece-wise explosive GARCH(1,1) process and the log returns of Bitcoin. The whole time span is set to be (July 28, 2010- May 14, 2011) with an explosive period with changing parameter values (December 30, 2010 - October 15, 2011). The two trajectories of the time series look rather similar.
Such kind of data phenomena suggest that the underlying processes have time varying parameters, calling for a rigorous econometric treatment for detection of change periods and corresponding inference. 
 We see that the piecewise explosive GARCH process in Figure \ref{figure:bitcoin} captures the explosive behavior of Bitcoin in the squared returns.  The aim of our paper is to develop a generalised supreme test for GARCH models which is able to detect exuberance behaviour periods (periods with explosive parameter values) which are associated with the empirical phenomena of explosiveness in the second moment. \\

\begin{figure}[h] \label{bitcoin}
\caption{Plot of simulated piecewise explosive squared returns (upper panel) versus Bitcoin squared log returns (lower panel). A stable period (20100728- 20110514) versus an explosive period (20100901-20110301). }
\centering
\label{figure:bitcoin}
\includegraphics[width=0.5\textwidth]{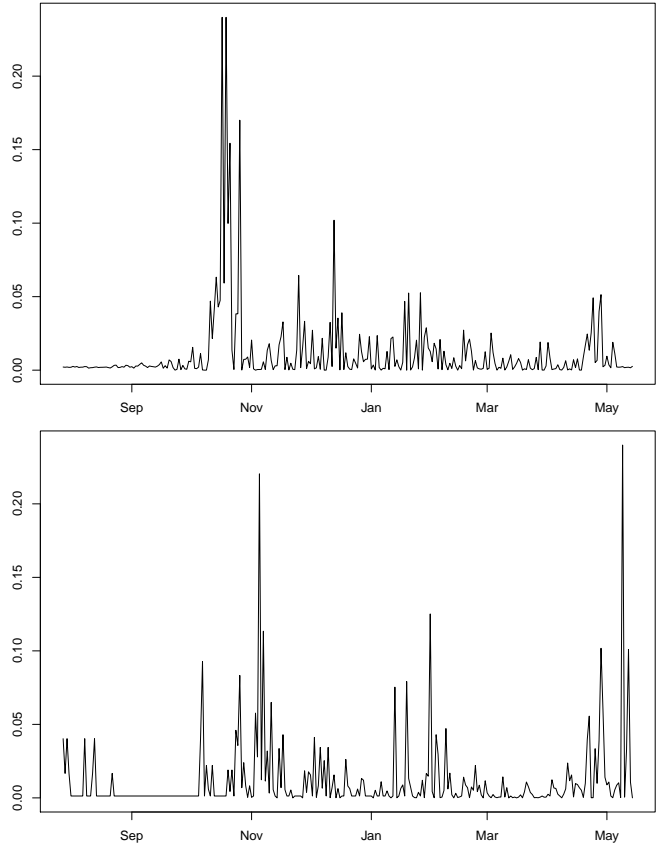}
\end{figure}



The highly celebrated autoregressive conditional heteroskedasticity (ARCH) model proposed by
\cite{engle1982autoregressive} is important for describing the pervasive phenomena of heteroskedasticity presented in many time series. One key generalization of ARCH is the GARCH model, i.e.
\begin{eqnarray}
    X_i^2 &=& \zeta_i^2 \sigma_i^2,\nonumber\\
    \sigma_i^2 &=& \alpha_0 + \sum_{j=1}^{r}\alpha_j X_{i-j}^2 + \sum_{k=1}^{s}\beta_k \sigma_{i-k}^2,\label{eq:garchrecursion}
\end{eqnarray}
where the conditional variance $\sigma_i^2$ depends on the past observations $X_{i-j}^2$ but also on the historical conditional variance $\sigma_{i-k}^2$. $\zeta_i$ are assumed to be i.i.d. innovations, see for example \cite{Pao:18} for more details of the model.

 Numerous estimation methods for the parameters of GARCH models have been proposed, and the consistency and asymptotic normality has been carefully studied in the literature. A conventional estimation approach is the quasi maximum likelihood estimation (QMLE), e.g. \cite{bollerslev1992quasi}. Also 
 \cite{fan2014quasi} study quasi-maximum likelihood estimation of GARCH models with heavy-tailed likelihoods. \cite{peng2003least} propose a least absolute deviation estimator.  \cite{jensen2004asymptotic} establish consistency and asymptotic normality of the quasi-maximum likelihood estimator in the linear ARCH model.
It is well known that under the assumption of strict stationarity of a GARCH model, there is still a region of parameter values allowing for realizations with unstable volatility behavior. The leading case is the ``IGARCH" process.  \cite{1990stationarity} looks at the behavior of an  ``IGARCH" process, and it is known that the unconditional mean of the IGARCH's conditional variance is not finite, which implies infinite second or higher moments (i.e. eruptive behavior). \cite{lee1994asymptotic} provide
an asymptotic theory for a strictly stationary GARCH(1,1) quasi-maximum likelihood estimator allowing for the case of IGARCH and explosive conditional variance and even nonstationarity.  \cite{jensen2004asymptotic} consider asymptotic inference for nonstationary GARCH model.

Despite the rich empirical literature which suggests the existence of an unstable moment period of a GARCH process, there is only sparse literature on determining and testing the period of explosiveness.  \cite{francq2012strict} provide a  test for testing strict stationarity of a GARCH(1,1) process. Their results however do not lead to a handy tool for checking for the existence of an explosive period and determining its range.
There is also a large and important literature on testing for explosiveness and dating the period of instability in the {price or dividend processes} of a financial asset using a supreme unit root test for bubbles. See for example \cite{phillips2011explosive} for a left-tailed augmented Dickey-Fuller test (ADF) for the explosive behavior in the 1990s Nasdaq. \cite{hafner2018testing} considers such kind of bubble tests for crypto-currencies. \cite{harvey2018testing} investigate a bubble test with a smooth time varying volatility function. The underlying models focus usually on unit root or explosive auto-regressive (AR) processes to test the change of the AR(1) coefficient. Often the variance of the errors stays the same or vary smoothly after the explosion which means that the volatility increase is mostly driven by the increase of the AR parameter. In our model we choose a different approach to model an explosion of volatility: we describe the evolution of the data-generating process by a GARCH process and therefore link the source of a change in the volatility to a change of the parameters in the volatility recursion.


It is worth noting that unlike a bubble test for an AR process, it is quite debatable to link a direct cause of the bursting behavior to the volatility process, see \cite{jurado2015measuring}. In contrary, volatility bursting can also be related to time varying risk aversion, sentiment, bubble or uncertainty. Nevertheless, we are trying to establish a theoretical rigorous framework of testing for the explosive interval using a GARCH model for the volatility process. It should be stressed that we focus on one aspect of the parameter and model uncertainty, namely changes in the parameters driving the volatility over time. We do not claim that our method can directly identify the cause of this behavior. In sum, we develop a change-point test for detecting  possible unstable behavior of a strictly stationary GARCH$(r,s)$ process. The null hypothesis is a GARCH process with globally constant parameters while the alternative is the existence of a period where the parameter values change to another (higher) value. This increase potentially leads to a period of explosive volatility.

Assuming that no information on the period and the change itself is available, we develop a test statistic based on supremes which searches over all possible sub-windows of the data. We prove asymptotic consistency and provide a limit distribution of our test statistic. It is important that the test is not of unit-root-type since hypothesis and alternative are still in the regime where the GARCH process is strictly stationary.
The theoretical contributions are uniform consistency statements of the QML estimators over an arbitrary observation period, a uniform weak Bahadur representation and corresponding uniform distributional limit results. For the proofs we carve out the essential analytical properties of the likelihood functions and use new concentration inequalities from \cite{zhangwu2017} leading to mild moment assumptions.


We introduce some notations we use throughout the paper. For $q > 0$ and vector $v = (v_1, \ldots, v_d)^\top \in \IR^d,$ let $|v |_1 := \sum_{i=1}^d |v_i|$. For matrices $A \in \IR^{d\times d}$, we similarly  use $|A|_{1} := sum_{i,j=1}^d |A_{i,j}|$. We denote by $|A|_2 = \max_{|v| = 1} |A v|_2$ the spectral norm of $A$. We use $Z_n\dto Z$ and $Z_n\pto Z$ to denote convergence in distribution and convergence in probability for random variables $Z_n,Z$.
For some sequence $(y_j)_{j\in\IN}$ of real numbers and some sequence $(\chi_j)_{j\in\IN}$ of nonnegative real numbers, define the weighted seminorm
\[
    |y|_{\chi,q} := \Big(\sum_{j=1}^{\infty}\chi_j |y_j|^q\Big)^{1/q}.
\]
For some sequences $(a_n)$ and $(b_n)$ of positive numbers, write $a_n=O(b_n)$ or $a_n = o(b_n)$ if there exists a positive constant $C$ such that $a_n/b_n\leq C$ or $a_n/b_n\rightarrow 0$ respectively.
For two sequences of random variables $(X_n)$ and $(Y_n),$ write $X_n=o_{p}( Y_n)$ (resp. $X_n = O_p(Y_n)$) if $X_n/Y_n\rightarrow 0$ in probability ($X_n/Y_n$ is bounded in probability).
For some random variable $Z$, define $\|Z\|_q := (\IE|Z|^{q})^{1/q}$. If $\|\cdot\|_q$ is applied to a matrix, this is meant by a component-wise operation. For the i.i.d. random variables $\zeta_i$, $i\in\IZ$ used in the model definition \reff{eq:garchrecursion}, let $\sF_i := (\zeta_i,\zeta_{i-1},...)$. With some abuse of notation, we refer to $\sF_i$ also as the $\sigma$ algebra generated by the entries of $\sF_i$. For $X_i = h(\sF_i)$, $i\in\IZ$ with some measurable function $h$, we define the functional dependence measure (cf. \cite{wushao2004}),
\[
    \delta_q(k) := \|X_i - X_i^{*}\|_q,
\]
where $X_i^{*} = h(\sF_i^{*})$ and $\sF_i^{*} := (\zeta_i,...,\zeta_{1},\zeta_0^{*},\zeta_{-1},\zeta_{-2},...)$ with $\zeta_0^{*}$ being an independent copy of $\zeta_0$.

Our text is organized as follows.
Section \ref{sec:model} provides the results for the important GARCH(1,1) model and the corresponding test procedure. Section \ref{garch1} is concerning the estimation and theoretical results in a general GARCH($r,s$) model. In particular, Section \ref{sec:garch} introduces the framework of the QMLE and the consistency of the QMLE, and Section \ref{sec:th} presents the theoretical foundations of our uniform test and discusses the estimation of the covariance matrix of the QMLE appearing in the test statistic. 
In Section \ref{sec:sim} we analyze the size and the power of our test in simulations, while Section \ref{sec:app} discusses the behavior of the test in examples from practice. The technical proofs are delegated to the Appendix.

\section{A supreme explosiveness test for GARCH(1,1)}\label{sec:model}
In this section, we introduce our model by starting with a simple testing framework for the GARCH(1,1) model. Then we will provide a rigorous theoretical treatment by starting with a more general GARCH($r,s$) model in the following section.
We consider first of all the baseline GARCH(1,1) model over the whole sample period with possibly time-varying parameters, 
\begin{eqnarray}
    X_i &=& \zeta_i\sigma_i,\nonumber\\
    \sigma_i^2 &=& \alpha_0(i) + \alpha_1(i) X_{i-1}^2 + \beta_1(i) \sigma_{i-1}^2,\quad i\in\IZ,\label{eq:garch11}
\end{eqnarray}
where $\zeta_i$ is an i.i.d. sequence of random variables with $\IE \zeta_1 = 0$, $\IE \zeta_1^2 = 1$, and $\alpha_0(i), \alpha_1(i), \beta_1(i) > 0$ are the underlying parameters at each time point.
We collect data of this model at time points $1,\dots, n$.

We summarize the parameters into $\theta(i) = (\alpha_0(i),\alpha_1(i),\beta_1(i))'$. In the case that the parameters are constant, i.e. $\theta(i) \equiv \theta = (\alpha_0',\alpha_1',\beta_1')'$, the top Lyapunov exponent associated with this model according to \cite{gammaconditionpaper} is 

\[
    \gamma(\theta) = \IE \log(\alpha_1 \zeta_1^2 + \beta_1).
\]
It is worth noting that $\gamma(\theta)<0$ allows (for instance) the IGARCH case, i.e. $\alpha_1+\beta_1 = 1$. The aim of this paper is to construct tests for a period of changed parameters in the GARCH model, which allows for the  explosiveness of the variances of the process. Namely, we would like to test whether there exists a period $\{n_1,...,n_2\}$ (with $1 < n_1 < n_2 < n$), where the parameter values
 in (\ref{eq:garch11}) change there values compared with $\{1,...,n\}$. The task breaks into two parts: First, checking for the existence of a change, for which a uniform test is needed. Second, one has to identify the period of the change and to estimate the corresponding parameters. 

For our studies, let
\[
    \Theta = \{\theta = (\alpha_0,\alpha_1,\beta_1) \in\IR^3: \gamma(\theta)<0, \alpha_0,\alpha_1,\beta_1 > 0\}
\]
be the parameter space which contains all possible configurations of $\theta = (\alpha_0,\alpha_1,\beta_1)$. 

Let $\theta^*(i) = (\alpha_0^{*}(i),\alpha_1^{*}(i),\beta_1^{*}(i))'$ denote the true parameter in the baseline model, which possibly has a period of change in $\{\lfloor n\tau_1^{*}\rfloor+1,..., \lfloor n\tau_2^{*} \rfloor\}$ (where $\tau_1^{*},\tau_2^{*} \in [0,1]$, $\tau_1^{*} < \tau_2^{*}$) with a magnitude $\Delta^* = [\delta_1^*, \delta_2^*,\delta_3^*]'$. Namely,
\[
    \theta^{*}(i) = \begin{cases}
    \theta^{*}, & i \le \lfloor n\tau_1^{*}\rfloor,\\
    \theta^{*} + \Delta^{*}, & \lfloor n\tau_1^{*}\rfloor+1 \le i \le \lfloor n\tau_2^{*} \rfloor,\\
    \theta^{*}, & i > \lfloor n\tau_2^{*}\rfloor.
\end{cases}
\]
 Here $\lfloor x \rfloor$ denotes the flooring operator, i.e. the largest integer smaller or equal to $x$. An interesting question is to test whether the process is stable, i.e. $\alpha_1^*(i)+ \beta_1^*(i)<1$ for all time points $i = 1,...,n$ versus the hypothesis that there exists a period explosiveness in which $\alpha^*_1(i)+ \beta^*_1(i)>1$ for some $i$. $\alpha_1^*(i)+ \beta_1^*(i)$ is referred to as the persistency parameter in our setting. Graphically, this corresponds to the question if there exists regions where the process leaves the variance-stationary regime (i.e. the variance explodes).

We formulate our hypothesis therefore in the following way,  with $\alpha_1^*+ \beta_1^* = c$ ($c\le 1$),
\[
        H_0: \delta^*_2+\delta_3^* = 0 \quad\text{v.s.} \quad H_1: \delta_2^{*}+\delta^*_3> 0.
\]
To construct a test, we first derive estimators for the parameters. If the period described by $\tau_1^{*},\tau_2^{*}$ is known, we can use a standard 
quasi maximum likelihood estimation approach.
It is not hard to see from (\ref{eq:garch11}), that in the case of constant parameters $\theta(i) \equiv \theta$,
\[
    \sigma_i^2 = \alpha_0/(1-\beta_1)+ \alpha_1 \sum^{\infty}_{k=1}\beta_1^k X^2_{i-1-k}\quad \text{a.s.}
\]
The truncated version which can be calculated from a sample is
\[
    \sigma_i^{2c} =  \alpha_0/(1-\beta_1)+ \alpha_1 \sum^{i-2}_{k=1}\beta_1^k X^2_{i-1-k}.
\]
The quasi likelihood approach is to use the negative log likelihood function
\[
    L_{n,\tau_1,\tau_2}^c(\theta) := \frac{1}{n}\sum_{i=\lfloor n\tau_1\rfloor+1}^{\lfloor n\tau_2\rfloor}\ell(X_i^2,Y_i^c,\theta),
\]
where $Y_i^c := (X_{i-1}^2,...,X_{1}^2,0,0,...)$ and
\[
    \ell(X_i^2,Y_i^c,\theta) := \frac{1}{2}\Big( \frac{X_i^2}{\sigma_i^{2c}} + \log \sigma_i^{2c}\Big).
\]
The estimated parameter with observations during any given period $\{\lfloor n\tau_1\rfloor+1,..., \lfloor n\tau_2 \rfloor\}$ is defined to be $\hat \theta_{n,\tau_1,\tau_2} = \argmin_{\theta \in \Theta} L_{n,\tau_1,\tau_2}^c(\theta).$
It can be shown that under regularity conditions, $\hat \theta_{n,\tau_1,\tau_2}$ is asymptotically normal with covariance matrix
\begin{equation}
    \Sigma = V(\theta^{*})^{-1}I(\theta^{*})V(\theta^{*})^{-1}\label{eq:sigma_representation},
\end{equation}
where
\[
    V(\theta) := \IE[\nabla_{\theta}^2 \ell(X_i^2,Y_i,\theta)],
\]
and
\[
    I(\theta) := \IE[\nabla_{\theta}\ell(X_i^2,Y_i,\theta)\cdot \nabla_{\theta}\ell(X_i^2,Y_i,\theta)'].
\]

Now we turn to our first step, i.e. a uniform test of the existence of the period of change.
For given $\tau_1, \tau_2$, the test statistic associated with our hypothesis $H_0$ of interest is
\[
    T_{\tau_1,\tau_2} = (\tau_2 - \tau_1)^{\chi}(H'\bar \Sigma_{n,\tau_1,\tau_2} H)^{-1/2}\{\hat \alpha_{1,n,\tau_1,\tau_2}+ \hat \beta_{1,n,\tau_1,\tau_2}- c\},
\]
where $\chi$ is a scaling factor chosen for numerical stability, $\hat \alpha_{1,n,\tau_1,\tau_2}$, $\hat \beta_{1,n,\tau_1,\tau_2}$ are the second and third component of $\hat \theta_{n,\tau_1,\tau_2}$, and $\bar \Sigma_{n,\tau_1,\tau_2}$ is an estimator of $\Sigma$ using observations outside $\{\lfloor n\tau_1 \rfloor,..., \lfloor n\tau_2 \rfloor\}$.

For instance, we can set $\chi = 0.5$,  $\bar\Sigma_{n,\tau_1,\tau_2} $ to be the standard covariance matrix estimator obtained by replacing $V,I$ with their empirical counterparts (see \reff{eq:estimator_sigma} below) with observations outside $\{\lfloor n\tau_1 \rfloor,...,\lfloor n\tau_2 \rfloor\}$ and $H = (0,1,1)^{\top}$.

The feasible search set is defined to be $R_{\kappa,\kappa'} := \{(\tau_1,\tau_2) \in [0,1]^2: \tau_1 < \tau_2, 1-\kappa' \ge \tau_2 - \tau_1 \ge \kappa\}$ (with some $\kappa,\kappa' > 0$, for instance $\kappa = \kappa' = 0.1$), ensuring proper estimation of $\Sigma$ due to $1-\kappa' \ge \tau_2 - \tau_1$ and a change detection based on enough samples due to $\tau_2 - \tau_1 \ge \kappa.$
The supreme of 
$\sqrt{n}T_{\tau_1, \tau_2}$ converges asymptotically to the supreme of a Gaussian process, namely $\{\frac{B(\tau_2) - B(\tau_1)}{(\tau_2 - \tau_1)^{1-\chi}}\}$, where $B(.)$ is a $1-$dimensional Brownian motion.  We show this formally in Theorem \ref{theorem:gaussian} in Section \ref{sec:th}.

Empirically, we cannot exhaust all the values $(\tau_1, \tau_2) \in R_{\kappa,\kappa'}$. 
We therefore need to restrict the calculation of the supreme to a set of grid points as an approximation to our supreme test statistics.

We summarize our uniform test as follows.
\begin{enumerate}
    \item[Step 0] Choose some $L > 0$ (number of grids associated with detection accuracy). The corresponding grid points are $\sG = \{\frac{j}{L}:j=0,...,L\}$ on the time line.
    \item[Step 1]  Let $H = (0,1,1)'$.
    \item[Step 2]  Choose values for $\kappa, \kappa'\in (0,1)$, and $\chi \in (0,1)$.
    \item[Step 3] For each given interval $(\tau_1,\tau_2) \in R_{\kappa,\kappa'}\cap \sG^2$, determine the associated QLME $\hat \theta_{n,\tau_1,\tau_2}$ defined in \reff{eq:estimator_theta} and calculate $\bar \Sigma_{n,\tau_1,\tau_2}$ as in \reff{eq:estimator_sigma}. Then we can calculate the test statistics as
    \[
        \hat B_n(\tau_1,\tau_2) := \sqrt{n}(\tau_2 - \tau_1)^{\chi} \{H'\bar\Sigma_{n,\tau_1,\tau_2}H\}^{-1/2}\big\{H' \hat \theta_{n,\tau_1,\tau_2} - H' \theta^{*}\big\}.
    \]
    (cf. \reff{eq:finalteststatistic_garch11}).
    \item[Step 4] For the critical value of this test, we can approximate the quantile of the test statistics via simulation under the null hypothesis $H_0$: For large $N$  (e.g. $N = 10000$), generate i.i.d. $\varepsilon_{i,k} \sim N(0,1)$, $i = 1,...,n$ and calculate
    \[
        \hat \mu_{n,k} := \sup_{(\tau_1,\tau_2) \in R_{\kappa,\kappa'} \cap \sG^2}\frac{1}{\sqrt{n}}\frac{\sum_{i=\lfloor n\tau_1\rfloor+1}^{\lfloor n \tau_2 \rfloor}\varepsilon_{i,k}}{(\tau_2 - \tau_1)^{1-\chi}}.
    \]
    We define $\hat q_{W,\delta} := \hat \mu_{n,[\lfloor N \cdot \delta\rfloor]}$, where $\hat \mu_{n,[1]},...,\hat \mu_{n,[N]}$ are the order statistics of $\hat \mu_{n,1},...,\hat \mu_{n,N}$.
     Figure \ref{figure:al} shows how one calculates the supreme test statistics over different windows associated with the grid points.
    
    \item[Step 5] We can now make a test decision based on critical values from the previous steps. If 
    \begin{equation}
        \sup_{(\tau_1,\tau_2) \in R_{\kappa,\kappa'} \cap \sG^2}\hat B_n(\tau_1,\tau_2) > \hat q_{W,\delta},\label{eq:testforgarch11}
    \end{equation}
    there is a significant shock in the parameter values. In this case, one can estimate the true shock period as $[\tau_1^{*}, \tau_2^{*}]$ 
    \[
    (\hat \tau_{1,n},\hat \tau_{2,n}) \in \argmax_{(\tau_1,\tau_2)\in R_{\kappa,\kappa'}}(\tau_2-\tau_1)^{\chi}\bar\Sigma_{n,\tau_1,\tau_2,H}^{-1/2}\big\{H' \hat \theta_{n,\tau_1,\tau_2} - H' \theta^{*}\big\}.
\]
   If instead \reff{eq:testforgarch11} does not hold, we conclude that there is no evidence for a period of parameter change.
    \item[Step 6]
   In case of the significance of our uniform test in Step 5, we re-estimate the parameter $\theta_{n,\hat{\tau}_1, \hat{\tau}_2}$, and produce the confidence interval based on Theorem \ref{theorem:gaussian}.
\end{enumerate}
\begin{figure}[h] 
\caption{Plot of the windows where the supreme is calculated. }
\centering
\label{figure:al}
\includegraphics[width=0.5\textwidth]{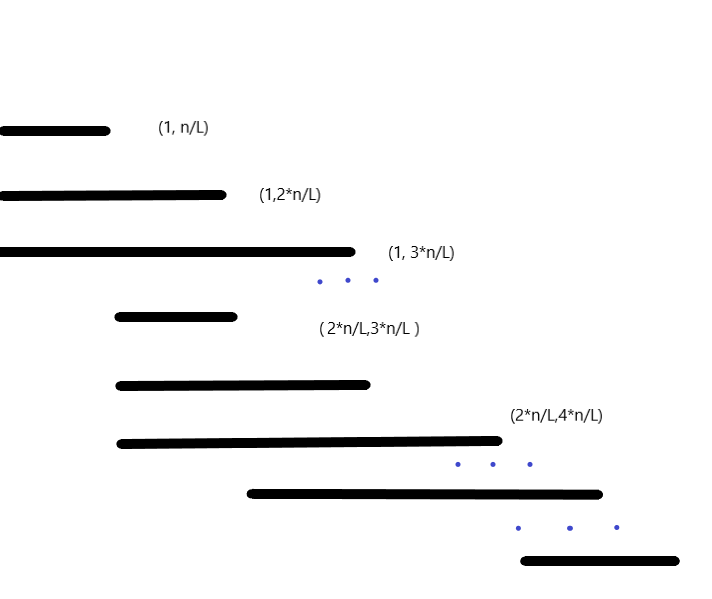}
\end{figure}

We name our test as a GSRWW test  (stands for GARCH supreme and the author abbreviations). The GSRWW test's limit distribution is 
$\mbox{sup}_{\tau_1,\tau_2 \in R_{\kappa,\kappa'}}\{\frac{B(\tau_2) - B(\tau_1)}{(\tau_2 - \tau_1)^{1-\chi}}\}$ (with $B(.)$ a standard Brownian motion).

\section{A supreme test for $GARCH(r,s)$  } \label{garch1}

In this section, we formulate the test and provide necessary theoretical results in a general GARCH($r,s$) model. 
For $r,s\in\IN$, $\theta(i) = (\alpha_0(i),\alpha_1(i),...,\alpha_r(i),\beta_1(i),...,\beta_s(i))'$, we consider the GARCH($r,s$) model
\begin{eqnarray}
    X_i^2 &=& \zeta_i^2 \sigma_i^2,\nonumber\\
    \sigma_i^2 &=& \alpha_0(i) + \sum_{j=1}^{r}\alpha_j(i) X_{i-j}^2 + \sum_{k=1}^{s}\beta_k(i) \sigma_{i-k}^2.\label{eq:garchrecursion}
\end{eqnarray}
Here, $\zeta_i$ are i.i.d. innovations with $\IE \zeta_1 = 0$ and $\IE \zeta_1^2 = 1$.
We first analyze the model in the case that the parameters are constant, i.e. $\theta(i) \equiv \theta = (\alpha_0,\alpha_1,...,\alpha_r,\beta_1,...,\beta_s)'$.
We first present our set of assumptions ensuring the existence of a unique stationary solution to our model in (\ref{eq:garchrecursion}) . Define $f(\theta) = (\alpha_1,\ldots,\alpha_r,\beta_1,\ldots,\beta_s)'$ and let $e_j = (0,\ldots,0,1,0,\ldots,0)' \in \IR^{r+s}$ be the unit column vector with $j$th element being 1, $1 \le j \le r+s$. Define the $(r+s) \times (r+s)$-matrix
\[
    A_i(\theta) = (f(\theta)\zeta_i^2, e_1,\ldots,e_{r-1},f(\theta),e_{r+1},\ldots,e_{r+s-1})'.
\]
Recall that $|A|_2$ is the spectral norm of a quadratic matrix $A$. Define the top Lyapunov exponent of $A_i(\theta)$,
\[
    \gamma(\theta) := \lim_{i\to\infty}\frac{1}{i}\log |A_{i}(\theta)A_{i-1}(\theta)\dots A_{1}(\theta)|_2.
\]
which exists if $\IE|\zeta_0^2|^{a} < \infty$ for some $a > 0$ (cf. \cite{garch2004}).

\begin{assumption}\label{ass1} Suppose that
    \begin{enumerate}
        \item[(A1)] $\zeta_0^2$ has a non-degenerate distribution with $\IE \zeta_0^2 = 1$.
        \item[(A2)] Let $\alpha_{min} > 0$, and 
        \begin{equation}
            \tilde \Theta = \{\theta \in \IR_{\ge 0}^{r+s+1}: \alpha_0 \ge \alpha_{min}, \gamma(\theta) < 0\text{ a.s.}, \sum_{j=1}^{s}\beta_j < 1\}.\label{eq:parameterspace}
        \end{equation}
        Let $\Theta \subset \tilde \Theta$ be compact. Assume that $\theta^{*} \in int(\Theta)$.
        \item[(A3)] Let $\sA_{\theta}(z) := \sum_{i=1}^{r}\alpha_i z^i$, $\sB_{\theta}(z) := 1-\sum_{j=1}^{s}\beta_j z^j$. If $s > 0$, $\sA_{\theta^{*}}(z)$ and $\sB_{\theta^{*}}(z)$ have no common root, $\sA_{\theta^{*}}(1)\not= 0$ and $\alpha^{*}_r + \beta^{*}_s \not= 0$.
    \end{enumerate}
\end{assumption}

{
Condition (A2) $\gamma(\theta) < 0$ ganrantees the strict stationarity of the GARCH process. 
Note that this includes parameter values corresponding to IGARCH or explosive GARCH $\sum_i \alpha_i + \sum_j \beta_j \geq 1$.
}

Most of the following proposition is in the spirit of the introductory comments in \cite{garch2004}. However, to obtain appropriate statements for the functional dependence measure, we have to modify some arguments in the proof.

\begin{proposition}[Existence of the GARCH model]\label{prop:existencegarch}
Let Assumption \ref{ass1} hold. Then:
\begin{enumerate}
    \item [(i)] \reff{eq:garchrecursion} has a unique stationary solution $X_i^2 = H(\sF_i)$, $i \in\IZ$.
    \item[(ii)] There exists $q > 0$ with $\|X_0^2\|_{q} \le D$ and $\delta_q^{X^2}(k) = O(c^k)$ for some $0 < c < 1$.
    \item[(iii)] $\lambda_{max}(B(\theta)) < 1$, where
    \[
        B(\theta) = \begin{pmatrix}
            \beta_1 & \beta_2 & \dots & \dots & \beta_s\\
            1 & 0 & \dots & \dots & 0\\
            0 & 1 & 0 & \dots & 0\\
            \vdots & \ddots & \ddots & \ddots & \vdots\\
            0 & \dots & 0 & 1 & 0
        \end{pmatrix}.
    \]
\end{enumerate}
\end{proposition}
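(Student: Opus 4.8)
The plan is to build everything on the companion-form representation of \reff{eq:garchrecursion} together with the classical Bougerol--Picard machinery, reserving the real work for the dependence measure in (ii). Collecting $(X_i^2,\dots,X_{i-r+1}^2,\sigma_i^2,\dots,\sigma_{i-s+1}^2)$ into a state vector $Z_i$, the recursion becomes the affine stochastic difference equation $Z_i = A_i(\theta)Z_{i-1} + b$, where $b$ is a constant vector built from $\alpha_0$. Iterating backward produces the candidate solution
\[
    Z_i = \sum_{k=0}^{\infty} A_i(\theta)A_{i-1}(\theta)\cdots A_{i-k+1}(\theta)\, b .
\]
Since $\gamma(\theta)<0$, the a.s. exponential growth rate of $|A_i\cdots A_{i-k+1}|_2$ is strictly negative, so the Cauchy root test gives a.s. absolute convergence. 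The limit is $\sF_i$-measurable, and the standard argument (the forward and backward iterates of any two solutions coincide in the limit) shows it is the unique strictly stationary solution, yielding (i) with $X_i^2 = H(\sF_i)$ read off from the first coordinate.

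For the moment bound in (ii), I would first upgrade the a.s. decay to an $L^q$ decay. The key lemma is that $\gamma(\theta)<0$ forces the existence of some $q\in(0,1]$ and $0<\rho<1$ with $\IE|A_i(\theta)\cdots A_1(\theta)|_2^{q}\le C\rho^{i}$; this is the standard Lyapunov/sub-multiplicativity fact (valid for $q$ small, and exploiting compactness of $\Theta$ so that $q,\rho,C$ can be chosen uniformly over $\theta$). The individual moment $\IE|A_1(\theta)|_2^q$ is finite because $\IE|\zeta_0^2|^q\le(\IE\zeta_0^2)^q=1$ by Jensen for $q\le 1$. Feeding the geometric bound into the series representation and using subadditivity $|a+b|^{q}\le|a|^{q}+|b|^{q}$ (legitimate since $q\le 1$) gives a convergent majorant and hence $\|X_0^2\|_q\le D$.

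The genuinely new ingredient, and the step I expect to be the main obstacle, is the functional dependence measure. Here I would couple $X_i^2=H(\sF_i)$ with $X_i^{2*}=H(\sF_i^{*})$, the two systems driven by identical innovations except at time $0$. Subtracting the two series term by term, every summand whose matrix product avoids the time-$0$ factor cancels exactly, so $X_i^2-X_i^{2*}$ reduces to a sum over the surviving terms, each containing the block $A_0(\theta)-A_0^{*}(\theta)$ sandwiched between two geometrically small matrix products. Applying the triangle inequality, the block-independence of the pre-$0$ and post-$0$ factors, and the $L^q$ geometric bound from the previous step term by term yields $\delta_q^{X^2}(k)=O(c^k)$. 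The care lies in bookkeeping which terms survive the coupling and in controlling the $L^q$ norm of a product of matrices that are dependent across lags but block-independent before and after time $0$; this is precisely where the argument must depart from the treatment in \cite{garch2004}.

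Finally, (iii) is a short algebraic fact about the companion matrix $B(\theta)$. Its eigenvalues solve $\lambda^s-\beta_1\lambda^{s-1}-\cdots-\beta_s=0$, equivalently $z=1/\lambda$ solves $\sB_\theta(z)=1-\sum_{j=1}^s\beta_j z^j=0$, so $\lambda_{max}(B(\theta))<1$ is equivalent to every root of $\sB_\theta$ lying strictly outside the closed unit disc. Since $\beta_j\ge 0$ and $\sum_{j=1}^s\beta_j<1$ on the parameter space \reff{eq:parameterspace}, for every $|z|\le 1$ we have $|\sB_\theta(z)|\ge 1-\sum_{j=1}^s\beta_j|z|^j\ge 1-\sum_{j=1}^s\beta_j>0$, so $\sB_\theta$ has no root in the closed unit disc and the claim follows.
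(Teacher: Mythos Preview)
Your argument is essentially correct, with one small slip: in the state-space form the paper uses, the intercept is $a_i(\theta)=(\alpha_0\zeta_i^2,0,\dots,0,\alpha_0,0,\dots,0)'$, which is \emph{not} constant but depends on $\zeta_i$. This does not break your construction or your coupling; you just have to carry the random $a_i$ through the series and note that in the coupling both $A_0$ and $a_0$ get starred, so the difference reads $Z_i-Z_i^{*}=A_i\cdots A_1\big[(a_0-a_0^{*})+(A_0-A_0^{*})Z_{-1}\big]$. After that your factorisation via independence and the $L^q$ product bound go through verbatim.

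The paper takes a shorter route. Instead of building the series and coupling by hand, it views the recursion as an iterated random function system $P_i=G_{\zeta_i,\theta}(P_{i-1})$, computes
\[
    \big\||W_n(y,\theta)-W_n(y',\theta)|_2\big\|_q \le \big\||A_n\cdots A_1|_2\big\|_q\,|y-y'|_2 \le C\big(s(q,\theta)^{1/m}\big)^n,
\]
and then invokes Theorem~2 of \cite{wushao2004}, which from this single $L^q$-contraction delivers (i) existence/uniqueness, the moment bound $\|X_0^2\|_q<\infty$, and the geometric decay $\delta_q^{X^2}(k)=O(c^k)$ all at once. So the step you flag as ``the main obstacle'' is precisely what that theorem is designed to absorb; your explicit coupling is in effect a re-derivation of it in this special affine case. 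Your approach is more self-contained and makes the mechanism visible, while the paper's is considerably more compact. For (iii) the paper simply cites \cite{garch2004}; your direct characteristic-polynomial argument is correct and is what that reference contains.
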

\subsection{QMLE in $GARCH(r,s)$ and its consistency} \label{sec:garch}
In this subsection, we describe the QMLE and provide a theorem on its uniform consistency.
For estimation of the parameters $\theta \in \Theta$, we consider the following QML approach.
We denote by $Y_i^c := (X_{i-1}^2,X_{i-2}^2,...,X_1^2,0,0,...)$ the observed data until time $i-1$. For $0\le \tau_1< \tau_2\le 1$,
\[
    L_{n,\tau_1,\tau_2}^c(\theta) := \frac{1}{n}\sum_{i=\lfloor n\tau_1\rfloor+1}^{\lfloor n\tau_2\rfloor}\ell(X_i^2,Y_i^c,\theta),
\]
where
\[
    \ell(x,y,\theta) := \frac{1}{2}\Big( \frac{x}{\sigma(y,\theta)^2} + \log \sigma(y,\theta)^2\Big)
\]
and $\sigma(y,\theta)^2$ follows the recursion
\begin{equation}
    \sigma(y,\theta)^2 = \alpha_0 + \sum_{j=1}^{r}\alpha_j y_{j} + \sum_{k=1}^{s}\beta_k \sigma( (y_{k+1},y_{k+2},...),\theta)^2.\label{eq:estimation_sigma}
\end{equation}

{

}
{The analytic definition of the recursion of  $\sigma(y,\theta)^2$ is formulated in a forward way (using $y_1,y_2,...$ instead of $y_{-1}$,$y_{-2}$,...) because we plug in $y = Y_i^c$ which is formulated in a backward way, leading to the usual quasi-likelihood approach for GARCH models.} Note that $\sigma(Y_i^c,\theta)$ in \reff{eq:estimation_sigma} terminates after a finite number of steps due to zeros in $Y_i^c$.{ Morevoer, instead of using the truncated version $Y_i^c = (X_{i-1}^2,X_{i-2}^2,...,X_1^2,0,...,0)$ which corresponds to assuming that all initial values $X_0^2 = X_{-1}^2 = ... = 0$, one can also use different initial values like $X_0^2 = X_{-1}^2 = ... = \alpha_0$ or $X_0^{2} = X_{-1}^2 = ... = X_1^2$ as investigated in \cite{garch2004}}. For a discussion of different initial values, consider \cite{bougerol1992stationarity} (in the case of strict stationarity).


With the defined likelihood function, for $0 \le \tau_1 < \tau_2 \le 1$, an estimator $\hat \theta_{n,\tau_1,\tau_2}$ of $\theta$ in the observation interval $i = \lfloor n\tau_1 \rfloor+1,...,\lfloor n \tau_2\rfloor$ is obtained by
\begin{equation}
    \hat \theta_{n,\tau_1,\tau_2} := \argmin_{\theta \in \Theta} L_{n,\tau_1,\tau_2}^c(\theta).\label{eq:estimator_theta}
\end{equation}

\begin{theorem}[Consistency of $\hat \theta_n$]\label{theorem:consistency}
    Let Assumption \ref{ass1} hold. Then for each $\kappa > 0$, 
    \[
        \sup_{0\le \tau_1 < \tau_2 \le 1,|\tau_1 - \tau_2| \ge \kappa}|\hat \theta_{n,\tau_1,\tau_2} - \theta^{*}|_1 \pto 0.
    \]
\end{theorem}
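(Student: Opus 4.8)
The plan is to adapt the classical consistency argument for the GARCH QMLE from \cite{garch2004}, but to make every limiting statement hold uniformly over the window endpoints $(\tau_1,\tau_2)$. Write $\ell_i^c(\theta):=\ell(X_i^2,Y_i^c,\theta)$ for the observed (truncated) contributions and $\ell_i(\theta):=\ell(X_i^2,Y_i,\theta)$ for the idealized ones built from the full past $Y_i:=(X_{i-1}^2,X_{i-2}^2,\dots)$; set $\bar\ell(\theta):=\IE[\ell_0(\theta)]\in(-\infty,+\infty]$. Two structural facts drive the whole argument. First, since $\sigma(y,\theta)^2\ge\alpha_0\ge\alpha_{\min}>0$ on $\Theta$, each $\ell_i(\theta)\ge\tfrac12\log\alpha_{\min}$ is bounded below uniformly in $i$, $\theta$ and $\omega$; this legitimises taking $\bar\ell(\theta)=+\infty$ for some $\theta$ and still using monotone limits. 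Second, I keep the $1/n$ normalization of $L_{n,\tau_1,\tau_2}^c$, so that $L_{n,\tau_1,\tau_2}^c(\theta)\approx(\tau_2-\tau_1)\bar\ell(\theta)$ and the constraint $\tau_2-\tau_1\ge\kappa$ converts any pointwise gap $\bar\ell(\theta)-\bar\ell(\theta^*)$ into a gap bounded below by a fixed multiple of $\kappa$ — this is exactly what lets a single consistency proof work simultaneously for all admissible windows. Since $X_i^2=H(\sF_i)$ is a measurable function of the i.i.d.\ innovations, the sequences $(\ell_i(\theta))_i$, and more generally $(\inf_{\theta\in V}\ell_i(\theta))_i$ for any $V$, are stationary and ergodic, so Birkhoff's theorem is available throughout.

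\textbf{Identifiability and a uniform gap.} Conditioning on $\sF_{i-1}$ and using $\IE[X_i^2\mid\sF_{i-1}]=\sigma_i^{*2}$ gives $\IE[\ell_i(\theta)-\ell_i(\theta^*)\mid\sF_{i-1}]=\tfrac12\, g\big(\sigma_i^{*2}/\sigma(Y_i,\theta)^2\big)$ with $g(u)=u-1-\log u\ge0$, vanishing only at $u=1$. Hence $\bar\ell(\theta)\ge\bar\ell(\theta^*)$, with equality iff $\sigma(Y_0,\theta)^2=\sigma_0^{*2}$ a.s.; the coprimeness and nondegeneracy conditions (A3) together with (A1) then force $\theta=\theta^*$, the standard GARCH identification. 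One checks $\bar\ell(\theta^*)<\infty$ because $\IE[X_0^2/\sigma_0^{*2}]=\IE[\zeta_0^2]=1$ and $\IE|\log\sigma_0^{*2}|<\infty$ (the latter from $\sigma_0^{*2}\ge\alpha_{\min}$ and $\IE\log^+X_0^2<\infty$, which follows from $\|X_0^2\|_q\le D$ in Proposition \ref{prop:existencegarch}(ii)). As $\bar\ell$ is lower semicontinuous and $\Theta_\epsilon:=\Theta\setminus B(\theta^*,\epsilon)$ is compact, the infimum of $\bar\ell$ over $\Theta_\epsilon$ is attained and strictly exceeds $\bar\ell(\theta^*)$; call the surplus $\eta=\eta(\epsilon)>0$.

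\textbf{Two uniform-over-window limits and the truncation error.} The core technical step is a functional law of large numbers uniform in the time fraction. For a bounded stationary ergodic sequence $(Z_i)$ one has $\sup_{t\in[0,1]}|\frac1n\sum_{i=1}^{\lfloor nt\rfloor}(Z_i-\IE Z_0)|\to0$ a.s.\ (Birkhoff on a dense set of $t$, upgraded to uniformity by the Lipschitz-in-$t$ control $|\frac1n\sum_{i\le\lfloor nt\rfloor}Z_i-\frac1n\sum_{i\le\lfloor ns\rfloor}Z_i|\le\|Z\|_\infty(|t-s|+\tfrac1n)$), and any window sum is a difference of two such partial sums. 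Applying this to $Z_i=\ell_i(\theta^*)$ (integrable) yields $\sup_{\tau_2-\tau_1\ge\kappa}|L_{n,\tau_1,\tau_2}(\theta^*)-(\tau_2-\tau_1)\bar\ell(\theta^*)|\to0$, where $L_{n,\tau_1,\tau_2}$ denotes the idealized (untruncated) criterion. For the lower bound on $\Theta_\epsilon$ I cover $\Theta_\epsilon$ by finitely many neighborhoods $V$ with $\IE[\inf_{\theta\in V}\ell_0(\theta)]>\bar\ell(\theta^*)+\eta/2$ (possible by monotone convergence as $V$ shrinks, using the lower bound $\tfrac12\log\alpha_{\min}$); applying the uniform LLN to the truncations $\min(\inf_V\ell_i,M)$ and letting $M\to\infty$ gives, uniformly over $\tau_2-\tau_1\ge\kappa$, $\inf_{\theta\in\Theta_\epsilon}L_{n,\tau_1,\tau_2}(\theta)\ge(\tau_2-\tau_1)(\bar\ell(\theta^*)+\eta/2)+o(1)$. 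Finally, the initialization error is uniformly negligible: with $\rho:=\sup_{\theta\in\Theta}\lambda_{\max}(B(\theta))<1$ (Proposition \ref{prop:existencegarch}(iii) plus compactness) one has $\sup_{\theta\in\Theta}|\ell_i^c(\theta)-\ell_i(\theta)|\le C\rho^{i}(1+X_i^2)$, and since $\|X_0^2\|_q\le D$ forces $\sum_i\rho^i X_i^2<\infty$ a.s., the full sum $\sup_\theta\frac1n\sum_{i=1}^n|\ell_i^c(\theta)-\ell_i(\theta)|=O(1/n)$ dominates every window.

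\textbf{Conclusion and the main obstacle.} Combining the three displays, for every window with $\tau_2-\tau_1\ge\kappa$ the observed criterion satisfies $L_{n,\tau_1,\tau_2}^c(\theta^*)\le(\tau_2-\tau_1)\bar\ell(\theta^*)+o(1)$ while $\inf_{\theta\in\Theta_\epsilon}L_{n,\tau_1,\tau_2}^c(\theta)\ge(\tau_2-\tau_1)\bar\ell(\theta^*)+\tfrac{\kappa\eta}{2}+o(1)$, the $o(1)$ terms being uniform in $(\tau_1,\tau_2)$. Since the positive gap $\kappa\eta/2$ eventually beats the $o(1)$, the minimizer $\hat\theta_{n,\tau_1,\tau_2}$ cannot lie in $\Theta_\epsilon$ for any admissible window, i.e.\ $\sup_{\tau_2-\tau_1\ge\kappa}|\hat\theta_{n,\tau_1,\tau_2}-\theta^*|_1\le\epsilon$ eventually, a.s.; letting $\epsilon\downarrow0$ gives the claim (a.s., hence in probability). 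The main obstacle is the interaction of uniformity over windows with the fact that $\bar\ell(\theta)$ may be $+\infty$ on part of $\Theta$ (the explosive regime permits $\IE X_0^2=\infty$): this is precisely why one must both exploit the uniform lower bound $\tfrac12\log\alpha_{\min}$ and truncate at level $M$, and why the functional (uniform-in-$t$) ergodic theorem, rather than a single application of Birkhoff, is the technical heart of the argument.
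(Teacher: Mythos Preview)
Your proposal is correct and follows essentially the same route as the paper: a Wald-type compactness argument via a finite cover of $\Theta_\varepsilon$ by neighborhoods $V$ with $\IE[\inf_{\theta\in V}\ell_0(\theta)]>\bar\ell(\theta^*)$ (the paper's Lemma \ref{lemma:stochastic_convergence_likelihood} is exactly your ``functional LLN uniform in the time fraction'' combined with truncation at level $M$), together with the same truncation-error control (Lemma \ref{lemma:truncation}). Two cosmetic points: your Lipschitz-in-$t$ upgrade of Birkhoff is stated only for \emph{bounded} $Z_i$ but then applied to the merely integrable $\ell_i(\theta^*)$---the paper closes this by using a.s.\ boundedness of the Ces\`aro means $S_m/m$ rather than boundedness of the summands (see \reff{eq:proof_stoch2}); and your pointwise bound $|\ell_i^c(\theta)-\ell_i(\theta)|\le C\rho^i(1+X_i^2)$ omits a random but a.s.\ finite factor coming from $\sum_{k\ge 0}\rho^k X_{-k}^2$, which does not affect the conclusion.
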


\subsection{Limiting distribution}\label{sec:th}
Given the consistency of our QMLE in a GARCH($r,s$) model, we provide a distribution theorem for $\hat \theta_{n,\tau_1,\tau_2}$ which allows us to obtain critical values for the uniform test defined in Section \ref{sec:model} and more general tests.
In our GSRWW test, we expect that over some observation period $\lfloor n\tau_1^{*}\rfloor+1,...,\lfloor n \tau_2^{*}\rfloor $, certain combinations of the parameters are large. For instance in the GARCH(1,1) model with $\theta = (\alpha_0,\alpha_1,\beta_1)$, one can observe (partly) explosive behavior even in the stationary case if $\alpha_1+\beta_1$ is large. To model the explosive behavior of the volatility process, we propose the following alternative 'shocked' $GARCH^{sh}(r,s)$ model where a change of parameter values happens at time $\lfloor n\tau_1^{*}\rfloor$. It pushes the parameters in a specific direction $H \in \IR^{(r+s+1) \times 1}$ which lasts until $\lfloor n\tau_2^{*}\rfloor$, where the parameter values go back to their initial states.

In the following, we assume that $\theta^{*}(i) = (\alpha_0^{*}(i),\alpha_1^{*}(i),...,\alpha_r^{*}(i),\beta_1^{*}(i),...,\beta_s^{*}(i))'$ denotes the true parameter in the baseline model, which possibly has a period of change in $\{\lfloor n\tau_1^{*}\rfloor+1,..., \lfloor n\tau_2^{*} \rfloor\}$ (where $\tau_1^{*},\tau_2^{*} \in [0,1]$, $\tau_1^{*} < \tau_2^{*}$). We suppose that the variation of $\theta^{*}$ over time reads
\[
    \theta^{*}(i) = \begin{cases}
    \theta^{*}, & i \le \lfloor n\tau_1^{*}\rfloor,\\
    \theta^{*} + H\Delta^{*}, & \lfloor n\tau_1^{*}\rfloor+1 \le i \le \lfloor n\tau_2^{*} \rfloor,\\
    \theta^{*}, & i > \lfloor n\tau_2^{*}\rfloor,
\end{cases}
\]
with some magnitude $\Delta^{*} \ge 0$ {such that $\theta^{*}+H\Delta^{*} \in \Theta$. If $\Delta^{*} = 0$, $\theta^{*}(i) \equiv \theta^{*}$ is constant over time and no change of parameter values happens; otherwise there is a change. We call the model \reff{eq:garchrecursion} with the above parameter configuration the shocked GARCH($r,s$) model, $GARCH^{sh}(r,s)$ for short.

The condition $\theta^{*}+H\Delta^{*} \in \Theta$ means that even in the alternative we assume that the observed process is strictly stationary. It should be noted that the space of allowed parameter configurations can be relaxed even further by sacrificing the estimation accuracy of the constant term $\alpha^*_0(i)$ (cf. \cite{francq2012strict}). }

We now give some important examples.

\begin{example}[$GARCH^{sh}(1,1)$]\label{example:garch11}
    Here, $\theta = (\alpha_0,\alpha_1,\beta_1)'$. Fix some $\bar\alpha_1$, for instance $\bar\alpha_1 = 1$, where we understand $\alpha^{*}_1 = \bar\alpha_1$ as a 'stable' parametrization of the process without a change. Assume that $X_i^2$ follows a $GARCH^{sh}(1,1)$ model with some $\Delta^{*} \ge 0$. The existence of a break period is related to  testing $\alpha_1^{*}(i) = \bar \alpha_1$ for all $i=1,...,n$ against $\alpha_1^{*}(i) > \bar\alpha_1$ for some $i$. Thus it corresponds to  $H = (0,1,0)'$,
    \[
        H_0: \Delta^{*} = 0 \quad\text{v.s.} \quad H_1: \Delta^{*} > 0.
    \]
\end{example}

\begin{example}[$GARCH^{sh}(r,s)$]\label{example:garchrs}
   For some fixed constants $\bar \alpha$, for instance $\bar\alpha = 1$, where we understand $\sum_{j=1}^{r}\alpha_j^{*} = \bar \alpha$ as a 'stable' parametrization. Assume that $X_i^2$ follows a $GARCH^{sh}(r,s)$ model with some $\Delta^{*} \ge 0$. The existence of an explosive period is then related to the question that whether $\sum_{j=1}^{r}\alpha_j^{*}(i) = \bar \alpha$ for all $i=1,...,n$ or $\sum_{j=1}^{r}\alpha_j^{*}(i) > \bar\alpha$ for some $i$. This is our test  with $H = (0,1,...,1,0,...,0)'$,
    \[
        H_0: \Delta^{*} = 0 \quad\text{v.s.} \quad H_1: \Delta^{*} > 0.
    \]
\end{example}

To formulate the hypotheses in a more general way, we propose the following 
\[
    H_0: H'\theta^{*} = \bar c \quad vs.\quad H_1: H'\theta^{*} > \bar c,
\]
where $H \in \IR^{(r+s+1)\times 1}$ is a vector, $\bar c \ge 0$. Motivated by the consistency result Theorem \ref{theorem:consistency}, we propose a test based on the supreme distance between the estimated targeting parameter value and the value under the null,
\[
    \hat B_{n,H}^{(\chi)} := \sqrt{n}\sup_{0 \le \tau_1 < \tau_2 \le 1, |\tau_1 - \tau_2|\ge \delta}(\tau_2 - \tau_1)^{\chi}\{H'\hat \theta_{n,\tau_1,\tau_2} - H'\theta^{*}\},
\]
where $\kappa > 0$ is some fixed parameter specifying the minimum length of the break period to be detected, and $\chi \in [0,1]$ is a scaling parameter which can be chosen arbitrarily, for instance $\chi = \frac{1}{2}$.

\begin{remark}[Consistency of the test statistic under the alternative]
If $H_1$ is true, there exists $\tau_1^{*} < \tau_2^{*}$ such that for $\lfloor n \tau_1^{*}\rfloor + 1 \le i \le \lfloor n\tau_2^{*}\rfloor$, $H'\theta^{*}(i) = H'\theta^{*} + H'H \Delta^{*} > H'\theta^{*}$, and  by applying Theorem \ref{theorem:consistency},
\[
    \hat B_{n,H}^{(\chi)} \pto \infty,
\]
which ensures that out test has asymptotic power 1.
\end{remark}
{We conjecture that the result can be extended even to nonstationary alternatives where  $\theta^{*}+H\Delta^{*} \not\in \Theta$ as long as $H'(1,0,...,0) = 0$ (\cite{francq2012strict} find out that one cannot expect $\hat \alpha_0$ to be consistently estimated in the nonstationary regime). Therefore in practice, we can fix $\alpha_0$ to be a constant and construct hypothesis in terms of other parameters.}

To obtain the critical values of our test, we need to derive quantiles for the test statistics $\hat B_{n,H}^{(\chi)}$, which can be inferred by its limit distribution. The asymptotic distribution of $\hat \theta_{n,\tau_1,\tau_2}$ is strongly connected to the Fisher information matrices
\[
    V(\theta) := \IE[\nabla_{\theta}^2 \ell(X_i^2,Y_i,\theta)],
\]
and
\[
    I(\theta) := \IE[\nabla_{\theta}\ell(X_i^2,Y_i,\theta)\cdot \nabla_{\theta}\ell(X_i^2,Y_i,\theta)'].
\]
We then present some properties regarding $V(\theta), I(\theta)$. From \cite{garch2004} (proof of Theorem 2.2, part (ii) therein), we directly obtain (ii),(iii) of the following Proposition:
\begin{proposition}(Properties of the variance covariance matrix of the parameter estimates)\label{prop:matrices}
    Let Assumption \ref{ass1} hold. Assume that $\mu_4 := \IE \zeta_0^4 < \infty$. Then:
    \begin{enumerate}
        \item[(i)] There exists $\iota > 0$ such that for all $\theta \in \Theta$ with $|\theta - \theta^{*}| < \iota$, $V(\theta)$ and $I(\theta)$ are finite.
        \item[(ii)] $I(\theta^{*}$) is nonsingular. It holds that $I(\theta^{*}) = \frac{\mu_4 - 1}{2}V(\theta^{*})$.
        \item[(iii)] $\IE \nabla_{\theta}\ell(X_i^2,Y_i,\theta^{*}) = 0$.
    \end{enumerate}
\end{proposition}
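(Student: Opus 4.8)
The plan is to take (ii) and (iii) from \cite{garch2004} as indicated and concentrate on the finiteness statement (i). I would first record explicit forms of the first two derivatives of the quasi-likelihood contribution. Writing $\sigma^2 = \sigma(Y_i,\theta)^2$, $u_i(\theta) := \nabla_\theta\log\sigma(Y_i,\theta)^2 = \nabla_\theta\sigma^2/\sigma^2$, $w_i(\theta):=\nabla^2_\theta\log\sigma(Y_i,\theta)^2$ and $R_i(\theta):=X_i^2/\sigma(Y_i,\theta)^2$, a direct computation gives
\[
\nabla_\theta\ell = \tfrac12 u_i\,(1-R_i), \qquad \nabla^2_\theta\ell = \tfrac12\big[w_i\,(1-R_i)+R_i\,u_i u_i'\big].
\]
Hence finiteness of $V(\theta)=\IE\nabla^2_\theta\ell$ and $I(\theta)=\IE[\nabla_\theta\ell\,\nabla_\theta\ell']$ reduces to showing that $\IE|w_i|(1+R_i)$ and $\IE|u_i|^2(1+R_i)^2$ are finite for $\theta$ in a neighbourhood of $\theta^*$. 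The decisive simplification is that $X_i^2=\zeta_i^2\,\sigma(Y_i,\theta^*)^2$, so that $R_i(\theta)=\zeta_i^2\rho_i(\theta)$ with $\rho_i(\theta):=\sigma(Y_i,\theta^*)^2/\sigma(Y_i,\theta)^2$, where $\zeta_i$ is independent of $\sF_{i-1}$ while $u_i,w_i,\rho_i$ are all $\sF_{i-1}$-measurable. Thus every term factorises as a moment of $\zeta_i^2$ (or $\zeta_i^4$) times a moment of a product of $u_i,w_i,\rho_i$, and it is exactly here — in the term $\IE[|u_i|^2R_i^2]=\mu_4\,\IE[|u_i|^2\rho_i^2]$ arising in $I(\theta)$ — that the assumption $\mu_4<\infty$ is used.

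It then remains to control two objects. The first is the pair of logarithmic derivatives $u_i(\theta)$ and $w_i(\theta)$. Using the $ARCH(\infty)$ representation $\sigma(Y_i,\theta)^2 = c_0(\theta)+\sum_{k\ge1}\phi_k(\theta)X_{i-k}^2$ granted by Proposition \ref{prop:existencegarch}, whose coefficients $\phi_k(\theta)$ and their first and second $\theta$-derivatives decay geometrically uniformly on compact $\Theta$ because $\lambda_{max}(B(\theta))<1$, I would bound a generic summand of $u_i$ by interpolating $\phi_k X_{i-k}^2/\sigma^2\le 1$ against $\phi_k X_{i-k}^2/c_0(\theta)$, giving, for small $\lambda>0$,
\[
\frac{|\nabla_\theta\phi_k|\,X_{i-k}^2}{\sigma^2}\ \le\ Ck\Big(\frac{\phi_k X_{i-k}^2}{c_0(\theta)}\Big)^{\lambda}\ \le\ C'\,k\,\rho^{\lambda k}\,(X_{i-k}^2)^{\lambda}.
\]
Choosing $\lambda$ so small that $(X_0^2)^\lambda$ has the required finite moment (possible since $\|X_0^2\|_q\le D$ by Proposition \ref{prop:existencegarch}), the geometric factor beats the polynomial one and Minkowski's inequality yields $\sup_{\theta\in\Theta}\IE|u_i(\theta)|^p<\infty$ for every $p$; the analogous argument for the second derivative gives $\sup_{\theta\in\Theta}\IE|w_i(\theta)|^p<\infty$. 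These bounds are global on $\Theta$ and need no neighbourhood.

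The second, and genuinely harder, object is the volatility ratio $\rho_i(\theta)$. The crude bound $\rho_i(\theta)\le\sigma(Y_i,\theta^*)^2/\alpha_{min}$ only transfers the moment of $\sigma(Y_i,\theta^*)^2$, which in the IGARCH or explosive regime may be of order below two, and is therefore insufficient for the terms containing $\rho_i^2$. The point is that near $\theta^*$ numerator and denominator cancel: comparing coefficients, $\phi_k(\theta^*)/\phi_k(\theta)$ grows at most geometrically in $k$ with a base that tends to $1$ as $\theta\to\theta^*$, so the admissible moment order of $\rho_i(\theta)$ tends to infinity as $\theta\to\theta^*$. I would therefore prove the key lemma that for every $s\ge1$ there is $\iota_s>0$ with $\sup_{|\theta-\theta^*|<\iota_s}\IE\,\rho_i(\theta)^{s}<\infty$, in the spirit of the ratio estimates in \cite{garch2004}; this is the main obstacle and the sole reason (i) is stated only locally. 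Granting it, I would fix $s$ large, split off $\zeta_i$ by independence, and finish each expectation by Hölder's inequality, e.g. $\IE[|u_i|^2\rho_i^2]\le\||u_i|^2\|_{p}\,\|\rho_i^2\|_{p'}$ with $p'$ matched to the moment supplied by the lemma, thereby obtaining the finiteness of $V(\theta)$ and $I(\theta)$ on $\{|\theta-\theta^*|<\iota\}$ for a suitable $\iota=\iota(s)$.
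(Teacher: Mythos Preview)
Your proposal is correct and follows essentially the same route as the paper: the two key ingredients you isolate---global control of the logarithmic derivatives $u_i,w_i$ via the interpolation $x/(1+x)\le x^\lambda$, and local control of the volatility ratio $\rho_i(\theta)$ via a coefficient comparison $\phi_k(\theta^*)/\phi_k(\theta)\le(1+\delta)^k$---are precisely what the paper establishes in Lemma~\ref{lem:analytical_properties}(i), so that its proof of (i) becomes a one-line citation of the resulting bound \eqref{eq:likelihood_upperbound}. The only cosmetic difference is that the paper's bound already has the multiplicatively separated form $C(1+|y|_{(\rho^j)_j,p}^p)(1+\zeta^2)$, so independence of $\zeta_i$ from $Y_i$ factors the expectation directly and your final H\"older step is unnecessary.
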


Next we provide results to quantify the difference $\hat \theta_{n,\tau_1,\tau_2} - \theta^{*}$ by a simple linear form uniformly in $\tau_1,\tau_2$. In the following, let $\kappa \in (0,1)$. We restrict ourselves to the case where $|\tau_1 - \tau_2| \ge \kappa$, i.e.
\[
    (\tau_1,\tau_2) \in R_{\kappa} := \{(\tau_1,\tau_2)\in[0,1]^2:\tau_1 < \tau_2, |\tau_1 - \tau_2| \ge \kappa\},
\]
where {$n\cdot \kappa$ then can be understood as the minimum size of a shock period which can be detected. In practice, $\kappa \in (0,1)$ can be chosen using a cross validation method so that the restriction on $R_{\kappa}$ does not affect the estimation accuracy.}

\begin{theorem}[Weak Bahadur Representation]\label{theorem:bahadur}
    Let Assumption \ref{ass1} hold. Assume that for some $a > 0$, $\IE |\zeta_0|^{4+a} < \infty$. Then for each $\kappa > 0$,
    \[
        \sup_{(\tau_1,\tau_2)\in R_{\kappa}}\big|\{\hat \theta_{n,\tau_1,\tau_2} - \theta^{*}\} + ((\tau_2 - \tau_1)V(\theta^{*}))^{-1}\cdot \nabla_{\theta}L_{n,\tau_1,\tau_2}(\theta^{*})\big| = O_{p}(\log(n)^3 n^{-1}).
    \]
\end{theorem}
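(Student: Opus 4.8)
The plan is to run the classical M-estimator linearization window by window and then make every error term uniform over $(\tau_1,\tau_2)\in R_\kappa$. First I would note that by the uniform consistency of Theorem~\ref{theorem:consistency} together with $\theta^{*}\in\mathrm{int}(\Theta)$, there is an event of probability tending to one on which $\hat\theta_{n,\tau_1,\tau_2}$ is interior simultaneously for all $(\tau_1,\tau_2)\in R_\kappa$, so that the first order condition $\nabla_\theta L^c_{n,\tau_1,\tau_2}(\hat\theta_{n,\tau_1,\tau_2})=0$ holds. Using the integral form of Taylor's theorem I would write
\[
\hat\theta_{n,\tau_1,\tau_2}-\theta^{*} = -\bar W_{n,\tau_1,\tau_2}^{-1}\,\nabla_\theta L^c_{n,\tau_1,\tau_2}(\theta^{*}),\qquad \bar W_{n,\tau_1,\tau_2}:=\int_0^1 \nabla^2_\theta L^c_{n,\tau_1,\tau_2}\big(\theta^{*}+t(\hat\theta_{n,\tau_1,\tau_2}-\theta^{*})\big)\,dt,
\]
which sidesteps the componentwise mean value issue. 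Since $L^c_{n,\tau_1,\tau_2}$ depends on $(\tau_1,\tau_2)$ only through the integers $\lfloor n\tau_1\rfloor,\lfloor n\tau_2\rfloor$, the supremum over the continuum $R_\kappa$ reduces to a maximum over the $O(n^2)$ admissible endpoint pairs, the continuous factor $(\tau_2-\tau_1)^{-1}$ contributing only a negligible $O_p(n^{-3/2}\sqrt{\log n})$ correction inside each cell.

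Next I would expand the remainder algebraically. Writing $M:=(\tau_2-\tau_1)V(\theta^{*})$ and suppressing indices,
\[
\big(\hat\theta-\theta^{*}\big)+M^{-1}\nabla_\theta L(\theta^{*}) = \bar W^{-1}\big(\nabla_\theta L(\theta^{*})-\nabla_\theta L^c(\theta^{*})\big) + M^{-1}\big(\bar W - M\big)\bar W^{-1}\nabla_\theta L(\theta^{*}),
\]
using $M^{-1}-\bar W^{-1}=M^{-1}(\bar W-M)\bar W^{-1}$. By Proposition~\ref{prop:matrices} the matrix $V(\theta^{*})$ is nonsingular and, since $\tau_2-\tau_1\ge\kappa$, the inverses $M^{-1}$ and (on the good event) $\bar W^{-1}$ are $O_p(1)$ uniformly. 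It then remains to bound three quantities uniformly over the endpoint pairs: the gradient truncation error $\nabla_\theta L(\theta^{*})-\nabla_\theta L^c(\theta^{*})$, the Hessian deviation $\bar W-M$, and the linear term $\bar W^{-1}\nabla_\theta L(\theta^{*})$, which is of the order of $\hat\theta-\theta^{*}$.

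The heart of the argument is a set of uniform-over-windows maximal inequalities. Writing $\nabla_\theta L_{n,\tau_1,\tau_2}(\theta^{*})=\frac1n\big(S_{\lfloor n\tau_2\rfloor}-S_{\lfloor n\tau_1\rfloor}\big)$ with $S_k:=\sum_{i=1}^k\nabla_\theta\ell(X_i^2,Y_i,\theta^{*})$, I would use that $\IE\nabla_\theta\ell(X_i^2,Y_i,\theta^{*})=0$ (Proposition~\ref{prop:matrices}(iii)) and that, because $\sigma(\cdot,\theta)^2\ge\alpha_{min}$ and the derivative recursions for $\nabla^j_\theta\sigma^2$ decay geometrically by $\lambda_{max}(B(\theta))<1$ (Proposition~\ref{prop:existencegarch}(iii)), the summands inherit from $\delta_q^{X^2}(k)=O(c^k)$ (Proposition~\ref{prop:existencegarch}(ii)) a geometrically decaying functional dependence measure. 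Feeding these into the concentration and maximal inequalities of \cite{zhangwu2017} and taking a union bound over the $O(n^2)$ increments $S_{\lfloor n\tau_2\rfloor}-S_{\lfloor n\tau_1\rfloor}$ yields $\sup_{R_\kappa}|\nabla_\theta L_{n,\tau_1,\tau_2}(\theta^{*})|=O_p(\mathrm{polylog}(n)\,n^{-1/2})$ under $\IE|\zeta_0|^{4+a}<\infty$, and the same mechanism applied to $\nabla^2_\theta\ell(\cdot,\theta^{*})-\IE[\cdot]$ gives $\nabla^2_\theta L_{n,\tau_1,\tau_2}(\theta^{*})-(\tau_2-\tau_1)V(\theta^{*})=O_p(\mathrm{polylog}(n)\,n^{-1/2})$ uniformly; in particular the linear term is $O_p(\mathrm{polylog}(n)\,n^{-1/2})$ uniformly, which also upgrades Theorem~\ref{theorem:consistency} to a uniform rate for $\hat\theta-\theta^{*}$. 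For $\bar W-M$ I would split off $\bar W-\nabla^2_\theta L^c(\theta^{*})$, bounded by $|\hat\theta-\theta^{*}|$ times a uniform-in-$\theta$ law of large numbers for the third-derivative envelope $\sup_{|\theta-\theta^{*}|\le\epsilon}\|\nabla^3_\theta\ell\|$ (finite by the $4+a$ moment assumption), and the truncation pieces $\nabla_\theta L-\nabla_\theta L^c$ and $\nabla^2_\theta L^c(\theta^{*})-\nabla^2_\theta L(\theta^{*})$, each $O_p(n^{-1})$ by the geometric decay in Proposition~\ref{prop:existencegarch}(iii). Multiplying the two $O_p(\mathrm{polylog}(n)\,n^{-1/2})$ factors in the second remainder term gives $O_p(\mathrm{polylog}(n)\,n^{-1})$, and collecting everything yields the asserted $O_p(\log(n)^3 n^{-1})$.

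I expect the main obstacle to be exactly these uniform maximal inequalities under only $4+a$ finite moments. Two difficulties compound: one must first transfer the geometric decay of the dependence measure from $X_i^2$ to the gradient, Hessian and (neighborhood-uniform) third derivative of $\ell$, which requires controlling the derivative recursions of $\sigma^2$ and using $\sigma(\cdot,\theta)^2\ge\alpha_{min}$; and one must then combine a Nagaev-type tail coming from the $4+a$ moments with the union bound over the $\sim n^2$ window increments and the supremum over $\theta$ in the third-derivative term. It is precisely this bookkeeping — the polynomial-plus-exponential tail crossed with the two-parameter window index — that produces the extra powers of $\log n$ and accounts for the (deliberately non-sharp) exponent $3$ in $\log(n)^3$.
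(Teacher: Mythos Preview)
Your proposal is correct and follows essentially the same route as the paper: uniform consistency forces the estimator into the interior, a Taylor expansion linearizes, the gradient and Hessian are controlled uniformly via the Zhang--Wu concentration inequalities (Lemma~\ref{lemma:convergence}) after transferring geometric functional dependence from $X_i^2$ to $\nabla_\theta^l\ell$ (Lemma~\ref{lemma:dependence}), truncation is negligible at order $O_p(n^{-1})$ (Lemma~\ref{lemma:truncation}), and a first pass gives the uniform rate $O_p(\log(n)^{3/2}n^{-1/2})$ for $\hat\theta-\theta^*$ which is then fed back into the Hessian error to produce $O_p(\log(n)^3 n^{-1})$.

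There are two minor technical deviations worth flagging. First, you use the integral Taylor form for $\bar W$, while the paper uses a single mean value $\bar\theta_{n,\tau_1,\tau_2}$; your choice is cleaner in that it avoids the componentwise mean value issue in $\IR^{r+s+1}$. Second, to bound $\bar W-(\tau_2-\tau_1)V(\theta^{*})$ you propose a third-derivative envelope times $|\hat\theta-\theta^*|$, whereas the paper instead proves that $\nabla_\theta^2 L^c_{n,\tau_1,\tau_2}(\theta)-(\tau_2-\tau_1)V(\theta)$ is $O_p(\log(n)^{3/2}n^{-1/2})$ \emph{uniformly in $\theta$ near $\theta^*$} (this is the $\sup_{|\theta-\theta^*|<\iota}$ in Lemma~\ref{lemma:convergence}, obtained by a chaining argument over $\theta$) and then uses Lipschitz continuity of the population map $\theta\mapsto V(\theta)$. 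Both work; the paper's version avoids the extra LLN for the third-derivative envelope but pays for it with the $\theta$-chaining step inside Lemma~\ref{lemma:convergence}. Finally, note that the paper reduces the two-parameter supremum to a one-parameter $\sup_{r\in[0,1]}$ via $L_{n,\tau_1,\tau_2}=L_{n,\tau_2}-L_{n,\tau_1}$ and then uses a dyadic decomposition of $\max_j|S_j|$, rather than your direct union bound over $O(n^2)$ windows; the dyadic route is what produces the specific exponent in $\log(n)^{3/2}$.
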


With the linearization of our parameter estimation, we further obtain the limit distribution of $\hat \theta_{n,\tau_1,\tau_2}$ {uniformly in $(\tau_1,\tau_2) \in R_{\kappa}$} under $H_0$ by using Gaussian approximation results from \cite{zhouwu2011}. {It naturally implies the pointwise convergence results from \cite{garch2004} but is much stronger since it can be used as a starting point to apply theorems (such as the continuous mapping theorem) from the empirical process theory. Let $\ell^{\infty}(T)$ denote the space of bounded functions $f:T \to \IR$, cf. \cite{vandi}, Section 18, Example 18.5.} As a direct consequence of the uniform Bahardur representation, we can derive the distribution of a simple test statistics under the null.

\begin{theorem}[Asymptotic distribution of $\hat \theta_{n,\tau_1,\tau_2}$]\label{theorem:gaussian} Assume assumption \ref{ass1}. Suppose that there exists $a' > 0$ such that $\IE |\zeta_0|^{4+a'} < \infty$. Fix $\kappa > 0$. Then on $\ell^{\infty}(R_{\kappa})^{r+s+1}$,
\begin{eqnarray*}
    && \sqrt{n}\big\{\hat \theta_{n,\tau_1,\tau_2} - \theta^{*}\big\} \dto  \Sigma^{1/2}\{\frac{B(\tau_2) - B(\tau_1)}{\tau_2 - \tau_1}\},
\end{eqnarray*}
where $B(\cdot)$ is a standard $(r+s+1)$-dimensional Brownian motion and
\[
    \Sigma := V(\theta^{*})^{-1}I(\theta^{*})V(\theta^{*})^{-1} = \frac{\mu_4-1}{2}\cdot V(\theta^{*})^{-1},
\]
where $\mu_4 := \IE \zeta_0^4$.
\end{theorem}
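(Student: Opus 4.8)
The plan is to feed the weak Bahadur representation of Theorem \ref{theorem:bahadur} into a functional invariance principle for the score process and then identify the limit by a covariance computation. First I would multiply the Bahadur representation by $\sqrt{n}$, giving
\[
    \sqrt{n}\{\hat\theta_{n,\tau_1,\tau_2}-\theta^{*}\} = -\big((\tau_2-\tau_1)V(\theta^{*})\big)^{-1}\sqrt{n}\,\nabla_{\theta}L_{n,\tau_1,\tau_2}(\theta^{*}) + R_n(\tau_1,\tau_2),
\]
where $\sup_{R_{\kappa}}|R_n|_1 = O_p(\log(n)^3 n^{-1/2}) = o_p(1)$, so the remainder is uniformly negligible over $R_{\kappa}$ and can be discarded by Slutsky's lemma at the end.

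Second, I would analyze the leading term. Writing $Z_i := \nabla_{\theta}\ell(X_i^2,Y_i,\theta^{*})$ and $S_n(\tau) := n^{-1/2}\sum_{i=1}^{\lfloor n\tau\rfloor}Z_i$, the leading term equals $-((\tau_2-\tau_1)V(\theta^{*}))^{-1}(S_n(\tau_2)-S_n(\tau_1))$. The sequence $(Z_i)$ is strictly stationary (inherited from $(X_i^2,Y_i)$), has mean zero by Proposition \ref{prop:matrices}(iii), and is in fact a martingale difference sequence with respect to $\sF_i$: a direct computation gives $Z_i = \tfrac12(\nabla_{\theta}\sigma_i^2/\sigma_i^2)(1-\zeta_i^2)$, where the prefactor is $\sF_{i-1}$-measurable while $1-\zeta_i^2$ is centered and independent of $\sF_{i-1}$. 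Consequently the long-run covariance of $(Z_i)$ collapses to the single term $\IE[Z_0 Z_0'] = I(\theta^{*})$. The moment assumption $\IE|\zeta_0|^{4+a'}<\infty$ is exactly what is needed so that $Z_i$ has slightly more than two finite moments (the factor $1-\zeta_i^2$ costs two), while the exponential decay of $\delta_q^{X^2}(k)$ from Proposition \ref{prop:existencegarch}(ii) transfers to a summable functional dependence measure for $(Z_i)$.

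Third, I would invoke the Gaussian approximation of \cite{zhouwu2011} for the partial-sum process of $(Z_i)$: on a suitable probability space there is a standard $(r+s+1)$-dimensional Brownian motion $B$ with $\sup_{\tau\in[0,1]}|S_n(\tau)-I(\theta^{*})^{1/2}B(\tau)|_2 = o_p(1)$, hence $S_n(\cdot)\dto I(\theta^{*})^{1/2}B(\cdot)$ in $\ell^{\infty}([0,1])^{r+s+1}$. Since the map $f \mapsto \big((\tau_1,\tau_2)\mapsto (f(\tau_2)-f(\tau_1))/(\tau_2-\tau_1)\big)$ into $\ell^{\infty}(R_{\kappa})^{r+s+1}$ is Lipschitz (hence continuous), continuity holding because $\tau_2-\tau_1\ge\kappa$ is bounded away from zero, the continuous mapping theorem yields
\[
    \sqrt{n}\{\hat\theta_{n,\tau_1,\tau_2}-\theta^{*}\} \dto -V(\theta^{*})^{-1}I(\theta^{*})^{1/2}\frac{B(\tau_2)-B(\tau_1)}{\tau_2-\tau_1}
\]
on $\ell^{\infty}(R_{\kappa})^{r+s+1}$. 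Finally I would reconcile this with the stated limit: since $B\overset{d}{=}-B$ and, using symmetry of $V,I$, one has $V(\theta^{*})^{-1}I(\theta^{*})^{1/2}(V(\theta^{*})^{-1}I(\theta^{*})^{1/2})' = V(\theta^{*})^{-1}I(\theta^{*})V(\theta^{*})^{-1} = \Sigma = \Sigma^{1/2}(\Sigma^{1/2})'$, the two zero-mean Gaussian processes have identical covariance functions and hence the same law, yielding the displayed limit with $\Sigma = \tfrac{\mu_4-1}{2}V(\theta^{*})^{-1}$ via Proposition \ref{prop:matrices}(ii).

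The main obstacle is the third step: establishing the functional invariance principle in the sup-norm for the dependent score sequence $(Z_i)$, i.e. verifying the moment and dependence conditions required by \cite{zhouwu2011}. In particular one must control the functional dependence measure of $Z_i$ uniformly and check that $4+a'$ innovation moments suffice to push the partial-sum coupling error below order $\sqrt{n}$; the martingale-difference structure simplifies the covariance bookkeeping, but obtaining weak convergence in $\ell^{\infty}$ (rather than mere finite-dimensional convergence) is where the genuine work lies.
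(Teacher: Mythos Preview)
Your proposal is correct and follows essentially the same route as the paper: Bahadur representation $\Rightarrow$ Zhou--Wu Gaussian coupling for the score partial sums $\Rightarrow$ continuous mapping. The only cosmetic difference is that the paper applies the Gaussian approximation directly to $W_i := -V(\theta^{*})^{-1}\nabla_{\theta}\ell(X_i^2,Y_i,\theta^{*})$, whose covariance is already $\Sigma$, so the limit comes out as $\Sigma^{1/2}(B(\tau_2)-B(\tau_1))/(\tau_2-\tau_1)$ without the final covariance-matching step you perform; your route via $Z_i$ and the identification $V^{-1}I^{1/2}B \overset{d}{=} \Sigma^{1/2}B$ is equally valid. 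Your explicit observation that $(Z_i)$ is a martingale difference (so the long-run variance collapses to $I(\theta^{*})$) is a point the paper leaves implicit. The paper verifies the Zhou--Wu hypotheses via its Lemma on the functional dependence measure of $\nabla_{\theta}\ell$, which is exactly the content you flag as the ``main obstacle''.
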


With a more general formulation, we obtain the limit distribution of $\hat B_{n,H}^{(\chi)}$ with the continuous mapping theorem. We state a slightly more general result by letting $H \in \IR^{(r+s+1) \times d}$ which allows to detect for more than one deviation from a 'stable' state.

\begin{corollary}[Limit distribution of $\hat B_{n,H}^{(\chi)}$]\label{theorem:limitBn} Suppose that Assumption \ref{ass1} holds. Suppose that there exists $a' > 0$ such that $\IE |\zeta_0|^{4+a'} < \infty$. Fix $\kappa > 0$.
Let $H \in \IR^{(r+s+1)\times d}$ be a matrix with full rank. Let $\Sigma_H := H'\Sigma H$. Then
\begin{eqnarray*}
    \hat B_{n,H}^{(\chi)} &=&\sqrt{n}\sup_{(\tau_1,\tau_2) \in R_{\kappa}}(\tau_2 - \tau_1)^{\chi}\big\{H' \hat \theta_{n,\tau_1,\tau_2} - H' \theta^{*}\big\}\\
    &\dto& \Sigma_H^{1/2}\sup_{(\tau_1,\tau_2) \in R_{\kappa}}\{\frac{B(\tau_2) - B(\tau_1)}{(\tau_2 - \tau_1)^{1-\chi}}\},
\end{eqnarray*}
where $B(\cdot)$ is a standard $d$-dimensional Brownian motion.
\end{corollary}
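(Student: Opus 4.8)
The plan is to deduce the corollary from the functional weak convergence already established in Theorem \ref{theorem:gaussian} by a single application of the continuous mapping theorem, so essentially no new probabilistic input is needed; the work lies in checking continuity of the relevant functional and in identifying the limiting law. Throughout I keep $\kappa>0$ fixed, which is what makes everything tractable.

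First I would record that the index set $R_{\kappa}=\{(\tau_1,\tau_2)\in[0,1]^2:\tau_2-\tau_1\ge\kappa\}$ is compact, and that Theorem \ref{theorem:gaussian} gives $\sqrt{n}\{\hat\theta_{n,\tau_1,\tau_2}-\theta^{*}\}\dto G$ as random elements of $\ell^{\infty}(R_{\kappa})^{r+s+1}$, where $G(\tau_1,\tau_2)=\Sigma^{1/2}\frac{B(\tau_2)-B(\tau_1)}{\tau_2-\tau_1}$. Because $\tau_2-\tau_1\ge\kappa>0$ on $R_{\kappa}$, the denominator is bounded away from zero, so $G$ has almost surely continuous sample paths and is a tight, Borel-measurable element of $\ell^{\infty}(R_{\kappa})^{r+s+1}$ concentrated on $C(R_{\kappa})^{r+s+1}$. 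Fixing $\kappa$ away from the diagonal is precisely what removes the singularity of the normalized increment and makes the subsequent map well behaved.

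Next, for $d=1$, I would introduce the functional $\Phi:\ell^{\infty}(R_{\kappa})^{r+s+1}\to\IR$, $\Phi(f):=\sup_{(\tau_1,\tau_2)\in R_{\kappa}}(\tau_2-\tau_1)^{\chi}H'f(\tau_1,\tau_2)$, and verify its continuity: since $(\tau_2-\tau_1)^{\chi}\le 1$ on $R_{\kappa}$, the linear map $f\mapsto[(\tau_1,\tau_2)\mapsto(\tau_2-\tau_1)^{\chi}H'f(\tau_1,\tau_2)]$ is bounded from $\ell^{\infty}(R_{\kappa})^{r+s+1}$ into $\ell^{\infty}(R_{\kappa})$ with operator norm at most $|H|_1$, and $g\mapsto\sup g$ is $1$-Lipschitz for the sup-norm, so the composition $\Phi$ is globally continuous. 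Since $\Phi$ is continuous at every point and $G$ is tight, the continuous mapping theorem for weak convergence in $\ell^{\infty}(R_{\kappa})$ (cf. \cite{vandi}) yields $\hat B_{n,H}^{(\chi)}=\Phi(\sqrt{n}\{\hat\theta_{n,\cdot}-\theta^{*}\})\dto\Phi(G)$.

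It then remains to identify $\Phi(G)$. Using $(\tau_2-\tau_1)^{\chi}\cdot(\tau_2-\tau_1)^{-1}=(\tau_2-\tau_1)^{\chi-1}$ gives $\Phi(G)=\sup_{R_{\kappa}}H'\Sigma^{1/2}\frac{B(\tau_2)-B(\tau_1)}{(\tau_2-\tau_1)^{1-\chi}}$. The process $W(\tau):=H'\Sigma^{1/2}B(\tau)$ is centred Gaussian with independent increments and $\mathrm{Cov}(W(\tau_2)-W(\tau_1))=(\tau_2-\tau_1)H'\Sigma H=(\tau_2-\tau_1)\Sigma_{H}$, hence $W\stackrel{d}{=}\Sigma_{H}^{1/2}\tilde B$ as processes, with $\tilde B$ a standard $d$-dimensional Brownian motion; for $d=1$ the positive scalar $\Sigma_{H}^{1/2}$ factors out of the supremum, producing exactly the stated limit $\Sigma_{H}^{1/2}\sup_{R_{\kappa}}\frac{B(\tau_2)-B(\tau_1)}{(\tau_2-\tau_1)^{1-\chi}}$. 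For general full-rank $H\in\IR^{(r+s+1)\times d}$ one runs the same argument at the level of the $\IR^{d}$-valued process $(\tau_2-\tau_1)^{\chi}\{H'\hat\theta_{n,\tau_1,\tau_2}-H'\theta^{*}\}$, which converges in $\ell^{\infty}(R_{\kappa})^{d}$ to $\Sigma_{H}^{1/2}\frac{B(\tau_2)-B(\tau_1)}{(\tau_2-\tau_1)^{1-\chi}}$, and then applies a continuous coordinatewise-supremum functional. The only genuinely delicate point is the bookkeeping for the continuous mapping theorem in the non-separable space $\ell^{\infty}(R_{\kappa})$ — confirming tightness and Borel measurability of the limit together with global continuity of $\Phi$ — since the substantive distributional convergence is inherited wholesale from Theorem \ref{theorem:gaussian}.
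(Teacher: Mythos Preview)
Your proposal is correct and follows exactly the approach the paper indicates: the paper does not give a separate proof of this corollary but simply states that ``we obtain the limit distribution of $\hat B_{n,H}^{(\chi)}$ with the continuous mapping theorem'' applied to Theorem~\ref{theorem:gaussian}. Your write-up supplies the details the paper omits (continuity of $\Phi$ on $\ell^{\infty}(R_{\kappa})$, tightness of the limit, identification of $H'\Sigma^{1/2}B$ with $\Sigma_H^{1/2}\tilde B$), so there is nothing to add.
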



\subsection{Estimation of $\Sigma$}
In this subsection, we discuss how to estimate the variance covariance matrix of our QMLE. which is needed to use the proposed test without prior knowledge of the parameters. We have seen that
\begin{eqnarray}
    \Sigma &=& V(\theta^{*})^{-1}I(\theta^{*})V(\theta^{*})^{-1}\label{eq:sigma_representation}\\
    &=& \frac{\mu_4-1}{2}\cdot V(\theta^{*})^{-1}.\nonumber
\end{eqnarray}
Here we restrict ourselves to estimation of $\Sigma$ via the representation \reff{eq:sigma_representation} to avoid estimating $\mu_4$ separately. To get a test with high power it seems reasonable to estimate $\Sigma$ under the alternative. Recall that $R_{\kappa,\kappa'} := \{(\tau_1,\tau_2) \in [0,1]^2: \tau_1 < \tau_2, 1-\kappa' \ge \tau_2 - \tau_1 \ge \kappa\}$. Let
\[
    \bar L_{n,\tau_1,\tau_2}^c(\theta) := \frac{1}{n}\sum_{i \in \{1,...,n\}\backslash\{\lfloor n\tau_1 \rfloor+1,\lfloor n \tau_2 \rfloor\}}\ell(X_i^2,Y_i^c,\theta)
\]
and define the estimator of $\theta^{*}$ in the stationary regime,
\[
    \bar \theta_{n,\tau_1,\tau_2} := \argmin_{(\tau_1,\tau_2)\in  R_{\kappa,\kappa'}}\bar L_{n,\tau_1,\tau_2}^c(\theta).
\]
Now put
\begin{eqnarray*}
    \bar V_{n,\tau_1,\tau_2}(\theta) &:=& \frac{1}{1-(\tau_2 - \tau_1)}\nabla_{\theta}^2 \bar L_{n,\tau_1,\tau_2}^c(\theta),\\
    \bar I_{n,\tau_1,\tau_2}(\theta) &:=& \frac{1}{n(1-(\tau_2 - \tau_1))}\sum_{i \in \{1,...,n\}\backslash\{\lfloor n\tau_1 \rfloor+1,\lfloor n \tau_2 \rfloor\}}\nabla_{\theta}\ell(X_i^2,Y_i^c,\theta)\nabla_{\theta}\ell(X_i^2,Y_i^c,\theta)'.
\end{eqnarray*}

Then the following intermediate result holds:

\begin{proposition}[Estimation of $V,I$]\label{prop:convergence_matrices}
    Suppose that Assumption \ref{ass1} holds. Suppose that there exists $a' > 0$ such that $\IE |\zeta_0|^{4+a'} < \infty$. Fix $\kappa,\kappa' > 0$. Then:
    \begin{enumerate}
        \item[(i)] $\sup_{(\tau_1,\tau_2) \in R_{\kappa,\kappa'}}|\bar V_{n,\tau_1,\tau_2}(\bar \theta_{n,\tau_1,\tau_2}) -  V(\theta^{*})|_1 \pto 0$,
        \item[(ii)] If additionally $\IE |\zeta_0|^{8+a'} < \infty$, $\sup_{(\tau_1,\tau_2)\in  R_{\kappa,\kappa'}}|\bar I_n(\bar \theta_{n,\tau_1,\tau_2}) - I(\theta^{*})|\pto 0$.
    \end{enumerate}
\end{proposition}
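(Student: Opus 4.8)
The plan is to split the object of interest into a pure-averaging part and an estimation-error part, and to reduce the uniformity in the window $(\tau_1,\tau_2)$ to a one-parameter functional law of large numbers. Write $g_\theta(X_i^2,Y_i^c)$ for either $\nabla_\theta^2\ell(X_i^2,Y_i^c,\theta)$ (part (i)) or $\nabla_\theta\ell(X_i^2,Y_i^c,\theta)\nabla_\theta\ell(X_i^2,Y_i^c,\theta)'$ (part (ii)). The complementary index set factorizes into prefix sums,
\[
\frac{1}{n}\sum_{i\in\{1,\dots,n\}\setminus\{\lfloor n\tau_1\rfloor+1,\dots,\lfloor n\tau_2\rfloor\}}\!\!\! g_\theta = G_n(\tau_1,\theta)+\big(G_n(1,\theta)-G_n(\tau_2,\theta)\big),\qquad G_n(\tau,\theta):=\frac1n\sum_{i=1}^{\lfloor n\tau\rfloor}g_\theta(X_i^2,Y_i^c).
\]
On $R_{\kappa,\kappa'}$ the normalizer $1/(1-(\tau_2-\tau_1))$ lies in $[1,1/\kappa']$, so it is harmless, and if $G_n(\tau,\theta^*)\to\tau\,\IE[g_{\theta^*}]$ uniformly in $\tau\in[0,1]$ then $\bar V_{n,\tau_1,\tau_2}(\theta^*)\to V(\theta^*)$ and $\bar I_{n,\tau_1,\tau_2}(\theta^*)\to I(\theta^*)$ uniformly on $R_{\kappa,\kappa'}$, using $V(\theta^*)=\IE[\nabla_\theta^2\ell(\cdot,\theta^*)]$ and $I(\theta^*)=\IE[\nabla_\theta\ell\,\nabla_\theta\ell']$.

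The core of the argument is then a uniform (in $\tau$ and in $\theta$) law of large numbers for $G_n$. Uniformity in $\tau$ comes from a maximal inequality for the partial-sum process: by Proposition \ref{prop:existencegarch}(ii) the functional dependence measure of $g_\theta(X_i^2,Y_i^c)$ decays geometrically (it is a smooth function of the $X^2$'s, whose measure decays like $O(c^k)$), so the concentration inequalities of \cite{zhangwu2017} control $\max_{m\le n}\big|\sum_{i=1}^m(g_\theta-\IE g_\theta)\big|$ and yield $\sup_\tau|G_n(\tau,\theta)-\tau\IE g_\theta|\pto0$. Uniformity in $\theta$ over $\{|\theta-\theta^*|\le\iota\}$ follows from stochastic equicontinuity: the maps $\theta\mapsto g_\theta$ are Lipschitz with an envelope having finite moment (the lower bound $\alpha_0\ge\alpha_{\min}$ and the contraction $\lambda_{\max}(B(\theta))<1$ from Proposition \ref{prop:existencegarch}(iii) make $\sigma(y,\theta)^2$ bounded below with well-behaved $\theta$-derivatives), so a finite $\varepsilon$-net over the compact neighborhood upgrades the pointwise bound to a uniform one. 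The moment thresholds enter exactly here: the entries of $\nabla_\theta^2\ell$ are of order $X^2/\sigma^2\sim\zeta^2$, so an $L^1$-LLN with controlled variance needs $\IE|\zeta_0|^{4+a'}<\infty$, whereas the entries of $\nabla_\theta\ell\,\nabla_\theta\ell'$ are of order $(\zeta^2-1)^2\sim\zeta^4$, requiring $\IE|\zeta_0|^{8+a'}<\infty$ for part (ii).

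It remains to pass from $\theta^*$ to the data-driven $\bar\theta_{n,\tau_1,\tau_2}$. Repeating the argument of Theorem \ref{theorem:consistency} with the likelihood $\bar L^c_{n,\tau_1,\tau_2}$ on the complementary set (which always retains at least $n\kappa'$ stationary observations generated by $\theta^*$) gives $\sup_{R_{\kappa,\kappa'}}|\bar\theta_{n,\tau_1,\tau_2}-\theta^*|_1\pto0$; with $\theta^*\in\mathrm{int}(\Theta)$ this places $\bar\theta_{n,\tau_1,\tau_2}$ inside the $\iota$-neighborhood with probability tending to one. Then
\[
\bar V_{n,\tau_1,\tau_2}(\bar\theta_{n,\tau_1,\tau_2})-V(\theta^*)=\underbrace{\big[\bar V_{n,\tau_1,\tau_2}(\bar\theta_{n,\tau_1,\tau_2})-\bar V_{n,\tau_1,\tau_2}(\theta^*)\big]}_{\text{(I)}}+\underbrace{\big[\bar V_{n,\tau_1,\tau_2}(\theta^*)-V(\theta^*)\big]}_{\text{(II)}},
\]
where (II) vanishes uniformly by the previous paragraph, and (I) is bounded by a uniformly $O_p(1)$ Lipschitz constant for $\theta\mapsto\bar V_{n,\tau_1,\tau_2}(\theta)$ (again a uniform LLN, now for the third $\theta$-derivatives of $\ell$) times $|\bar\theta_{n,\tau_1,\tau_2}-\theta^*|_1\pto0$. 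The same decomposition handles part (ii). Finally, replacing the untruncated $\sigma(Y_i,\theta)^2$ by the observable $\sigma(Y_i^c,\theta)^2$ contributes a uniformly negligible error, since the two differ by the geometric tail governed by $\lambda_{\max}(B(\theta))<1$ and $\sum_j\beta_j<1$, a standard GARCH truncation estimate as in \cite{garch2004}.

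The main obstacle is the genuinely two-dimensional uniformity built into the statement: one needs the functional LLN to hold simultaneously over the endpoint index $\tau$ and over the parameter $\theta$, for matrix summands that are fourth order in $X$ in part (ii). The delicate points are (a) a maximal inequality for the partial-sum process with moment control sharp enough that the $8+a'$ moment suffices, which is where the geometric decay of the dependence measure and the \cite{zhangwu2017} concentration bounds are essential, and (b) verifying that the $\theta$-derivatives of $\ell$ admit envelopes with finite moments uniformly on a neighborhood of $\theta^*$, resting on $\sigma^2\ge\alpha_{\min}$ and the contractivity in Proposition \ref{prop:existencegarch}. Everything else is a routine combination of these uniform laws with the matrix identities of Proposition \ref{prop:matrices}.
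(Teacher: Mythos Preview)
Your proposal is correct and follows essentially the same route as the paper: reduce the complementary sum to one-parameter partial sums, invoke the \cite{zhangwu2017} maximal inequality via the geometric dependence decay (the paper packages this as Lemma~\ref{lemma:convergence} together with the dependence bounds of Lemma~\ref{lemma:dependence}), establish uniform consistency of $\bar\theta_{n,\tau_1,\tau_2}$ by rerunning the argument of Theorem~\ref{theorem:consistency} on the complement, and pass from $\theta^*$ to $\bar\theta_{n,\tau_1,\tau_2}$ by a Lipschitz/continuity argument; for part (ii) the paper likewise verifies that $g=\nabla_\theta\ell\,\nabla_\theta\ell'$ obeys the same analytic estimates as $\nabla_\theta^2\ell$ but with $(1+\zeta^2)^2$ in place of $(1+\zeta^2)$, which is exactly your observation about the $8+a'$ moment. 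The only cosmetic difference is that the paper splits $\bar V(\bar\theta)-V(\theta^*)=[\bar V(\bar\theta)-V(\bar\theta)]+[V(\bar\theta)-V(\theta^*)]$ and uses Lipschitz continuity of the \emph{limit} $V(\cdot)$ (cf.\ \eqref{eq:bahadurproof_4}), whereas you split as $[\bar V(\bar\theta)-\bar V(\theta^*)]+[\bar V(\theta^*)-V(\theta^*)]$ and use an empirical Lipschitz constant via third derivatives; both are equivalent once the uniform-in-$\theta$ control is in hand.
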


We propose the estimate
\begin{equation}
    \bar \Sigma_{n,\tau_1,\tau_2} := \bar V_n(\bar \theta_{n,\tau_1,\tau_2})^{-1}\bar I_n(\bar \theta_{n,\tau_1,\tau_2}) \bar V_n(\bar \theta_{n,\tau_1,\tau_2})^{-1}\label{eq:estimator_sigma}
\end{equation}
for $\Sigma$ and $\bar \Sigma_{n,\tau_1,\tau_2,H} := H'\bar \Sigma_{n,\tau_1,\tau_2} H$. As a corollary of Theorem \reff{theorem:limitBn} and Proposition \reff{prop:convergence_matrices}, we obtain
\begin{eqnarray}
    \hat B_n &:=&\sqrt{n}\sup_{(\tau_1,\tau_2) \in R_{\kappa,\kappa'}}(\tau_2 - \tau_1)^{\chi}\bar \Sigma_{n,\tau_1,\tau_2,H}^{-1/2}\big\{H' \hat \theta_{n,\tau_1,\tau_2} - H' \theta^{*}\big\}\nonumber\\
    &\dto& \sup_{(\tau_1,\tau_2) \in R_{\kappa,\kappa'}}\{\frac{B(\tau_2) - B(\tau_1)}{(\tau_2 - \tau_1)^{1-\chi}}\} =: W\label{eq:finalteststatistic_garch11}
\end{eqnarray}
with $B(\cdot)$ a $d$-dimensional Brownian motion. The quantiles of the limit distribution of $W$ can be obtained via simulation.


If significance is detected,  $\tau_1^{*},\tau_2^{*}$ can be estimated by the choice
\[
    (\hat \tau_{1,n},\hat \tau_{2,n}) \in \argmax_{(\tau_1,\tau_2)\in R_{\kappa,\kappa'}}(\tau_2-\tau_1)^{\chi}\bar\Sigma_{n,\tau_1,\tau_2,H}^{-1/2}\big\{H' \hat \theta_{n,\tau_1,\tau_2} - H' \theta^{*}\big\},
\]
which is motivated by \reff{eq:finalteststatistic_garch11}. For the special case of one parameter change in a GARCH(1,1) model we will have the following simplified result.

\begin{example}[Example \reff{example:garch11} continued]
    From Corollary \reff{theorem:limitBn}, we obtain with under $H_0:\alpha_1^{*} = 1$:
    \begin{eqnarray*}
        && \sqrt{n}\sup_{(\tau_1,\tau_2) \in R_{\kappa,\kappa'}}(\tau_2 - \tau_1)^{\chi}(\bar \Sigma_{n,\tau_1,\tau_2})_{2,2}^{-1/2}\{\hat \alpha_{1,n,\tau_1,\tau_2} - 1\}\\
        &\dto& \sup_{(\tau_1,\tau_2) \in R_{\kappa,\kappa'}}\{\frac{B(\tau_2) - B(\tau_1)}{(\tau_2 - \tau_1)^{1-\chi}}\}
    \end{eqnarray*}
    with some 1-dimensional standard Brownian motion $B(\cdot)$.
\end{example}

\section{Simulation}\label{sec:sim}
In this section, we conduct a simulation study for evaluating the performance of our methodology. The algorithm is summarized in Section \ref{sec:model}.
We consider the shocked GARCH(1,1)-model with $\theta^{*}(i) = (\alpha_0^{*}(i), \alpha_1^{*}(i),\beta_1^{*}(i))'$,
\[
    \theta^{*}(i) = \begin{cases}
    \theta^{*}, & i \le \lfloor n\tau_1\rfloor,\\
     \tilde{\theta}^{*}, & \lfloor n\tau_1\rfloor+1 \le i \le \lfloor n\tau_2 \rfloor,\\
    \theta^{*}, & i > \lfloor n\tau_2\rfloor,
\end{cases}
\]
with $H_0: H'\tilde{\theta}^{*} = H \theta^*$, and $H_1:  H'\tilde{\theta}^{*} > H \theta^*$, where 
\begin{enumerate}
    \item[i)] $H = (0,1,0)'$ and $\alpha_0^{*} = 0.3$, $\alpha_1^{*} = 1.0$, $\beta_1^{*}=0.25$ or
    \item[ii)] $H = (0,1,1)'$ and $\alpha_0^{*} = 0.3$, $\alpha_1^{*} = 0.4$, $\beta_1^{*}=0.6$,
\end{enumerate}
and
\[
    X_i^2 = \zeta_i^2 \sigma_i^2, \quad\quad \sigma_i^2 = \alpha_0^{*}(i) + \alpha_1^{*}(i)X_{i-1}^2 + \beta_{1}^{*}(i)\sigma_{i-1}^2,
\]

We now check the behavior of the test under the null hypothesis $H_0$ of no change. We use the test proposed in Section 2 with $\chi = \frac{1}{2}$, $\kappa = \kappa' = 0.1$ and a grid approximation of $L = 30$.

For $N = 1000$ replications and $n \in \{500,1000,2000\}$, $\delta \in \{0.90,0.95\}$, we obtain the quantiles (cf. Step 4 in the algorithm in Section \ref{sec:model}) $\hat q_{W,0.90} \approx 3.031$, $\hat q_{W,0.95} \approx 3.285$ and the results given in Table \ref{tab:data_hypothesis}. We find that the performance of the test is quite unaffected by the choice of $\chi$ and therefore do not present analysis for different values of $\chi$ here. We can see from the table that as the sample size increases, the coverage probabilities would approach to the nominal level for both $\delta = 0.90,0.95$. This illustrates a good performance of our test statistics.

\begin{table}[!ht]
\caption{Averaged acceptance rate under null hypothesis $H_0$. ($L = 30$).}
\centering
\label{tab:data_hypothesis}
\begin{tabular}{ l | c | c ||c|c}
\hline \hline
  &   \multicolumn{2}{c||}{i)} &  \multicolumn{2}{c}{ii)}  \\
 \hline
 $n$ / $\delta$ & $0.90$ & $0.95$ & $0.90$ & $0.95$  \\
\hline
500 &0.866 &0.907  &0.864 &0.903 \\
\hline
1000 & 0.877&0.906 & 0.884&0.910 \\
\hline
2000 & 0.896 & 0.918 & 0.859 & 0.913 \\
\hline\hline
\end{tabular}
\end{table}
To evaluate the test performance under the alternative, we consider $\delta = 0.95,0.90$ and the cases 
\[
    H'\tilde{\theta}^{*} - H \theta^* \in \{0.05,0.1,0.2\}
\]
with a shock period of $\tau^{*}_2 - \tau^{*}_1 \in \{0.1,0.2\}$, where we have chosen $\tau_1^{*} = 0.5$. The choice of $\tau_1^{*}$ does not have a big influence on the performance of the test, therefore we do not present simulation results for different $\tau_1^{*}$ here. The test results in different scenarios can be found in Table \ref{tab:data_alternative1} and \ref{tab:data_alternative2}. It can be seen that our test shows good power under the alternative hypothesis, which is robust against different choices of break sizes and time length of breaks.
We also find that as the sample size increases the power increases drastically.

\begin{table}[!ht]
\caption{Rejection rate of the test (test power) under the alternative $H_1$ with different $\Delta^{*}$ and $\tau_2^{*}-\tau_1^{*}$. ($L = 30$), $\delta = 95\%$}
\centering
\label{tab:data_alternative1}
\begin{tabular}{c| l || c | c | c || c | c | c }
\hline\hline
&& \multicolumn{3}{c||}{$\tau_2^{*} - \tau_1^{*} = 0.1$} &  \multicolumn{3}{c}{$\tau_2^{*}-\tau_1^{*}=0.2$}\\
\hline
& $n/\Delta^{*}$ & $\Delta^{*} = 0.05$ & $\Delta^{*} = 0.1$ & $\Delta^{*} = 0.2$ & $\Delta^{*} = 0.05$ & $\Delta^{*} = 0.1$ & $\Delta^{*} = 0.2$ \\
\hline
\multirow{3}{*}{i)}&500 &0.300 &0.305 &0.828&0.896&0.923 &0.942\\
\cline{2-8}
&1000 &0.571 &0.591 & 0.884&0.897 &0.908 & 0.920 \\
\cline{2-8}
&2000 & 0.801&0.813 & 0.902&1.000 &1.000 &1.000  \\
\hline\hline
\multirow{3}{*}{ii)}&500 &0.344&0.488 &0.831&0.901&0.940 &0.950\\
\cline{2-8}
&1000 &0.796 &0.801 & 0.865&0.965 &0.984 & 0.997 \\
\cline{2-8}
&2000 & 0.808&0.824 & 0.908&1.000 &1.000 &1.000  \\
\hline\hline
\end{tabular}
\end{table}

\begin{table}[!ht]
\caption{Rejection rate of the test (test power) under the alternative $H_1$ with different $\Delta^{*}$ and $\tau_2^{*}-\tau_1^{*}$. ($L = 30$), $\delta = 90\%$}
\centering
\label{tab:data_alternative2}
\begin{tabular}{c| l || c | c | c || c | c | c }
\hline\hline
&& \multicolumn{3}{c||}{$\tau_2^{*} - \tau_1^{*} = 0.1$} &  \multicolumn{3}{c}{$\tau_2^{*}-\tau_1^{*}=0.2$}\\
\hline
& $n/\Delta^{*}$ & $\Delta^{*} = 0.05$ & $\Delta^{*} = 0.1$ & $\Delta^{*} = 0.2$ & $\Delta^{*} = 0.05$ & $\Delta^{*} = 0.1$ & $\Delta^{*} = 0.2$ \\
\hline
\multirow{3}{*}{i)}&500 &0.376 &0.498 &0.855&0.917&0.936 &0.945\\
\cline{2-8}
&1000 &0.689 &0.711 & 0.900&0.903 &0.998 & 0.998 \\
\cline{2-8}
&2000 & 0.868&0.884& 0.922&1.000 &1.000 &1.000  \\
\hline\hline
\multirow{3}{*}{ii)}&500 &0.398 &0.711 &0.866&0.931&0.949 &0.960\\
\cline{2-8}
&1000 &0.803 &0.832 & 0.899&0.995 &0.999 & 0.999 \\
\cline{2-8}
&2000 & 0.945&0.968 & 0.996&1.000 &1.000 &1.000 \\
\hline\hline
\end{tabular}
\end{table}

\section{Application}\label{sec:app}

We now apply our test to real data. We collect daily historical Apple stock prices (open, high, low, close and adjusted prices) from Jan 4th, 2000 to Nov 12th, 2018 extracted from the Yahoo Finance website, \url{http://finance.yahoo.com/quote/AAPL/history?p=AAPL}. The data has logged the prices of the Apple stock everyday and comprises of the open, close, low, high and the adjusted close prices of the stock for the span of 18 years. The goal of our analysis is to discover the existence of a period of unstable behavior for the underlying volatility process, and to see whether our test can help with forecasting. Figure \ref{figure:unplot} shows the plotted adjusted price, log returns and absolute log returns of the Apple stock price. From the plot we observe that the returns fluctuate around the zero. During the years 2000, 2008-2009, 2014 and 2015-16, there are  volatility peaks.
We divide the data into a sequence of  consecutive windows of $1000$ observations. The log returns are stationary in all the windows (suggested by the ADF test) and serial correlation is taken out by fitting an ARMA process in advance and the following analysis is done on the residuals.

\begin{figure}[h]
\caption{Plot of stock price (upper panel) and log returns of stock price of Apple (middle panel), absolute returns (lower panel).}
\centering
\label{figure:unplot}
\includegraphics[width=1\textwidth]{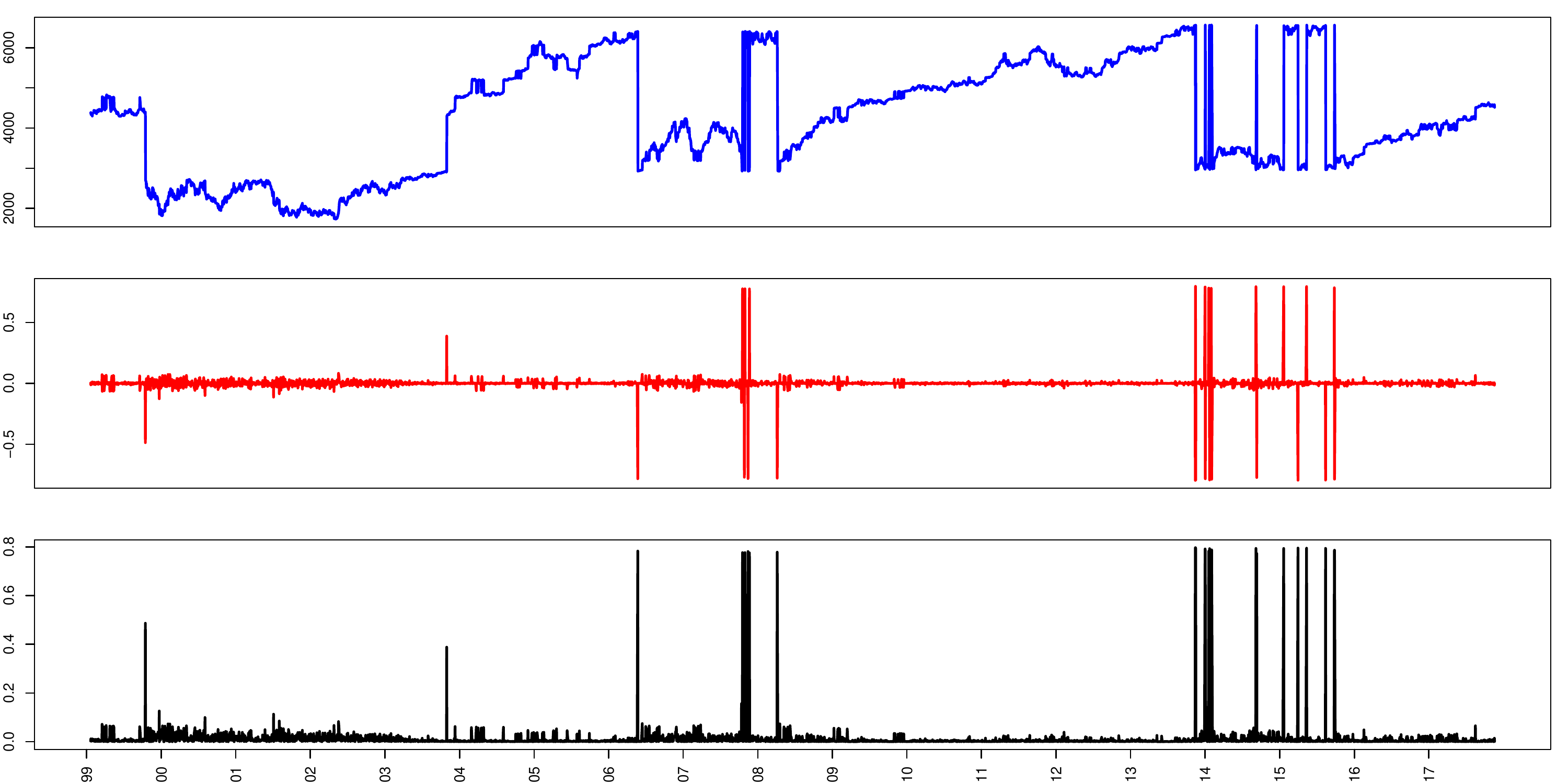}
\end{figure}
\begin{figure}[h]\label{figure:qq}
\caption{QQ plot and the histogram for the daily return of the stock prices.}
\centering
\label{figure:qq}
\includegraphics[width=1\textwidth]{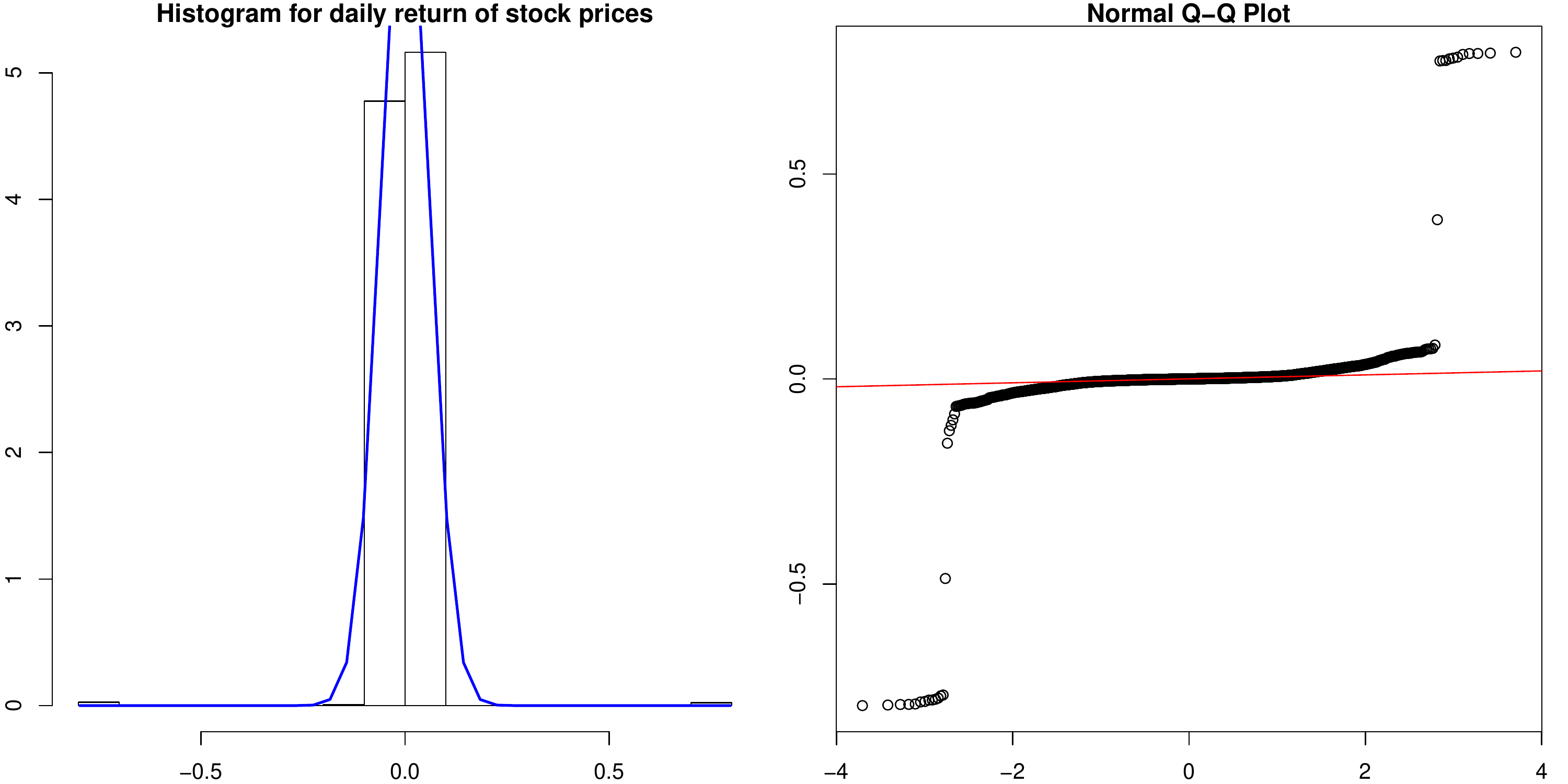}
\end{figure}

\begin{table}[]
\caption{Detected significant break periods for the apple returns, the corresponding persistency parameter($\hat{\alpha}_1+\hat{\beta}_1$) and the test statistics.  (***) means significant at both $0.95,0.90$, and (**) means significant at only $0.90$.} \label{break}
\begin{tabular}{l|l|l|l|l|c}
\hline\hline
  & $\hat{\tau}_1$   & $\hat{\tau}_2$    & in   & out& test statistics\\ \hline
1 & 2000/01/04 & 2000/06/27 & 1.01    & 0.98& 3.23(**) \\ \hline
2 & 2003/12/30 & 2004/10/15 & 1.27 & 0.98&3.76(***) \\ \hline
3 & 2008/06/12 & 2010/01/13 & 1.31 & 0.98&100(***) \\ \hline
4 & 2016/08/17 & 2018/09/11 & 1.33 & 0.88&224(***) \\ \hline\hline
\end{tabular}
\end{table}

From the histogram and Q-Q plot of the time series in Figure \ref{figure:qq} we observe a strong evidence of leptokurtic behavior. In Table \ref{break} we present the detected periods of the explosive behavior. The GSRWW test identifies major financial crises such as  the technological bubble in 2000 and the US subprime mortgage crisis between Dec. 2007 to June 2009. Furthermore, the test detects some short-lived instability such as the 2003 stock prices downturn.\\ 


As a second analysis, we gather the Bitcoin price series ranging from July 23, 2015 to August 21, 2018 at a daily frequency and is presented throughout diagnostic tests. The data source is \url{https://coinmarketcap.com/currencies/bitcoin/historical-data/}. 
We show the returns and the absolute returns for the  Bitcoin in Figure \ref{figure:bitplot}. We can see that there are several high volatility periods. The volatility level is higher before 2013 followed by a stable period. Recently, the market volatility increased. The QQ plots and histograms are given in Figure \ref{figure:qqbit}, indicating the heavy-tailedness of the underlying distribution.
We present the test results with a window of $1000$ observations in Table \ref{bit:break}. Again the log returns are stationary in all the windows (by results of an ADF test) and serial correlation is taken out by fitting an ARMA process in advance. We apply our tests to the residuals obtained.
The GSRWW test indicates the presence of multiple market ‘euphoria’ episodes in the series. The  GSSWW identifies the most significant high volatility period including the period covering the June 2016 crash, the crashes during summer 2017, the fear of market regulation in October 2017, and the massive crash that commenced in December 2017. Bitcoin are not controlled by any government, but speculators can use the test results as indicators of the market sentiment.

\begin{figure}[h]
\caption{Plot of Bitcoin price (upper panel) and log returns of Bitcoin  (middle panel), absolute returns (lower panel).}
\centering
\label{figure:bitplot}
\includegraphics[width=1\textwidth]{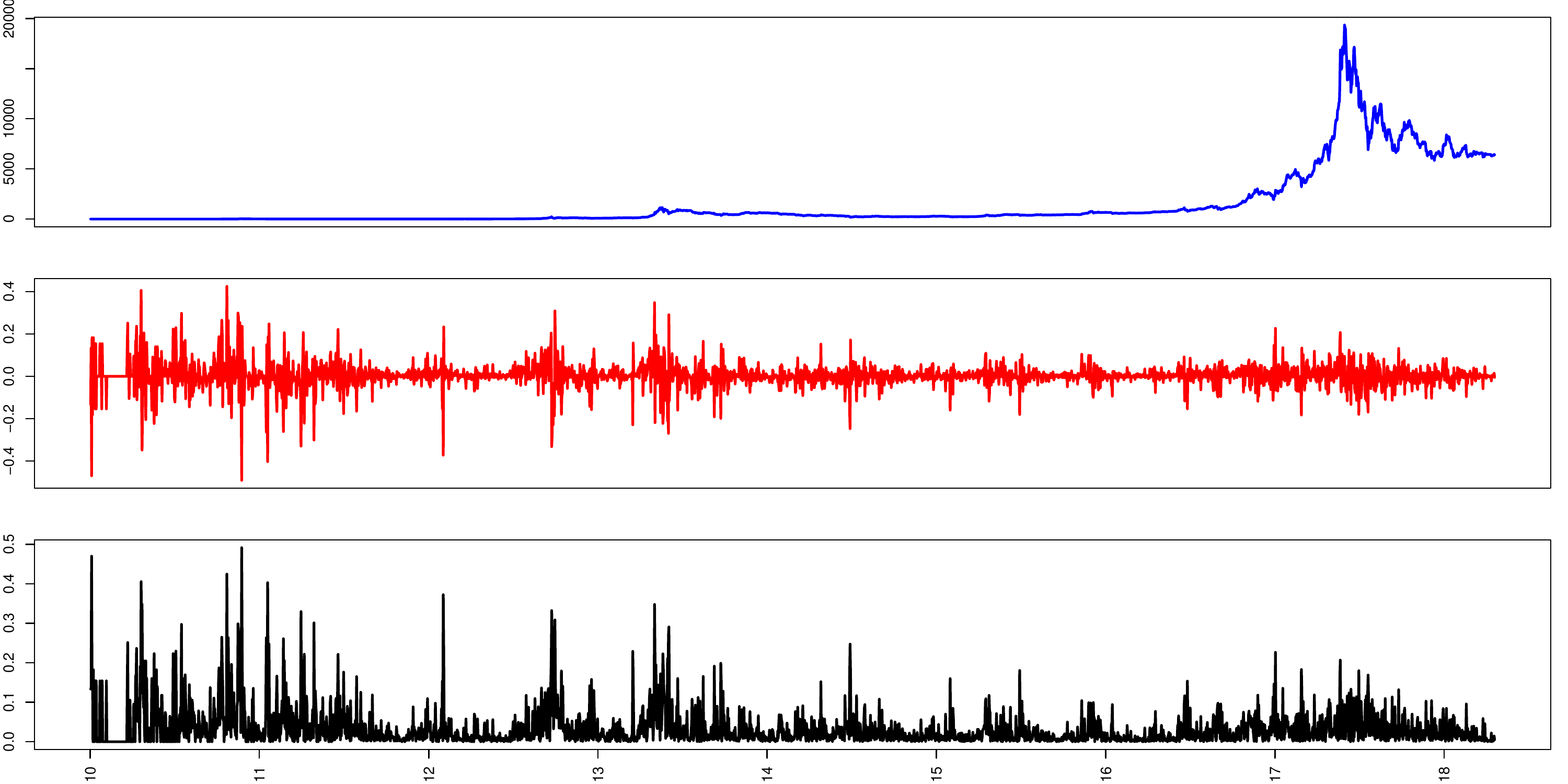}
\end{figure}

\begin{figure}[h]
\caption{QQ plot and the histogram for the Bitcoin returns}
\centering
\label{figure:qqbit}
\includegraphics[width=1\textwidth]{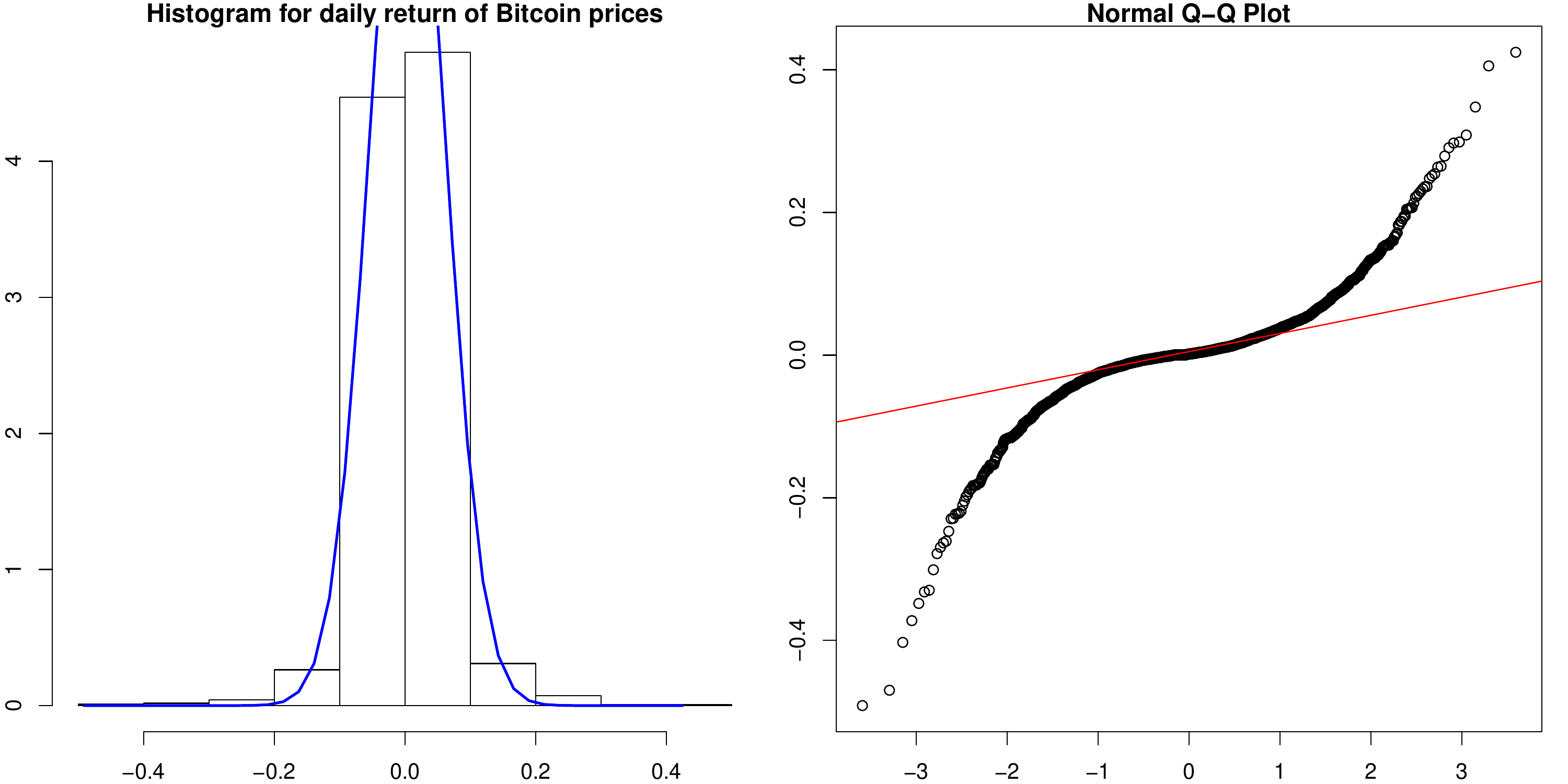}
\end{figure}

\begin{table}[]
\caption{Detected significant break periods for the Bitcoin log returns, the corresponding persistency parameter ($\hat{\alpha}_1+\hat{\beta}_1$) and the test statistics.  (***) means significant at both $0.95,0.90$.} \label{bit:break}
\begin{tabular}{l|l|l|l|l|c}
\hline\hline
  & $\hat{\tau}_1$   & $\hat{\tau}_2$    & in   & out& test statistics\\ \hline
1 & 2010/12/30 & 2011/10/15 & 1.14 & 1.00&11.68(***) \\ \hline
2 & 2013/04/14 & 2013/10/31 & 1.18 & 0.99&8.13(***) \\ \hline
3 & 2016/01/09 & 2017/12/09 & 1.05 & 0.99&19.04(***) \\ \hline\hline
\end{tabular}
\end{table}



\section{Conclusion}
In this paper, we propose a supreme test for explosive GARCH process. Theoretical results on uniform parameter consistency and asymptotic distribution of the test statistics are provided.  Our test is easy to implement and can help to effectively identify explosive periods with good sizes and power via a simulation study. Moreover, we track the volatile period of the Apple stock and Bitcoin returns (log) in our application.  In further work one may extend the algorithm to online procedures allowing for real time detection of breaks.

\section{Technical proofs and lemmata}

\subsection{Existence of GARCH models}

\begin{proof}[Proof of Proposition \ref{prop:existencegarch}:]
    Following the proof of Lemma 2.3 in \cite{berkes2003}, there exists $m \in \IN$ such that $\IE \log |A_{m}(\theta)A_{m-1}(\theta)\dots A_{1}(\theta)|_2 < 0$. The function $[0,\delta) \to [0,\infty), s(t,\theta) = \IE |A_{m}(\theta)A_{m-1}(\theta)\dots A_{1}(\theta)|_2^t$ fulfills
    $s'(0,\theta) = \IE \log |A_{m}(\theta)A_{m-1}(\theta)\dots A_{1}(\theta)|_2 < 0$, thus $t \mapsto s(t,\theta)$ decreases in a neighborhood of $0$. Since $s(0) = 1$, this implies that there exists $0 < q < \delta$ such that
    \begin{equation}
        \IE |A_{m}(\theta)A_{m-1}(\theta)\dots A_{1}(\theta)|_2^{q} = s(q,\theta) < 1.\label{eq:matrixsmaller1}
    \end{equation}
    Define
\begin{eqnarray*}
    P_i(\theta) &:=& (X_i^2,\ldots, X_{i-r+1}^2, \sigma_{i}^2,\ldots, \sigma_{i-s+1}^2)',\\
    a_i(\theta) &:=& (\alpha_0 \zeta_i^2,0,\ldots,0,\alpha_0,0,\ldots,0)'.
\end{eqnarray*}
Following Section 3.1 in \cite{existencegarch}, the model \reff{eq:garchrecursion} admits the representation
\begin{equation}
    P_i(\theta) = A_i(\theta) P_{i-1}(\theta) + a_i(\theta).\label{eq:garchrecursion2}
\end{equation}
Therefore, $P_i(\theta) = G_{\zeta_i,\theta}(P_{i-1})$ with $G_{\zeta_i,\theta}(y) = A_i(\theta)\cdot y + a_i(\theta)$. Let $W_n(y,\theta) := G_{\zeta_n,\theta} \circ G_{\zeta_{n-1},\theta} \circ ... \circ G_{\zeta_1,\theta}(y)$. Then we have
\[
    W_n(y,\theta) - W_n(y',\theta) = A_n(\theta) A_{n-1}(\theta) \cdot ... \cdot A_1(\theta) \cdot (y-y').
\]
Using the submultiplicativity of $|\cdot|_2$, we therefore have with \reff{eq:matrixsmaller1} and some suitable constant $C > 0$:
\begin{eqnarray*}
    \| |W_n(y,\theta) - W_n(y',\theta)|_2 \|_q &\le& \||A_n(\theta) A_{n-1}(\theta) \cdot ... \cdot A_1(\theta)|_2\|_q |y-y'|_2\\
    &\le& C (s(q,\theta)^{q/m})^n.
\end{eqnarray*}
 By Theorem 2 in \cite{wushao2004}, we obtain existence and a.s. uniqueness of $X_i^2 = H(t,\sF_i)$, $\|X_0^2\|_q < \infty$ and $\delta_{q}^{X^2}(k) = O(c^k)$ with some $0 < c < 1$, i.e. (i) and (ii). (iii) is due to Proposition 1 in \cite{garch2004}.
\end{proof}

\subsection{Proofs for asymptotic theory}

Observe that
\[
    L_{n,\tau_1,\tau_2}^c(\theta) = L_{n,\tau_2}^c(\theta) - L_{n,\tau_1}^c(\theta), \quad\quad L_{n,r}^c(\theta) := \frac{1}{n}\sum_{i=1}^{\lfloor nr\rfloor}\ell(X_i^2,Y_i^c,\theta).
\]
Define $L(\theta) := \IE \ell(X_i^2,Y_i,\theta)$. This is well-defined due to $\IE\max\{-\ell(X_i^2,Y_i,\theta),0\} < \infty$ (cf. \cite{garch2004}, Proof of Theorem 2.1).

To prove Theorem \ref{theorem:consistency}, we introduce some notation. For some sequence of real-valued random variables $W_n$, we write $W_n \pto \infty$ if for each $M\in\IN$, $\IP(W_n < M) \to 0$ ($n\to\infty$). 

\begin{lemma}\label{lemma:stochastic_convergence_likelihood}
    Let Assumption \ref{ass1} hold. Let $\kappa > 0$, and $R := \{(\tau_1,\tau_2) \in [0,1]^2: \tau_1 < \tau_2, |\tau_1 - \tau_2|\ge \kappa\}$. For fixed $k\in \IN$, let $V_k(\theta) := \{\theta'\in \Theta:|\theta'-\theta|_1 < 1/k\}$. Define $W_i^{(k)}(\theta) := \inf_{\theta' \in V_k(\theta)}\ell(X_i^2,Y_i,\theta)$. Then:
    \begin{enumerate}
        \item[(i)] $\IE W_1^{(k)}(\theta) \in \IR \cup\{\infty\}$ and 
        \[
            \liminf_{n\to\infty}\inf_{(\tau_1,\tau_2)\in R}\frac{1}{n(\tau_2-\tau_1)}\sum_{i=\lfloor n\tau_1\rfloor+1}^{\lfloor n\tau_2\rfloor}W_{i}^{(k)}(\theta) \ge \IE W_1^{(k)}(\theta)\quad a.s.
        \]
        \item[(ii)] $L(\theta^{*})$ is finite and
        \[
            \sup_{(\tau_1,\tau_2)\in R}\frac{1}{n(\tau_2-\tau_1)}\sum_{i=\lfloor n\tau_1\rfloor+1}^{\lfloor n\tau_2\rfloor}\ell(X_i^2,Y_i,\theta^{*}) \to L(\theta^{*})\quad a.s.
        \]
    \end{enumerate}
\end{lemma}
\begin{proof}
    (i) Fix some $M \in\IN$. It holds that $\IE W_1^{(k)} \in \IR \cup\{\infty\}$ and $\IE[W_1^{(k)}\wedge M] \in \IR$ since $\IE\max\{-\ell(X_1^2,Y_1,\theta),0\} < \infty$. Define $S_m := \sum_{i=1}^{m}W_i^{(k)}(\theta) \wedge M$. By the ergodic theorem, we have
\[
    \lim_{m\to\infty}\frac{1}{m}S_m = \IE S_1 \quad a.s.
\]
It holds that $\lfloor n \tau_2 \rfloor \to \infty$ uniformly in $\tau_2 \in [\kappa,1]$, and thus
\begin{equation}
    \sup_{\tau_2\in[\kappa,1]}\big|\frac{S_{\lfloor n\tau_2\rfloor}}{\lfloor n\tau_2 \rfloor} - \IE S_1\big| \to 0\quad a.s.\label{eq:proof_stoch1}
\end{equation}
Furthermore we have that $\frac{S_{m}}{m}$ is a.s. bounded. We conclude that
\begin{equation}
    \sup_{\tau_1 \in [0,1]}\Big|\tau_1 \cdot \Big( \frac{S_{\lfloor n\tau_1\rfloor}}{\lfloor n\tau_1 \rfloor} - \IE S_1\Big)\Big| \to 0 \quad a.s.\label{eq:proof_stoch2}
\end{equation}
[Proof: Fix some $\omega \in \Omega$ and let $C > 0$ be such that $|\frac{S_{m}(\omega)}{m}| \le C$ for all $m\in\IN$. For $\varepsilon > 0$,
\[
    \sup_{\tau_1 \in [0,1]}\Big|\tau_1 \cdot \Big( \frac{S_{\lfloor n\tau_1\rfloor}}{\lfloor n\tau_1 \rfloor} - \IE S_1\Big)\Big| \le \sup_{\tau_1 \in [\frac{\varepsilon}{C},1]}\Big|\tau_1 \cdot \Big( \frac{S_{\lfloor n\tau_1\rfloor}}{\lfloor n\tau_1 \rfloor} - \IE S_1\Big)\Big| + \sup_{\tau_1 \in [0,\frac{\varepsilon}{C})}\Big|\tau_1 \cdot \Big( \frac{S_{\lfloor n\tau_1\rfloor}}{\lfloor n\tau_1 \rfloor} - \IE S_1\Big)\Big|
\]
The second term is bounded by $\frac{\varepsilon}{C}\cdot C = \varepsilon$, while for the first term we can choose $n$ large enough such that it is $\le \varepsilon$.]

With the decomposition
\[
    \frac{1}{n(\tau_2 - \tau_1)}\sum_{i=\lfloor n\tau_1\rfloor+1}^{\lfloor n\tau_2\rfloor}W_{i}^{(k)}(\theta) \wedge M = \frac{1}{\tau_2 - \tau_1}\Big[\frac{\lfloor n\tau_2\rfloor}{n}\cdot \frac{S_{\lfloor n\tau_2\rfloor}}{\lfloor n\tau_2\rfloor} - \frac{\lfloor n\tau_1\rfloor}{n\tau_1}\cdot \tau_1 \cdot \frac{S_{\lfloor n\tau_1\rfloor}}{\lfloor n \tau_1 \rfloor}\Big]
\]
and \reff{eq:proof_stoch1}, \reff{eq:proof_stoch2}, $\sup_{\tau_2 \in [\kappa,1]}|\frac{\lfloor n \tau_2\rfloor}{n}-\tau_2| \le n^{-1}$ we obtain
\[
    \inf_{(\tau_1,\tau_2) \in R}\frac{1}{n(\tau_2 - \tau_1)}\sum_{i=\lfloor n\tau_1\rfloor+1}^{\lfloor n\tau_2\rfloor}W_{i}^{(k)}(\theta) \wedge M \to \IE[W_1^{(k)}(\theta) \wedge M]\quad a.s.
\]
Since $W_i^{(k)}(\theta) \ge W_i^{(k)} \wedge M$ and applying $M\to\infty$ on the r.h.s., we obtain
\[
    \liminf_{n\to\infty}\inf_{(\tau_1,\tau_2) \in R}\frac{1}{n(\tau_2 - \tau_1)}\sum_{i=\lfloor n\tau_1\rfloor+1}^{\lfloor n\tau_2\rfloor}W_{i}^{(k)}(\theta) \ge \IE W_1^{(k)}(\theta)\quad a.s.
\]
(ii) The argument follows the same lines as (i). We obtain convergence since no truncation with $M$ is needed.
\end{proof}

\begin{proof}[Proof of Theorem \ref{theorem:consistency}] We make use of some results obtained in the proof of Theorem 2.1. in \cite{garch2004}. It was shown therein that $L(\theta) := \IE \ell(X_i^2,Y_i,\theta)$ fulfills
\begin{equation}
    \IE|\ell(X_i^2,Y_i,\theta^{*})| < \infty, \quad\quad \forall \theta \not= \theta^{*}: \quad L(\theta) > L(\theta^{*}).\label{eq:consistencyproofminus1}
\end{equation}
Let $k\in\IN$. Use the notation from Lemma \ref{lemma:stochastic_convergence_likelihood}. Let $\theta\not= \theta^{*}$. By Beppo-Levi's theorem, we have
\[
    \IE W_1^{(k)}(\theta) \uparrow L(\theta) > L(\theta^{*}).
\]
Thus, for each $\theta \not=\theta^{*}$ we can find $k(\theta)\in\IN$ such that $\IE W_1^{(k(\theta))}(\theta) > L(\theta^{*})$.\\

Let $\varepsilon > 0$ and $\Theta_{\varepsilon} := \{\theta \in \Theta: |\theta - \theta^{*}| \ge \varepsilon\}$. Then $\Theta_{\varepsilon}$ is compact, and there exist finitely many $\theta_{1},...,\theta_{l}$ with $\Theta_{\varepsilon} \subset \bigcup_{i=1}^{l}V_{k(\theta_i)}(\theta_i)$. Let
\[
    \delta := \min\{\inf_{i=1,...,l}\IE W_1^{(k(\theta_i))}(\theta_i) - L(\theta^{*}),1\} > 0.
\]

Suppose that $\sup_{(\tau_1,\tau_2)\in R}|\hat \theta_{n,\tau_1,\tau_2} - \theta^{*}| \ge \varepsilon$. By the minimal property of $\hat \theta_{n,\tau_1,\tau_2}$ and dividing by $\tau_2-\tau_1$, we conclude that
\begin{eqnarray}
    0 &\ge& \inf_{(\tau_1,\tau_2)\in R}\frac{1}{\tau_2-\tau_1}\big\{L_{n,\tau_1,\tau_2}^c(\hat \theta_{n,\tau_1,\tau_2}) - L_{n,\tau_1,\tau_2}^c(\theta^{*})\big\}\nonumber\\
    &=& \inf_{i=1,...,l}\inf_{(\tau_1,\tau_2)\in R} \inf_{\theta' \in V_{k(\theta_i)}(\theta_i)}\frac{1}{\tau_2-\tau_1}\{L_{n,\tau_1,\tau_2}^c(\theta') - L_{n,\tau_1,\tau_2}^c(\theta^{*})\}\nonumber\\
    &\ge& \inf_{i=1,...,l}\inf_{(\tau_1,\tau_2)\in R} \inf_{\theta' \in V_{k(\theta_i)}(\theta_i)}\frac{L_{n,\tau_1,\tau_2}^c(\theta')}{\tau_2-\tau_1} - \sup_{(\tau_1,\tau_2) \in R} \frac{L_{n,\tau_1,\tau_2}^c(\theta^{*})}{\tau_2-\tau_1}\label{eq:proofcons1}
\end{eqnarray}
We furthermore have:
\begin{eqnarray}
    &&\inf_{(\tau_1,\tau_2)\in R} \inf_{\theta' \in V_{k(\theta_i)}(\theta_i)}\frac{L_{n,\tau_1,\tau_2}^c(\theta')}{\tau_2-\tau_1}\nonumber\\
    &\ge& \inf_{(\tau_1,\tau_2)\in R}\inf_{\theta' \in V_{k(\theta_i)}(\theta_i)}\frac{L_{n,\tau_1,\tau_2}(\theta')}{\tau_2-\tau_1} - \kappa \cdot \sup_{(\tau_1,\tau_2)\in R}\sup_{\theta' \in \Theta}|L_{n,\tau_1,\tau_2}^c(\theta') - L_{n,\tau_1,\tau_2}(\theta')|\nonumber\\
    &\ge& \inf_{(\tau_1,\tau_2)\in R}\frac{1}{n(\tau_2-\tau_1)}\sum_{i=\lfloor n\tau_1\rfloor}^{\lfloor n\tau_2\rfloor}W_i^{(k(\theta_i))}(\theta_i)\nonumber\\
    &&\quad\quad - \kappa \cdot \sup_{(\tau_1,\tau_2)\in R}\sup_{\theta' \in \Theta}|L_{n,\tau_1,\tau_2}^c(\theta') - L_{n,\tau_1,\tau_2}(\theta')|.\label{eq:proofcons2}
\end{eqnarray}
By Lemma \ref{lemma:truncation}(i),
\begin{equation}
    \IP\Big(\sup_{(\tau_1,\tau_2)\in R}\sup_{\theta' \in \Theta}|L_{n,\tau_1,\tau_2}^c(\theta') - L_{n,\tau_1,\tau_2}(\theta')| > \frac{\delta}{8}\Big) = o(1).\label{eq:proofcons3}
\end{equation}
By Lemma \ref{lemma:stochastic_convergence_likelihood}(i),
\begin{equation}
    \IP\Big(\inf_{i=1,...,l}\inf_{(\tau_1,\tau_2)\in R}\frac{1}{n(\tau_2-\tau_1)}\sum_{i=\lfloor n\tau_1\rfloor}^{\lfloor n\tau_2\rfloor}W_i^{(k(\theta_i))}(\theta_i) \le L(\theta^{*})+\frac{\delta}{2}\Big) = o(1).\label{eq:proofcons4}
\end{equation}
By Lemma \ref{lemma:stochastic_convergence_likelihood}(ii),
\begin{equation}
    \IP\Big(\sup_{(\tau_1,\tau_2)\in R}\frac{L_{n,\tau_1,\tau_2}^c(\theta^{*})}{\tau_2-\tau_1} \ge L(\theta^{*})+\frac{\delta}{8}\Big) = o(1).\label{eq:proofcons5}
\end{equation}
Inserting \reff{eq:proofcons3}, \reff{eq:proofcons4} into \reff{eq:proofcons2} and afterwards using \reff{eq:proofcons5}, we have
\begin{eqnarray*}
    && \IP\Big(\sup_{(\tau_1,\tau_2)\in R}|\hat \theta_{n,\tau_1,\tau_2} - \theta^{*}| \ge \varepsilon\Big)\\
    &\le& \IP\Big( 0 \ge \inf_{i=1,...,l}\inf_{(\tau_1,\tau_2)\in R} \inf_{\theta' \in V_{k(\theta_i)}(\theta_i)}\frac{L_{n,\tau_1,\tau_2}^c(\theta')}{\tau_2-\tau_1} - \sup_{(\tau_1,\tau_2) \in R} \frac{L_{n,\tau_1,\tau_2}^c(\theta^{*})}{\tau_2-\tau_1}\Big)\\
    &\le& \IP\Big(0 \ge (L(\theta^{*}) + \frac{\delta}{2}) - \frac{\delta}{8} - (L(\theta^{*}) + \frac{\delta}{8}) = \frac{\delta}{4}\Big) + o(1) = o(1),
\end{eqnarray*}
showing the assertion.
\end{proof}

\begin{proof}[Proof of Theorem \ref{theorem:bahadur}]
    Let $\kappa > 0$ und define $R := \{(\tau_1,\tau_2) \in [0,1]^2: \tau_1 < \tau_2, |\tau_1 - \tau_2| \ge \kappa\}$. By Theorem \ref{theorem:consistency}, we have
    \begin{equation}
        \sup_{(\tau_1,\tau_2) \in R}|\hat \theta_{n,\tau_1,\tau_2}- \theta^{*}|_1 \to 0\quad a.s.\label{eq:bahadurproof_1}
    \end{equation}
    Therefore, $\hat \theta_{n,\tau_1,\tau_2} \in int(\Theta)$ uniformly in $(\tau_1,\tau_2) \in R$ for $n$ large enough. Thus there exists $\bar \Theta \subset int(\Theta)$ with $\hat \theta_{n,\tau_1,\tau_2} \in \bar \Theta$ (for $n$ large enough, $(\tau_1,\tau_2) \in R$.
    
    By a Taylor expansion, we have
    \begin{eqnarray}
        \hat \theta_{n,\tau_1,\tau_2} - \theta^{*} &=& -[\nabla_{\theta}^2 L_{n,\tau_1,\tau_2}^c(\bar\theta)]^{-1}\nabla_{\theta}L_{n,\tau_1,\tau_2}^c(\theta^{*})\nonumber\\
        &=& -[(\tau_2-\tau_1)V(\theta^{*}) + T_{n,\tau_1,\tau_2}(\bar \theta_{n,\tau_1,\tau_2})]^{-1}\nabla_{\theta}L_{n,\tau_1,\tau_2}^c(\theta^{*}),\label{eq:bahadurproof_2}
    \end{eqnarray}
    where $T_{n,\tau_1,\tau_2}(\bar \theta_n) = \nabla_{\theta}^2 L_{n,\tau_1,\tau_2}^c(\bar\theta_n) - (\tau_2 - \tau_1)V(\theta^{*})$ and $\bar \theta_{n,\tau_1,\tau_2} \in \Theta$ with $|\bar \theta_{n,\tau_1,\tau_2} - \theta^{*}|_1 \le |\hat \theta_{n,\tau_1,\tau_2} - \theta^{*}|_1$. By Lemma \ref{lemma:truncation} and Lemma \ref{lemma:convergence} and $|\frac{\lfloor n\tau_2\rfloor - \lfloor n \tau_1\rfloor}{n}-(\tau_1-\tau_2)| \le 2n^{-1}$, we have
    \begin{equation}
        \sup_{(\tau_1,\tau_2) \in R}|T_{n,\tau_1,\tau_2}(\bar \theta_{n,\tau_1,\tau_2})|_{1} \le \sup_{(\tau_1,\tau_2)\in R}|V(\bar \theta_{n,\tau_1,\tau_2}) - V(\theta^{*})|_1 + O_p( \log(n)^{3/2}n^{-1/2}).\label{eq:bahadurproof_3}
    \end{equation}
    By Lemma \ref{lem:analytical_properties}(ii) applied to $p = q$ and $\bar \Theta$, we obtain $\iota > 0$, $C > 0$ and $\rho \in (0,1)$ such that for $|\theta - \theta^{*}|_1 < \iota$,
    \begin{eqnarray}
        |V(\theta) - V(\theta^{*})|_1 &\le& C(1 + \||Y_i|_{(\rho^j)_j,q}^{q}\|_1)(1 + \IE \zeta_0^2)\cdot |\theta - \theta^{*}|_1\nonumber\\
        &\le& C(1 + \frac{D^q}{1-\rho})(1 + \IE \zeta_0^2) \cdot |\theta - \theta^{*}|_1 =: \tilde C \cdot |\theta - \theta^{*}|_1.\label{eq:bahadurproof_4}
    \end{eqnarray}
    Since $\IE \nabla_{\theta}\ell(Y_i,X_i^2,\theta^{*}) = 0$ (cf. Proposition \ref{prop:matrices}(iii)), Lemma \ref{lemma:truncation} and Lemma \ref{lemma:convergence}, we have
    \begin{equation}
        \sup_{(\tau_1,\tau_2)\in R}|\nabla_{\theta}L_{n,\tau_1,\tau_2}^c(\theta^{*})|_1 = O_p( \log(n)^{3/2}n^{-1/2}).\label{eq:bahadurproof_5}
    \end{equation}
    Inserting \reff{eq:bahadurproof_1} into \reff{eq:bahadurproof_4}, we obtain $\sup_{(\tau_1,\tau_2) \in R}|T_{n,\tau_1,\tau_2}(\bar \theta_{n,\tau_1,\tau_2})|_2 = o_p(1)$. From \reff{eq:bahadurproof_2} and \reff{eq:bahadurproof_5} we obtain 
    \begin{equation}
        \sup_{(\tau_1,\tau_2) \in R}|\hat \theta_{n,\tau_1,\tau_2} - \theta^{*}|_1 = O_p(\log(n)^{3/2}n^{-1/2}).\label{eq:bahadurproof_6}
    \end{equation}
    By \reff{eq:bahadurproof_2}, we have
    \begin{eqnarray}
        &&\big|\hat \theta_{n,\tau_1,\tau_2} - \theta^{*} + ((\tau_2-\tau_1)V(\theta^{*}))^{-1}\cdot \nabla_{\theta}L_{n,\tau_1,\tau_2}^{c}(\theta^{*})\big|_2\nonumber\\
        &\le& |((\tau_2 - \tau_1)V(\theta^{*}) + T_{n,\tau_1,\tau_2}(\bar \theta_{n,\tau_1,\tau_2}))^{-1}|_1 \cdot |T_{n,\tau_1,\tau_2}(\bar \theta_{n,\tau_1,\tau_2})|_1\nonumber\\
        &&\quad\quad\times |((\tau_2 - \tau_1)V(\theta^{*}))^{-1}\nabla_{\theta}L_{n,\tau_1,\tau_2}^{c}(\theta^{*})|_1.\label{eq:bahadurproof_7}
    \end{eqnarray}
    Using \reff{eq:bahadurproof_3}, \reff{eq:bahadurproof_4} and \reff{eq:bahadurproof_6}, we have $\sup_{(\tau_1,\tau_2)\in R}|T_{n,\tau_1,\tau_2}(\bar \theta_{n,\tau_1,\tau_2})|_1 = O_p(\log(n)^{3/2}n^{-1/2})$. Inserting this and \reff{eq:bahadurproof_5} into \reff{eq:bahadurproof_7}, we obtain
    \[
        \sup_{(\tau_1,\tau_2)\in R}\big|\hat \theta_{n,\tau_1,\tau_2} - \theta^{*} + ((\tau_2 - \tau_1)V(\theta^{*}))^{-1}\cdot \nabla_{\theta}L_{n,\tau_1,\tau_2}^{c}(\theta^{*})\big|_1 = O_p(\log(n)^3 n^{-1}).
    \]
    Since $\sup_{(\tau_1,\tau_2)\in R}|\nabla_{\theta}L_{n,\tau_1,\tau_2}^c(\theta^{*}) -\nabla_{\theta}L_{n,\tau_1,\tau_2}(\theta^{*})|_1 = O_p(n^{-1})$ by Lemma \ref{lemma:truncation}, the proof is finished.
\end{proof}

\begin{proof}[Proof of Theorem \ref{theorem:gaussian}] Let $R := \{(\tau_1,\tau_2) \in [0,1]^2: \tau_1 < \tau_2, |\tau_1 - \tau_2| \ge \kappa\}$. By Lemma \ref{lemma:dependence} (applied with $M = 2+\frac{a'}{4}$, $a = \frac{a'}{4}$), we obtain $C > 0$, $\rho \in (0,1)$ and $\iota > 0$ such that
\[
    \delta_{M}^{\nabla_{\theta}\ell(Z,\theta^{*})}(k) \le C \rho^k,
\]
and thus (component-wise) $\Delta_{M}^{\nabla_{\theta}\ell(Z,\theta^{*})}(k) \le \frac{C}{1-\rho}\rho^k$.
Let $W_i := -V(\theta^{*})^{-1}\nabla_{\theta}\ell(X_i^2,Y_i,\theta^{*})$ and $S(j) := \sum_{i=1}^{j}W_i$. By Proposition \ref{prop:matrices}(iii), $\IE W_i = 0$ and
\begin{eqnarray*}
    \Sigma &:=& \Cov(W_i)= V(\theta^{*})^{-1}I(\theta^{*})V(\theta^{*})^{-1}=\frac{\mu_4-1}{2}V(\theta^{*})^{-1}.
\end{eqnarray*}
By Corollary 1 in \cite{zhouwu2011}, there exists a richer probability space and i.i.d. $V_1,V_2,\ldots \sim N(0, I_{(r+s+1)\times (r+s+1)})$, a process $(\hat S(i))_{i=1,...,n}$ and $S^{0}(i) = \sum_{j=1}^{i} V_j$ such that $(S(i))_{i=1,\ldots,n} \overset{d}{=} (\hat S(i))_{i=1,\ldots,n}$ and
\[
    \max_{i=1,\ldots,n}|\hat S(i) - \Sigma^{1/2} S^{0}(i)| = O_{p}(n^{1/\min\{M,4\}}\log(n)^{3/2}).
\]
With Theorem \ref{theorem:bahadur} we obtain:
\begin{equation}
    \sup_{(\tau_1,\tau_2)\in R}\big|\sqrt{n}(\tau_2 - \tau_1)(\hat \theta_{n,\tau_1,\tau_2} - \theta^{*}) - n^{-1/2}(S(\lfloor n\tau_2\rfloor) - S(\lfloor n\tau_1\rfloor))\big| = O_p(\log(n)^3 n^{-1/2}).\label{eq:proof_gaussian_eq1}
\end{equation}
By the Gaussian approximation result above, on $D(R)^{r+s+1}$
\begin{equation}
    n^{-1/2}(S(\lfloor n\tau_2\rfloor) - S(\lfloor n\tau_1\rfloor)) \overset{d}{=} n^{-1/2}(\hat S(\lfloor n\tau_2\rfloor) - \hat S(\lfloor n\tau_1\rfloor))\label{eq:proof_gaussian_eq2}
\end{equation}
and
\begin{eqnarray}
    &&\sup_{(r_1,r_2)\in R}\big|n^{-1/2}(\hat S(\lfloor n\tau_2\rfloor) - \hat S(\lfloor n\tau_1\rfloor)) - \Sigma^{1/2}\cdot n^{-1/2}(S^{0}(\lfloor n\tau_2\rfloor) - S^{0}(\lfloor n\tau_1\rfloor))\big|\nonumber\\
    &=& O_p(n^{\frac{1}{\min\{M,4\}}-\frac{1}{2}}\log(n)^{3/2}).\label{eq:proof_gaussian_eq3}
\end{eqnarray}


By Donsker's theorem, it holds in $D[0,1]^{r+s+1}$ that
$n^{-1/2}S^{0}(\lfloor nr\rfloor) \dto B(r)$ with some standard $(r+s+1)$-dimensional Brownian motion $B$. Applying the continuous mapping theorem, we obtain in $D(R)^{r+s+1}$:
\begin{equation}
    n^{-1/2}\Sigma^{1/2}\big\{S^{0}(\lfloor n\tau_2\rfloor) - S^0(\lfloor n\tau_1\rfloor)\big\} \dto \Sigma^{1/2}\big\{B(\tau_2) - B(\tau_1)\big\}.\label{eq:proof_gaussian_eq4}
\end{equation}
Combining \reff{eq:proof_gaussian_eq1}, \reff{eq:proof_gaussian_eq2}, \reff{eq:proof_gaussian_eq3} and \reff{eq:proof_gaussian_eq4}, we obtain the result.
\end{proof}

\begin{proof}[Proof of Proposition \ref{prop:convergence_matrices}]
    Using similar arguments as in the proof of Theorem \ref{theorem:consistency} (now with $1 - (\tau_2 - \tau_1) \ge \kappa$ instead of $\tau_2 - \tau_1 \ge \kappa$), we obtain
    \begin{equation}
        \sup_{(\tau_1,\tau_2)\in \bar R_{\kappa}}|\bar \theta_{n,\tau_1,\tau_2} - \theta^{*}| \pto 0.\label{eq:consistency_borderestimate}
    \end{equation}
    (i) By Lemma \ref{lemma:truncation} and Lemma \ref{lemma:convergence} and $|\frac{1 - (\lfloor n\tau_2\rfloor - \lfloor n\tau_1\rfloor)}{1-(\tau_2 - \tau_1)} - 1|\le n^{-1}$, we have
    \[
        \sup_{(\tau_1,\tau_2)\in \bar R_{\kappa}}|\bar V_{n,\tau_1,\tau_2}(\theta) - (1-(\tau_2-\tau_1))V(\theta)|_1 \pto 0.
    \]
    By Lipschitz continuity of $V(\cdot)$ (cf. \reff{eq:bahadurproof_4}) and \reff{eq:consistency_borderestimate}, we obtain the result.\\
    (ii) Define $g(x,y,\theta) := \nabla_{\theta}\ell(x,y,\theta)\cdot \nabla_{\theta}\ell(x,y,\theta)'$ and $\tilde g_{\tilde \theta}(\zeta,y,\theta) := g(R_{\zeta}(y,\tilde \theta),y,\theta)$, where $R_{\zeta}(y,\theta) := \zeta^2 \sigma(y,\theta)^2$. Let $\bar\Theta \subset int(\Theta)$ be some compact set. Using Lemma \ref{lem:analytical_properties}(ii) and \reff{eq:likelihood_upperbound}, it is easy to see that for any $p > 0$, one can find $\iota > 0$, $C > 0$ and $\rho\in(0,1)$ such that (component-wise),
    \begin{eqnarray}
        &&\sup_{\theta, \tilde \theta \in \bar\Theta, |\theta - \tilde \theta|_1 < \iota}|\tilde g_{\tilde \theta}(\zeta,y,\theta) -  \tilde g_{\tilde \theta}(\zeta,y',\theta)|\nonumber\\
        &\le& C (1 +  |y|_{(\rho^j)_j,2p}^{2p} + |y'|_{(\rho^j)_j,2p}^{2p})|y-y'|_{(\rho^{j})_j,p}^p (1+\zeta^2)^2.\label{eq:i_xdiff}
    \end{eqnarray}
    and
    \begin{equation}
        \sup_{\theta, \theta', \tilde\theta \in \bar\Theta, |\theta - \tilde \theta|_1 < \iota, |\theta' - \tilde \theta|_1 < \iota}\frac{|\tilde g_{\tilde \theta}(\zeta,y,\theta) - \tilde g_{\tilde \theta}(\zeta,y,\theta')|}{|\theta - \theta'|_1} \le C(1+|y|_{(\rho^j)_j,p}^p)(1+\zeta^2)^2.\label{eq:i_thetadiff}
    \end{equation}
    In the following we will enlarge $C,\rho$ and reduce $\iota$ if necessary without further notice. Note that $\sup_{\theta \in \Theta}|\nabla_{\theta}(\sigma(0,\theta)^2)| <\infty$ and thus (component-wise) $\sup_{\theta \in \Theta}|\nabla_{\theta}\ell(x,0,\theta)| \le C(1+|x|)$. With Lemma \ref{lem:analytical_properties}(iii) we conclude that (component-wise) $\sup_{\theta \in \Theta}|\nabla_{\theta}\ell(x,y,\theta)| \le C(1 +  |y|_{(\rho^j)_j,1}^{2})(1+|x|)\cdot |y|_{(\rho^j)_j,1}$. Using again Lemma \ref{lem:analytical_properties}(iii), we obtain
    \[
        \sup_{\theta \in \Theta}|g(x,y,\theta) - g(x,y',\theta)| \le C (1 +  |y|_{(\rho^j)_j,1}^{5} + |y'|_{(\rho^j)_j,1}^{5})(1+|x|)^2\cdot |y-y'|_{(\rho^{j})_j,1}.
    \]
    This shows that $g$ has similar properties as $\nabla_{\theta}^2\ell$, but with factors $(1+\zeta^2)^2$ in \reff{eq:i_xdiff}, \reff{eq:i_thetadiff} instead of $(1+\zeta^2)$. Therefore, we obtain the same result as in (i) under the stated moment condition.
\end{proof}

\subsection{Technical lemmata}

\begin{proof}[Proof of Proposition \ref{prop:matrices}] (i) By Proposition \ref{prop:existencegarch},  there exists $q > 0$ with $\|X_0^2\|_q < \infty$. From the bounds \reff{eq:likelihood_upperbound} (applied with $p = q$) we conclude that $V(\theta)$, $I(\theta)$ are finite as long as $|\theta - \theta^{*}|$ is small enough.\\
(ii),(iii) This was already shown in \cite{garch2004}, see the proof step (ii) of Theorem 2.2 (the missing $\frac{1}{2}$ is due to the different formulation of the likelihood).
\end{proof}

\begin{lemma}[Negligibility of truncation]\label{lemma:truncation}
    Let Assumption \ref{ass1} hold. Then for $g = \nabla_{\theta}^l\ell$, $l = 0,1,2$ it holds that
    \begin{enumerate}
        \item[(i)]
    \[
        \sup_{r\in[0,1]}\sup_{\theta \in \Theta}\Big|\sum_{i=1}^{\lfloor n r\rfloor}g(X_i^2,Y_i^c,\theta) - \sum_{i=1}^{\lfloor n r\rfloor}g(X_i^2,Y_i,\theta)\Big| = O_{p}(1).
    \]
        \item[(ii)]
        \[
        \sup_{r\in[0,1]}\sup_{\theta \in \Theta}\frac{1}{n}\Big|\sum_{i=1}^{\lfloor n r\rfloor}g(X_i^2,Y_i^c,\theta) - \sum_{i=1}^{\lfloor n r\rfloor}g(X_i^2,Y_i,\theta)\Big| = 0\quad a.s.
    \]
    \end{enumerate}
\end{lemma}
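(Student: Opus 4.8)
The plan is to reduce both parts to a single estimate: that the full infinite series $\sum_{i=1}^{\infty}\sup_{\theta\in\Theta}|D_i(\theta)|$, with $D_i(\theta) := g(X_i^2,Y_i^c,\theta) - g(X_i^2,Y_i,\theta)$, has finite expectation. This works because for every $r\in[0,1]$ and every $\theta$ the partial sum is dominated by the whole series,
\[
    \sup_{r\in[0,1]}\sup_{\theta\in\Theta}\Big|\sum_{i=1}^{\lfloor nr\rfloor}D_i(\theta)\Big| \le \sum_{i=1}^{\infty}\sup_{\theta\in\Theta}|D_i(\theta)|,
\]
and the right-hand side does not depend on $n$. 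A non-negative random variable with finite expectation is tight and a.s.\ finite; tightness of this fixed bound immediately gives the $O_p(1)$ claim (i), while dividing the a.s.\ finite series by $n$ and letting $n\to\infty$ gives (ii). Thus everything hinges on bounding $\IE\big[\sum_{i}\sup_{\theta}|D_i(\theta)|\big]$.

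The central observation is that $Y_i^c$ and $Y_i$ agree in their first $i-1$ coordinates and differ only from coordinate $i$ onward, where $Y_i^c$ has zeros and $Y_i$ carries the pre-sample values $X_0^2,X_{-1}^2,\dots$. Evaluating the weighted seminorm with geometric weights $(\rho^j)_j$ therefore gives
\[
    |Y_i^c - Y_i|_{(\rho^j)_j,p}^{p} = \sum_{j\ge i}\rho^j|X_{i-j}^2|^{p} = \rho^{i}\sum_{k\ge 0}\rho^{k}|X_{-k}^2|^{p} =: \rho^{i} Z,
\]
where $Z$ is a \emph{fixed} pre-sample random variable. By Proposition \ref{prop:existencegarch}(ii) the squared process satisfies $\|X_0^2\|_q \le D$, so for $p$ small enough $Z$ is integrable and, by stationarity, so are the relevant powers of $X_i^2$.

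Next I would invoke the analytical Lipschitz property of $\ell$ and its $\theta$-derivatives in the $y$-argument (Lemma \ref{lem:analytical_properties}, together with the growth bound \reff{eq:likelihood_upperbound}), uniformly over the compact set $\Theta$. For each $g=\nabla_\theta^l\ell$, $l=0,1,2$, this produces constants $C>0$, $\rho\in(0,1)$, $p>0$ and exponents $m,b$ with
\[
    \sup_{\theta\in\Theta}|D_i(\theta)| \le C\,F_i\,|Y_i^c - Y_i|_{(\rho^j)_j,p}^{p} = C\,F_i\,\rho^{i}\,Z,
\]
where $F_i := (1 + |Y_i|_{(\rho^j)_j,p}^{m})(1+X_i^2)^{b}$ collects the slowly varying factors (using that $|Y_i^c|_{(\rho^j)_j,p}\le|Y_i|_{(\rho^j)_j,p}$). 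By strict stationarity the moments of $F_i$ are bounded uniformly in $i$, so a Hölder split of $\IE[F_iZ]$ — with $p$ chosen small enough that all required powers of $X^2$ are integrable under Assumption \ref{ass1} — yields $\IE[\sup_{\theta}|D_i(\theta)|]\le C'\rho^{i}$. Summing the geometric series gives $\IE\big[\sum_{i=1}^{\infty}\sup_{\theta}|D_i(\theta)|\big]<\infty$, which closes both parts.

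The main obstacle is the uniform-in-$\theta$ Lipschitz bound in the $y$-argument with genuinely geometric weights: it rests on the fact that $\sigma(y,\theta)^2$ and its $\theta$-derivatives react to their arguments with geometrically decaying sensitivity, \emph{uniformly} over $\Theta$, which in turn uses $\sup_{\theta\in\Theta}\lambda_{\max}(B(\theta))<1$ from Proposition \ref{prop:existencegarch}(iii) on the compact parameter set. The remaining delicate point is matching moment exponents, namely selecting one $p$ small enough that both $Z$ and the factors $F_i$ are integrable with the single $q$ from Proposition \ref{prop:existencegarch}(ii), so that the Hölder estimate $\IE[F_iZ]\le\|F_i\|_{s'}\|Z\|_{s}$ stays finite and bounded uniformly in $i$.
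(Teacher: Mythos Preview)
Your overall strategy is sound and mirrors the paper's: bound the partial sums by $\sum_{i\ge 1}W_i$ with $W_i:=\sup_{\theta\in\Theta}|D_i(\theta)|$, and show that the tail contribution decays geometrically in $i$ via the Lipschitz bound of Lemma~\ref{lem:analytical_properties}(iii). The gap is in the moment step. You claim $\IE[W_i]\le C'\rho^i$ and then $\IE\big[\sum_i W_i\big]<\infty$, but under Assumption~\ref{ass1} alone this is not available. The Lipschitz bound (iii) carries the factor $(1+|x|)$, and here $x=X_i^2$; there is no interpolation that replaces this by $(1+X_i^2)^b$ with small $b$, because every global bound on $\nabla_\theta^l\ell(x,y,\theta)$ from \reff{eq:likelihood_der1}--\reff{eq:likelihood_der3} is again linear in $x$. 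Since the parameter space $\Theta$ includes IGARCH and mildly explosive configurations, Proposition~\ref{prop:existencegarch}(ii) only guarantees $\|X_0^2\|_q<\infty$ for some possibly very small $q>0$, so $\IE X_i^2$ can be infinite. Consequently your H\"older split $\IE[F_iZ]\le\|F_i\|_{s'}\|Z\|_s$ (which needs $s'\ge 1$) fails, and neither $\IE[W_i]$ nor $\IE\big[\sum_i W_i\big]$ is known to be finite.

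The paper's fix is to work entirely in a fractional norm $\|\cdot\|_{q'}$ with $q'\in(0,1]$ chosen so small that $q'(l+3)\le q$; then H\"older and stationarity give $\|W_i\|_{q'}\le \tilde C\rho^i$. For (i) one uses the subadditivity $\|\sum_i W_i\|_{q'}^{q'}\le \sum_i\|W_i\|_{q'}^{q'}$ (valid precisely because $q'\le 1$) to conclude $\|\sum_{i=1}^n W_i\|_{q'}$ is bounded uniformly in $n$, hence $O_p(1)$. For (ii) one cannot use integrability of the series; instead $\|W_i\|_{q'}\le\tilde C\rho^i$ plus Markov and Borel--Cantelli give $W_i\to 0$ a.s., and a Ces\`aro argument yields $n^{-1}\sum_{i=1}^n W_i\to 0$ a.s. If you replace your first-moment argument by this $q'$-norm route, your proof goes through.
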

\begin{proof}[Proof of Lemma \ref{lemma:truncation}]
    Note that for arbitrary $0 < \tilde q \le \min\{q,1\}$ and random variables $Z_i = (Z_{i1},Z_{i2},...)$ with $\|Z_i\|_{q} \le D$ it holds that
    \[
        \||Z_i|_{(\rho^j)_j,1}\|_{\tilde q} \le \Big(\sum_{j=1}^{\infty}\rho^{\tilde q j} \|Z_{ij}\|_{\tilde q}^{\tilde q}\Big)^{1/\tilde q} \le D (\frac{1}{1-\rho^{\tilde q}})^{1/\tilde q} =: \tilde D(\tilde q).
    \]
    Let
    \[
        W_i := \sup_{\theta \in \Theta}|g(X_i^2,Y_i^c,\theta) - g(X_i^2,Y_i,\theta)|.
    \]
    By Lemma \ref{lem:analytical_properties}(iii), we have with H\"{o}lder's inequality for $0 < q' \le q$ chosen such that $0 < q'(l+3) \le 1$:
    \begin{eqnarray}
        && \|W_i\|_{q'}\nonumber\\
        &=& \big\|\sup_{\theta \in \Theta}|g(X_i^2,Y_i^c,\theta) - g(X_i^2,Y_i,\theta)|\big\|_{q'} \nonumber\\
        &\le& C (1 +  \||Y_i|_{(\rho^j)_j,1} \|_{q'(l+3)}^{l+1} + \||Y_i^c|_{(\rho^j)_j,1} \|_{q'(l+3)}^{l+1})\cdot (1 + \|X_i^2\|_{q'(l+3)}) \| |y-y'|_{(\rho^{j})_j,1}\|_{q'(l+3)}\nonumber\\
        &\le& C(1+2\tilde D(q')^{l+1})(1+D)\cdot \Big(\sum_{j=i}^{\infty}\rho^{q'(l+3)j}\|X_j^2\|_{q'(l+3)}^{q'(l+3)}\Big)^{1/(q'(l+3))}\nonumber\\
        &\le& C(1+2\tilde D(q')^{l+1})(1+D)\tilde D(q'(l+3)) \rho^{i} =: \tilde C \cdot \rho^{i}.\label{eq:proof_lemma_truncation1}
    \end{eqnarray}
    Therefore, we have
    \begin{eqnarray*}
        \Big\|\sup_{r\in[0,1]}\sup_{\theta \in \Theta}\Big|\sum_{i=1}^{\lfloor n r\rfloor}g(X_i^2,Y_i^c,\theta) - \sum_{i=1}^{\lfloor n r\rfloor}g(X_i^2,Y_i,\theta)\Big|\Big\|_{q'} &\le& \Big\|\sum_{i=1}^{n}W_i\Big\|_{q'}\\
        &\le& \Big(\sum_{i=1}^{n}\big\|W_i\big\|_{q'}^{q'}\Big)^{1/q'}\\
        &\le& \tilde C \Big(\sum_{i=1}^{n}(\rho^{q'})^{i}\Big)^{1/q'} < \infty,
    \end{eqnarray*}
    giving the result.\\
    (ii) It holds that
    \[
        \sup_{r\in[0,1]}\sup_{\theta \in \Theta}\frac{1}{n}\Big|\sum_{i=1}^{\lfloor n r\rfloor}g(X_i^2,Y_i^c,\theta) - \sum_{i=1}^{\lfloor nr\rfloor}g(X_i^2,Y_i,\theta)\Big| \le \frac{1}{n}\sum_{i=1}^{n}W_i.
    \]
    In the following we show that $W_i \to 0$ ($i\to\infty$) a.s.. Then the assertion follows with a Cesaro sum argument. Let $\varepsilon > 0$ be arbitrary. Then with Markov's inequality and \reff{eq:proof_lemma_truncation1},
    \[
        \sum_{i=1}^{\infty}\IP(|W_i| > \varepsilon) \le \sum_{i=1}^{\infty}\frac{\tilde C^{q'}}{\varepsilon^{q'}}\rho^{q' i} < \infty,
    \]
    showing $W_i \to 0$ with Borel-Cantelli's lemma.
\end{proof}

Let us use the abbreviation $Z_i := (X_i^2, Y_i)$. 
We now state results about the dependence measure of the stationary processes $g(Z_i,\theta)$, where $g \in \{\nabla_{\theta}\ell, \nabla_{\theta}^2\ell\}$.

\begin{lemma}[Dependence measures of $\nabla_{\theta}\ell$, $\nabla_{\theta}^2\ell$]\label{lemma:dependence} Let Assumption \ref{ass1} hold. Let $M \ge 1$. Assume that $\IE |\zeta_0|^{2(M+a)} < \infty$ for some $a > 0$. Let $g \in \{\nabla_{\theta}\ell, \nabla_{\theta}^2\ell\}$. Then there exists some $C > 0$, $\rho \in (0,1)$, $\iota > 0$ such that
\[
    \sup_{|\theta - \theta^{*}|_{1} < \iota}\delta_{M}^{g(Z,\theta)}(k) \le C \rho^k, \quad\quad \delta_{M}^{\sup_{|\theta - \theta^{*}|_{1} < \iota}g(Z,\theta)}(k) \le C \rho^k.
\]
\end{lemma}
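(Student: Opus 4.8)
The plan is to reduce both asserted bounds to a single uniform estimate on the coupled increment, and then to exploit the multiplicative structure $X_i^2 = \zeta_i^2\sigma(Y_i,\theta^{*})^2$ of the model to route the only heavy-tailed factor through the independent innovation. First I would fix the reduction. By stationarity, $\delta_M^{g(Z,\theta)}(k) = \|g(Z_k,\theta) - g(Z_k^{*},\theta)\|_M$, where $Z_k^{*}=(X_k^{2*},Y_k^{*})$ arises from $Z_k$ by replacing $\zeta_0$ with an independent copy $\zeta_0^{*}$. Since $|g(Z_k,\theta)-g(Z_k^{*},\theta)|\le \sup_{|\theta-\theta^{*}|_1<\iota}|g(Z_k,\theta)-g(Z_k^{*},\theta)|$ for every fixed $\theta$, and since $|\sup_\theta a(\theta)-\sup_\theta b(\theta)|\le \sup_\theta|a(\theta)-b(\theta)|$, both quantities in the lemma are dominated by
\[
    \Delta_k := \Big\| \sup_{|\theta - \theta^{*}|_1 < \iota} |g(Z_k,\theta) - g(Z_k^{*},\theta)| \Big\|_M,
\]
so it suffices to prove $\Delta_k\le C\rho^k$, which handles the plain and the supremum versions at once.

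The key structural step is that $X_k^2 = \zeta_k^2\sigma(Y_k,\theta^{*})^2 = R_{\zeta_k}(Y_k,\theta^{*})$, so $\theta\mapsto g(Z_k,\theta)$ is a function $\tilde g_{\theta^{*}}(\zeta_k,Y_k,\theta)$ of $(\zeta_k,Y_k)$ alone; for $k\ge 1$ the perturbation of $\zeta_0$ leaves $\zeta_k$ untouched and only turns $Y_k$ into $Y_k^{*}$. Thus the increment equals $\tilde g_{\theta^{*}}(\zeta_k,Y_k,\theta)-\tilde g_{\theta^{*}}(\zeta_k,Y_k^{*},\theta)$, and I would invoke the Lipschitz-in-$y$ estimate of Lemma \ref{lem:analytical_properties} (the $g\in\{\nabla_\theta\ell,\nabla_\theta^2\ell\}$ analogue of \reff{eq:i_xdiff}, carrying the factor $(1+\zeta_k^2)$ rather than $(1+\zeta_k^2)^2$ since $\nabla_\theta^l\ell$ is affine in $x$), valid uniformly in $|\theta-\theta^{*}|_1<\iota$ once a small $p>0$ is fixed:
\[
    \sup_{|\theta-\theta^{*}|_1<\iota}\big|\tilde g_{\theta^{*}}(\zeta_k,Y_k,\theta) - \tilde g_{\theta^{*}}(\zeta_k,Y_k^{*},\theta)\big| \le C(1+\zeta_k^2)\big(1+|Y_k|_{(\rho^j)_j,2p}^{2p}+|Y_k^{*}|_{(\rho^j)_j,2p}^{2p}\big)\,|Y_k-Y_k^{*}|_{(\rho^j)_j,p}^{p}.
\]
The decisive gain is that the only heavy‑tailed object $X_k^2$ has been replaced by the explicit, independent factor $\zeta_k^2$ (controlled by $\IE|\zeta_0|^{2(M+a)}<\infty$) times polynomial factors in $Y_k$, which involve only the lagged $X_{<k}^2$ and hence only the fractional moment $\|X_0^2\|_q\le D$ from Proposition \ref{prop:existencegarch}(ii).

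Finally I would take $\|\cdot\|_M$-norms and close the estimate. One Hölder split peels off the independent factor $(1+\zeta_k^2)$, whose $L^{M+a}$-norm is finite precisely under $\IE|\zeta_0|^{2(M+a)}<\infty$ (the surplus $a$ being the Hölder slack over the $2M$ that bare independence would need); a second split $1/M=1/M_1+1/M_2$, say $M_1=M_2=2M$, separates the remaining $Y$-factors. The polynomial prefactor is bounded by $\big\||Y_k|_{(\rho^j)_j,2p}^{2p}\big\|_{M_1}\le \sum_j\rho^j\|X_0^2\|_{2pM_1}^{2p}=O(1)$, which is legitimate once $p$ is chosen so small that $2pM_1\le q$ and $pM_2\le q$, keeping every invoked $X^2$-moment within the available fractional range. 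The geometric decay comes from the increment: since $(Y_k-Y_k^{*})_j=X_{k-j}^2-X_{k-j}^{2*}$ vanishes for $j>k$ and is a lag-$(k-j)$ coupling otherwise,
\[
    \big\||Y_k-Y_k^{*}|_{(\rho^j)_j,p}^{p}\big\|_{M_2}\le \sum_{j=1}^{k}\rho^j\big(\delta_{pM_2}^{X^2}(k-j)\big)^{p}\le C\sum_{j=1}^{k}\rho^j c^{p(k-j)},
\]
using $\delta_{pM_2}^{X^2}(m)\le\delta_q^{X^2}(m)=O(c^m)$. This convolution of two geometric sequences decays at rate $\max\{\rho,c^p\}$ (with a harmless polynomial factor at equality), giving $\Delta_k\le C\bar\rho^{\,k}$ for some $\bar\rho\in(0,1)$. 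The main obstacle—and the reason the $\tilde g$-reformulation is essential rather than cosmetic—is exactly that in the IGARCH/explosive regime $X_0^2$ need not possess any moment of order $\ge 1$: a direct split of a factor $(1+X_k^2)$ would force a moment $\|X_0^2\|_{p_2}$ with $p_2\ge M\ge 1$ and break down, whereas channelling the heavy tail through the independent $\zeta_k^2$ keeps all $X^2$-moments fractional.
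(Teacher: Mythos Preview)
Your proposal is correct and follows essentially the same route as the paper: reduce both assertions to the supremum increment, rewrite $g(Z_k,\theta)=\tilde g_{\theta^{*}}(\zeta_k,Y_k,\theta)$ so that for $k\ge 1$ only $Y_k$ is perturbed, invoke the analytical bound of Lemma~\ref{lem:analytical_properties}(ii), split the factors by H\"older (the paper does a single three-way H\"older with exponents $M+a$, $(M+a)/\kappa$, $(M+a)/(2\kappa)$ for $\kappa=a/3$, whereas you use independence of $\zeta_k$ from $(Y_k,Y_k^{*})$ for the first peel and then a two-way H\"older with $M_1=M_2=2M$), and finish with the same convolution $\sum_{j\le k}\rho^j(\delta_q^{X^2}(k-j))^p$ against $\delta_q^{X^2}(m)=O(c^m)$. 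The only point to tidy is the wording around the first split: if you really use H\"older with exponent $M+a$ for $(1+\zeta_k^2)$, the residual lives in $L^{M(M+a)/a}$ rather than $L^M$, so your subsequent split $1/M=1/M_1+1/M_2$ is consistent precisely with the independence factorization you mention parenthetically---either route works.
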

\begin{proof}[Proof of Lemma \ref{lemma:dependence}]
    We only prove the second assertion, the first is nearly the same. Let $(X_i^2)^{*} = H(\sF_i^{*})$ and $Z_i^{*} := ( (X_i^{2})^{*},Y_i^{*})$. Let $\kappa = \frac{a}{3}$. Choose $p > 0$ small enough such that
     $\frac{(M+3\kappa)}{\kappa}p \le q$. By Hoelder's inequality ($\frac{M}{M+3\kappa} + \frac{\kappa}{M+3\kappa} + \frac{2\kappa}{M+3\kappa} = 1$) and Lemma \ref{lem:analytical_properties}(ii) there exists $\iota > 0$, $C > 0$, $\rho \in (0,1)$ such that
    \begin{eqnarray*}
        && \delta_{M}^{\sup_{|\theta - \theta^{*}|_1 < \iota}|g(Z,\theta)}(i)|\\
        &=& \Big\|\sup_{|\theta - \theta^{*}|_1 < \iota}|g(Z_i,\theta)| - \sup_{|\theta - \theta^{*}|_1 < \iota}|g(Z_i^{*},\theta)|\Big\|_{M}\\
        &\le& \Big\|\sup_{|\theta - \theta^{*}|_1 < \iota}|g(Z_i,\theta)-g(Z_i^{*},\theta)|\Big\|_{M}\\
        &=& \Big\| \sup_{|\theta - \theta^{*}|_1 < \iota}|\tilde g_{\theta^{*}}(\zeta_i,Y_i,\theta)-\tilde g_{\theta^{*}}(\zeta_i,Y_i^{*},\theta)|\Big\|_{M}\\
        &\le& C (1 +  \| |Y_i|_{(\rho^j)_j,2p}^{2p}\|_{\frac{M+3\kappa}{2\kappa}} + \| |Y_i^{*}|_{(\rho^j)_j,2p}^{2p}\|_{\frac{M+3\kappa}{2\kappa}})\| |Y_i-Y_i^{*}|_{(\rho^{j})_j,p}^p\|_{\frac{M+3\kappa}{\kappa}} (1+\|\zeta^2\|_{M+3\kappa})\\
        &\le& C (1 + 2\frac{D^{2p}}{1-\rho}) (1 + \|\zeta^2\|_{M+3\kappa})\cdot \sum_{j=1}^{\infty}\rho^j \|X_{i-j}^2 - (X_{i-j}^{*})^2\|^p_{\frac{(M+3\kappa)p}{\kappa}}\\
        &\le& \tilde C \cdot \sum_{j=1}^{i}\rho^j [\delta^{X^2}_q(i-j)]^p,
    \end{eqnarray*}
    where $\tilde C :=C (1 + 2\frac{D^{2p}}{1-\rho}) (1 + \|\zeta^2\|_{M+3\kappa})$. By Proposition \ref{prop:existencegarch}(ii), it holds that $\delta^{X^2}_q(k) = O(c^k)$, which finishes the proof.
\end{proof}

In the following we make use of results from \cite{zhangwu2017}. Therefore we have to define $\Delta_q^{Z}(m) := \sum_{k=m}^{\infty}\delta_q^{Z}(k)$ and $\|Z\|_{q,\alpha} := \sup_{m\ge 0}(m+1)^{\alpha}\Delta_q^{Z}(m)$.

\begin{lemma}\label{lemma:convergence}
    Let Assumption \ref{ass1} hold. Additionally, assume that for some $a' > 0$, $\IE |\zeta_0|^{4+a'} < \infty$. Then there exists $\iota > 0$ such that for $g = \nabla_{\theta}^l\ell$, $l = 1,2$, it holds that
    \[
        \sup_{|\theta - \theta^{*}|_1 < \iota}\sup_{r \in [0,1]}\Big|\frac{1}{n}\sum_{i=1}^{\lfloor nr\rfloor}\big\{g(X_i^2,Y_i,\theta) - \IE g(X_i^2,Y_i,\theta)\big\}\Big| = O_{p}\Big(\Big(\frac{\log(n)^3}{n}\Big)^{1/2}\Big).
    \]
\end{lemma}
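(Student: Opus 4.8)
The plan is to combine a discretization of the parameter ball with a high-dimensional concentration inequality from \cite{zhangwu2017}, exploiting the geometric decay of the functional dependence measures established in Lemma \ref{lemma:dependence}. Fix $g = \nabla_{\theta}^{l}\ell$, $l \in \{1,2\}$, recall $Z_i = (X_i^2, Y_i)$, and write $\bar S_n(r,\theta) := \frac{1}{n}\sum_{i=1}^{\lfloor nr\rfloor}\{g(Z_i,\theta) - \IE g(Z_i,\theta)\}$. Since $\sup_{r\in[0,1]}|\bar S_n(r,\theta)| = \frac{1}{n}\max_{1\le k\le n}|\sum_{i=1}^{k}\{g(Z_i,\theta) - \IE g(Z_i,\theta)\}|$, the supremum over $r$ is nothing but a maximum over the running index $k\le n$, which is exactly the object the maximal inequalities of \cite{zhangwu2017} are designed to control. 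Let $\iota > 0$ be small enough that the conclusions of Lemma \ref{lemma:dependence} and Lemma \ref{lem:analytical_properties} hold on $\{|\theta - \theta^{*}|_1 < \iota\}$.

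First I would reduce the supremum over $\theta$ to a finite maximum. Cover $\{|\theta - \theta^{*}|_1 < \iota\}$ by $N_n = O(n^{r+s+1})$ balls of radius $h_n \asymp n^{-1}$ centered at grid points $\theta_1,\dots,\theta_{N_n}$. For $\theta$ in the cell around $\theta_j$, Lemma \ref{lem:analytical_properties}(ii) gives a Lipschitz bound $\sup_{|\theta - \theta_j|_1 \le h_n}|g(Z_i,\theta) - g(Z_i,\theta_j)| \le h_n\, W_i$, where the envelope $W_i$ has finite mean under $\IE|\zeta_0|^{4+a'} < \infty$ and, by the second assertion of Lemma \ref{lemma:dependence}, geometrically decaying dependence measure. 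Hence $\frac{1}{n}\sum_{i=1}^{n}W_i = O_p(1)$ (it converges a.s.\ by the ergodic theorem), and since $\theta \mapsto \IE g(Z_i,\theta)$ is Lipschitz with the same envelope, the oscillation $\sup_{r}\sup_{|\theta - \theta_j|_1 \le h_n}|\bar S_n(r,\theta) - \bar S_n(r,\theta_j)|$ is $O_p(h_n) = O_p(n^{-1})$, negligible relative to the asserted rate.

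It then remains to bound the grid term $\frac{1}{n}\max_{1\le j\le N_n}\max_{1\le k\le n}|\sum_{i=1}^{k}\{g(Z_i,\theta_j) - \IE g(Z_i,\theta_j)\}|$. By the first assertion of Lemma \ref{lemma:dependence}, each centered coordinate satisfies $\delta_{M}^{g(Z,\theta_j)}(k) \le C\rho^k$ with $M > 2$ admissible under $\IE|\zeta_0|^{4+a'} < \infty$, so the tail sums $\Delta_{M}^{g(Z,\theta_j)}(m) = O(\rho^m)$ decay geometrically and the dependence-adjusted norms $\|g(Z,\theta_j)\|_{M,\alpha}$ are finite and bounded uniformly in $j$. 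I would then apply the concentration inequality of \cite{zhangwu2017} to the running maximum of the $N_n$-dimensional partial-sum vector. The mechanism is a truncation of $g$ at a slowly growing level: the bounded part obeys a Bernstein/sub-Gaussian bound whose maximum over the $n\cdot N_n$ summands costs only a factor $\sqrt{\log N_n} \asymp \sqrt{\log n}$, while the discarded heavy tail is controlled uniformly by the finite $M$-th moment with $M > 2$. Balancing the truncation level against the threshold $x \asymp \sqrt{n}\,\log(n)^{3/2}$ makes both contributions vanish; dividing by $n$ turns $x$ into the rate $(\log(n)^3/n)^{1/2}$.

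The main obstacle is the uniformity in $\theta$ under the weak moment condition. A crude coordinatewise union bound fails here: with only $4+a'$ moments the polynomial (Nagaev) tail carries a factor $N_n$ that a polynomial grid does not absorb. This is precisely why the high-dimensional concentration machinery of \cite{zhangwu2017} is needed — its truncation-based argument pays only $\sqrt{\log N_n}$ on the sub-Gaussian part instead of a full factor $N_n$ on the heavy-tailed part — and why the threshold $4+a'$, which yields an admissible $M$ strictly above $2$ in Lemma \ref{lemma:dependence}, is the right assumption. The remaining work is routine: verifying via Lemma \ref{lem:analytical_properties} and the envelope bound of Lemma \ref{lemma:dependence} that the discretization error is of order $n^{-1}$. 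The argument is identical for $l = 1$ and $l = 2$, since Lemma \ref{lemma:dependence} supplies the geometric dependence bounds for both $g = \nabla_{\theta}\ell$ and $g = \nabla_{\theta}^2\ell$.
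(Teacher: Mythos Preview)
Your proposal follows essentially the same strategy as the paper: discretize the $\theta$-ball on a polynomial grid of mesh $n^{-1}$, control the off-grid oscillation via the Lipschitz bound from Lemma \ref{lem:analytical_properties}(ii), and on the grid invoke the high-dimensional concentration inequality of \cite{zhangwu2017} using the geometric dependence decay from Lemma \ref{lemma:dependence}. Your identification of why a naive union bound over $N_n$ grid points fails (only $4+a'$ moments) and why the Zhang--Wu machinery is needed matches the paper's reasoning.

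The one substantive difference is the treatment of the running maximum over $k\le n$. You assert that the maximal inequalities of \cite{zhangwu2017} are ``exactly'' designed for this object and apply them directly. The paper instead invokes only Theorem 6.2 of \cite{zhangwu2017}, which is a tail bound for the partial sum at a \emph{single} time $m$, and reduces the running maximum to such fixed-time sums via an explicit dyadic decomposition $\max_{j\le n}|S_j(\theta)| \le \sum_{i=0}^{d-1}\Phi_i(\theta)$ with $\Phi_i(\theta) = \max_{1\le k\le 2^{d-i}}|S_{2^i k}(\theta) - S_{2^i(k-1)}(\theta)|$, followed by a union bound over the $2^{d-i}$ blocks at each scale and a sum over the $d \asymp \log n$ scales. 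This extra dyadic step contributes a $\log n$ factor to the threshold and is part of why the rate carries $\log(n)^3$ rather than fewer logarithms. If the result you have in mind in \cite{zhangwu2017} genuinely delivers a running-maximum bound of the required form, your route is cleaner; if not, you need to insert the dyadic step, which is routine but should be stated explicitly. Your handling of the oscillation term (ergodic theorem for $\frac{1}{n}\sum W_i$) is slightly different from the paper's Markov-inequality argument but equivalent in effect.
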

\begin{proof}[Proof of Lemma \ref{lemma:convergence}]
    Let $\iota > 0$ (is chosen below). Let $S_j(\theta) := \sum_{i=1}^{j}\big\{g(X_i^2,Y_i,\theta) - \IE g(X_i^2,Y_i,\theta)\big\}$, $j = 1,...,n$. For fixed $n\in\IN$, choose $d \in \IN$ such that $2^{d-1} \le n \le 2^{d}$. For $i = 0,1,...,d-1$, define
    \[
        \Phi_i(\theta) := \max_{1 \le k \le 2^{d-i}}|S_{2^i\cdot k}(\theta) - S_{2^{i}(k-1)}|.
    \]
    By a dyadic expansion of $j\in\{1,...,n\}$ we obtain
    \[
        \max_{j=0,...,n}|S_j(\theta)| \le \sum_{i=0}^{d-1}\Phi_i(\theta).
    \]
    Note that
    \begin{eqnarray*}
        &&\sup_{|\theta - \theta^{*}|_1 < \iota}\sup_{r \in [0,1]}\Big|\sum_{i=1}^{\lfloor nr\rfloor}\big\{g(X_i^2,Y_i,\theta) - \IE g(X_i^2,Y_i,\theta)\big\}\Big|\\
        &\le& \sum_{i=0}^{d-1}\sup_{|\theta - \theta^{*}|_1 < \iota}\Phi_i(\theta).
    \end{eqnarray*}
    Thus, for $Q > 0$, by stationarity,
    \begin{eqnarray}
        && \IP\Big(\sup_{|\theta - \theta^{*}|_1 < \iota}\sup_{r \in [0,1]}\Big|\sum_{i=1}^{\lfloor nr\rfloor}\big\{g(X_i^2,Y_i,\theta) - \IE g(X_i^2,Y_i,\theta)\big\}\Big| > Q (n\log(n)^3)^{1/2}\Big)\nonumber\\
        &\le& \sum_{i=0}^{d-1}\IP\Big(\sup_{|\theta - \theta^{*}|_1 < \iota}\Phi_i(\theta) > \frac{Q (n\log(n)^3)^{1/2}}{d}\Big)\nonumber\\
        &\le & \sum_{i=0}^{d-1}2^{d-i}\cdot \IP\Big(\sup_{|\theta - \theta^{*}|_1 < \iota} |S_{2^i}(\theta)| > \frac{Q (n\log(n)^3)^{1/2}}{d}\Big).\label{eq:proofconvergence_strategy1}
    \end{eqnarray}

    Since $\theta^{*} \in int(\Theta)$, there exists $\iota_1 > 0$ such that $\bar \Theta := \{\theta \in \Theta: |\theta - \theta^{*}|_1 \le \iota_1\} \subset int(\Theta)$.

    Apply Lemma \ref{lem:analytical_properties}(ii) with $p = q$ and Lemma \ref{lemma:dependence} applied to $M = 2+\frac{a'}{4}$, $a = \frac{a'}{4}$, we obtain corresponding $C > 0$, $\rho \in (0,1)$, $0 < \iota < \iota_1$ such that the statements of the Lemmata hold true.

    We now use a simple chaining argument. Let $\Theta_{n}$ be a discretization of $\Theta \subset \IR^{r+s+1}$ such that for each $\theta \in \Theta$ there exists some $\theta' \in \Theta_n$ with $|\theta - \theta'|_1 \le n^{-1}$. 
    
    We conclude that for $1 \le m \le n$, it holds that
    \begin{eqnarray}
        &&\IP\Big( \sup_{|\theta - \theta^{*}|_1 < \iota}|S_m(\theta)| > \frac{Q (n\log(n)^3)^{1/2}}{d}\Big)\nonumber\\
        &\le& \IP\Big(\sup_{\theta \in  \Theta_{n}, |\theta - \theta^{*}|_1 < \iota}|S_m(\theta)| > \frac{Q(n\log(n)^3)^{1/2}}{2d}\Big)\nonumber\\
        &&\quad\quad + \IP\Big(\sup_{\theta,\theta' \in \Theta, |\theta - \theta'|_1 \le n^{-1},|\theta - \theta^{*}|_1 < \iota,|\theta' - \theta^{*}|_1 < \iota}|S_m(\theta) - S_m(\theta')| > \frac{Q(n\log(n)^3)^{1/2}}{2d}\Big)\nonumber\\
        &=:& I_n + II_n.\label{eq:proof_strategy_emp}
    \end{eqnarray}
    
    By Lemma \ref{lemma:dependence} applied to $M = 2+\frac{a'}{4}$, $a = \frac{a'}{4}$, we have $\Delta^{\sup_{|\theta - \theta^{*}|_1 < \iota}|g(Z,\theta)|}_{M}(k) = O(\rho^k)$ and $\sup_{|\theta - \theta^{*}|_1 < \iota} \Delta^{g(Z,\theta)}_{M}(k) = O(\rho^k)$. Let $\alpha = \frac{1}{2}$. Then
    \begin{eqnarray*}
        W_{M,\alpha} &:=& \| \sup_{|\theta - \theta^{*}|_1 < \iota}|g(Z,\theta)| \|_{M,\alpha} = \sup_{m \ge 0} (m+1)^{\alpha}\Delta_{M}^{\sup_{|\theta - \theta^{*}|_1 < \iota}|g(Z,\theta)|}(m) < \infty,
    \end{eqnarray*}
    and
    \begin{eqnarray*}
        W_{2,\alpha} &:=& \sup_{|\theta - \theta^{*}|_1 < \iota}\|g(Z,\theta)\|_{2,\alpha} = \sup_{m\ge 0}(m+1)^{\alpha} \sup_{|\theta - \theta^{*}|_1 < \iota}\Delta_{2}^{g(Z,\theta)}(m) < \infty.
    \end{eqnarray*}
    Note that $l = 1 \wedge \log \# \Theta_{n} \le (r+s+1) \log(n)$ and $Q n^{1/2} \log(n)^{3/2} \geq \sqrt{m l}W_{2,\alpha}+m^{1/M} l^{3/2}W_{M,\alpha} \gtrsim m^{1/2} \log(m)^{1/2} + m^{1/M} \log(m)^{3/2}$ for $Q$ large enough.
    
    By applying Theorem 6.2 of \cite{zhangwu2017} with $q = M$ to $(g(Z_i,\theta))_{\theta \in \tilde\Theta_n, |\theta - \theta^{*}|_1 < \iota}$, we have with some constants $C_{\alpha} > 0$:
     \begin{eqnarray}
       I_n &=&\IP\Big(\sup_{\theta \in \Theta_{n}, |\theta - \theta^{*}|_1 < \iota}|S_m(\theta)| > \frac{Q(n\log(n)^3)^{1/2}}{2d}\Big)\nonumber\\
        &\le& \frac{C_{\alpha}m\cdot l^{M/2}W_{M,\alpha}^{M}}{(Q/2d)^{M}(n^{1/2}\log(n)^{3/2})^{M}}+C_{\alpha} \exp \Big( -\frac{C_{\alpha}(Q/2d)^2 n\log(n)^3}{m W_{2,\alpha}^2}\Big)\nonumber\\
        &\le& O(m\cdot n^{-\frac{M}{2}} + n^{-2}),\label{eq:proof_strategy_emp2}
    \end{eqnarray}
    for $Q$ large enough, since $d \le \log_2(n)+1$ and $m \le n$.
    
    Since $g(Z_i,\theta) = \tilde g_{\theta^{*}}(\zeta_i,Y_i,\theta)$ and $g(Z_i,\theta') = \tilde g_{\theta^{*}}(\zeta_i,Y_i,\theta')$, we have with Lemma \ref{lem:analytical_properties}(ii):
    \begin{eqnarray*}
        &&\sup_{\theta,\theta' \in \tilde\Theta, |\theta - \theta'|_1 \le n^{-1},|\theta - \theta^{*}|_1 < \iota,|\theta' - \theta^{*}|_1 < \iota}|g(Z_i,\theta) - g(Z_i,\theta')|\\
        &\le& C(1 + |Y_i|_{(\rho^j)_j,p}^p)(1+\zeta_i^2) n^{-1}.
    \end{eqnarray*}
    Thus
    \begin{eqnarray*}
        && \Big\| \sup_{\theta,\theta' \in \Theta, |\theta - \theta'|_1 \le n^{-1},|\theta - \theta^{*}|_1 < \iota,|\theta' - \theta^{*}|_1 < \iota}\Big|\sum_{i=1}^{m}\big\{\IE_0 g(Z_i,\theta)- \IE_0 g(Z_i,\theta')\big\}\Big|\Big\|_1\\
        &\le& 2C (1 + \| |Y_0|_{(\rho^j)_j,p}^p \|_1) (1 + \IE \zeta_0^2)\frac{m}{n}\\
        &\le& 2C (1 + \frac{D^p}{1-\rho})(1 + \IE \zeta_0^2)\frac{m}{n} = O(\frac{m}{n}).
    \end{eqnarray*}
    With Markov's inequality, we therefore obtain
    \begin{equation}
        II_n \le \frac{2\tilde C m}{(Q/2d) n^{3/2}\log(n)^{3/2}}.\label{eq:proof_strategy_emp3}
    \end{equation}
    Inserting \reff{eq:proof_strategy_emp2} and \reff{eq:proof_strategy_emp3} into \reff{eq:proof_strategy_emp} and then into \reff{eq:proofconvergence_strategy1}, we obtain with some constant $\tilde C > 0$:
    \begin{eqnarray*}
        && \IP\Big(\sup_{|\theta - \theta^{*}|_1 < \iota}\sup_{r \in [0,1]}\Big|\sum_{i=1}^{\lfloor nr\rfloor}\big\{g(X_i^2,Y_i,\theta) - \IE g(X_i^2,Y_i,\theta)\big\}\Big| > Q (n\log(n)^3)^{1/2}\Big)\\
        &\le& \tilde C \sum_{i=0}^{d-1}2^{d-i}\cdot \Big(2^i \cdot n^{-M/2} + n^{-2} + 2^i\cdot n^{-3/2} \log(n)^{1/2}\Big)\\
        &\le& \tilde C d n\cdot \Big(n^{-M/2} + n^{-2} + n^{-3/2}\log(n)^{1/2}\Big) \to 0,
    \end{eqnarray*}
    showing the assertion.
\end{proof}

\subsection{Analytical properties of the likelihood}

For the following results, we derive some analytical properties of the likelihood we use. This allows us to separate analytical and stochastic treatment. For $p > 0$, some sequence $(y_j)_{j\in\IN}$ of real numbers and some sequence $(\chi_j)_{j\in\IN}$ of nonnegative real numbers, define the weighted seminorm
\[
    |y|_{\chi,p} := \Big(\sum_{j=1}^{\infty}\chi_j |y_j|^p\Big)^{1/p}.
\]
Later, we will plug in $x = X_i$ and $y = Y_i$ into $\ell(x,y,\theta)$ and its derivatives. To make use of all connections between $x$,$y$, define $R_{\zeta}(y,\theta) := \zeta^2 \sigma(y,\theta)^2$, and 
\[
    \tilde \ell_{\tilde \theta}(\zeta,y,\theta) := \ell(R_{\zeta}(y,\tilde \theta),y,\theta).
\]
In the following Lemma \ref{lem:analytical_properties}(ii), we collect some analytical properties of $\tilde \ell_{\tilde \theta}$ to calculate functional dependence measures of $\ell(X_i^2,Y_i,\theta)$. The bounds in (iii) will be used to show that the truncated likelihood $\ell(X_i^2,Y_i^c,\theta)$ is near to $\ell(X_i^2,Y_i,\theta)$; for this argument we cannot use the connection between $X_i^2$ and $Y_i$.

\begin{lemma}\label{lem:analytical_properties} $\theta \mapsto \sigma(y,\theta)$ and $\theta \mapsto \ell(x,y,\theta)$ are  three times continuously differentiable. Let $\bar \Theta \subset int(\Theta)$ be a compact subset. Then for any $p > 0$, there exists $\iota > 0$ and $C > 0$, $\rho \in (0,1)$ such that (component-wise),
    \begin{enumerate}
        \item[(i)] for $l = 0,1,2,3$:
        \[
            \sup_{\theta \in \bar \Theta}\frac{|\nabla_{\theta}^l(\sigma(y,\theta)^2)|}{\sigma(y,\theta)^2} \le C (1 + |y|_{(\rho^j)_j,p}^p), \quad\quad \sup_{\theta,\tilde \theta \in \bar \Theta, |\theta - \tilde \theta|_1 < \iota}\frac{\sigma(y,\tilde \theta)^2}{\sigma(y,\theta)^2} \le C (1 + |y|_{(\rho^j)_j,p}^p).
        \]
        \item[(ii)] for $l = 0,1,2$,
        \begin{eqnarray*}
        &&\sup_{\theta, \tilde \theta \in \bar\Theta, |\theta - \tilde \theta|_1 < \iota}|\nabla_{\theta}^l \tilde \ell_{\tilde \theta}(\zeta,y,\theta) - \nabla_{\theta}^l \tilde \ell_{\tilde \theta}(\zeta,y',\theta)|\\
        &\le& C (1 +  |y|_{(\rho^j)_j,2p}^{2p} + |y'|_{(\rho^j)_j,2p}^{2p})|y-y'|_{(\rho^{j})_j,p}^p (1+\zeta^2).
    \end{eqnarray*}
    and
    \[
        \sup_{\theta, \theta', \tilde\theta \in \bar\Theta, |\theta - \tilde \theta|_1 < \iota, |\theta' - \tilde \theta|_1 < \iota}\frac{|\nabla_{\theta}^l \tilde \ell_{\tilde \theta}(\zeta,y,\theta) - \nabla_{\theta}^l \tilde \ell_{\tilde \theta}(\zeta,y,\theta')|}{|\theta - \theta'|_1} \le C(1+|y|_{(\rho^j)_j,p}^p)(1+\zeta^2).
    \]
    \item[(iii)] for $l = 1,2$,
    \[
        \sup_{\theta \in \Theta}|\nabla_{\theta}^l \ell(x,y,\theta) - \nabla_{\theta}\ell(x,y',\theta)| \le C (1 +  |y|_{(\rho^j)_j,1}^{l+1} + |y'|_{(\rho^j)_j,1}^{l+1})(1+|x|)\cdot |y-y'|_{(\rho^{j})_j,1}.
    \]
    \end{enumerate}
\end{lemma}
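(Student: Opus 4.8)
The plan is to reduce all three parts to one structural fact: the deterministic map $y\mapsto\sigma(y,\theta)^2$ defined by \reff{eq:estimation_sigma} is \emph{affine} in $y$ with coefficients that are smooth in $\theta$ and decay geometrically, uniformly on $\bar\Theta$. First I would solve the recursion explicitly: with $\sB_{\theta}(z)=1-\sum_{k=1}^s\beta_kz^k$ and $\sA_{\theta}(z)=\sum_{i=1}^r\alpha_iz^i$, the unique solution is
\[
    \sigma(y,\theta)^2 = c_0(\theta) + \sum_{j=1}^{\infty}c_j(\theta)\,y_j,\qquad c_0(\theta)=\frac{\alpha_0}{\sB_{\theta}(1)},\qquad \sum_{j\ge1}c_j(\theta)z^j=\frac{\sA_{\theta}(z)}{\sB_{\theta}(z)}.
\]
Since $\lambda_{max}(B(\theta))<1$ by Proposition \ref{prop:existencegarch}(iii) and $\bar\Theta$ is compact, $\rho_0:=\sup_{\theta\in\bar\Theta}\lambda_{max}(B(\theta))<1$, and the $c_j(\theta)$, which solve the linear recursion with characteristic polynomial $\sB_{\theta}$, satisfy $|c_j(\theta)|\le C\rho_1^{\,j}$ for any fixed $\rho_1\in(\rho_0,1)$. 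Termwise differentiation in $\theta$ of the above series (the differentiated series converging locally uniformly, which simultaneously yields the asserted three-times continuous differentiability) only produces an extra polynomial factor $j^l$, absorbed into a slightly larger $\rho_1$; hence $\sup_{\bar\Theta}|\nabla_\theta^l c_j(\theta)|\le C\rho_1^{\,j}$ for $l=0,1,2,3$. Under Assumption \ref{ass1}(A3) the dominant root is simple and the $c_j(\theta)$ are strictly positive with a two-sided bound $\underline c\,\rho_0^{\,j}\le c_j(\theta)\le C\rho_1^{\,j}$, a point I return to below.

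Second I would prove (i). The lower bound $\sigma(y,\theta)^2\ge c_0(\theta)\ge\alpha_{min}>0$ (from (A2) and $\sB_{\theta}(1)\in(0,1]$) combined with the coefficient estimate gives $|\nabla_\theta^l(\sigma^2)|\le C(1+\sum_j\rho_1^{\,j}y_j)$ and therefore $|\nabla_\theta^l\sigma^2|/\sigma^2\le C(1+\sum_j\rho_1^{\,j}y_j)$; the same reasoning controls the ratio $\sigma(y,\tilde\theta)^2/\sigma(y,\theta)^2$. \textbf{The main obstacle is to dominate this \emph{linear} quantity $\sum_j\rho_1^{\,j}y_j$ by the $p$-seminorm $1+|y|_{(\rho^j)_j,p}^{p}=1+\sum_j\rho^{\,j}y_j^{p}$ for \emph{every} $p>0$, in particular for $p<1$.} For $p\ge1$ this is elementary, splitting each term according to $y_j\le1$ or $y_j>1$ and using $y_j\le y_j^{p}$ on the latter. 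For $p<1$ the linear sum cannot be bounded using $\sigma^2\ge\alpha_{min}$ alone; here I would exploit positivity of the summands through the pointwise bounds $\sigma(y,\theta)^2\ge c_j(\theta)y_j$ together with $\sigma^2\ge c_0$, and interpolate via $\sigma^2=(\sigma^2)^{p}(\sigma^2)^{1-p}\ge c_0(\theta)^{p}(c_j(\theta)y_j)^{1-p}$, which yields
\[
    \frac{\rho_1^{\,j}y_j}{\sigma(y,\theta)^2}\le\frac{\rho_1^{\,j}y_j}{c_0(\theta)^{p}\,(c_j(\theta)y_j)^{1-p}}=\frac{\rho_1^{\,j}}{c_0(\theta)^{p}c_j(\theta)^{1-p}}\,y_j^{p}\le C\Big(\frac{\rho_1}{\rho_0^{\,1-p}}\Big)^{j}y_j^{p}.
\]
Choosing $\rho_1\in(\rho_0,\rho_0^{\,1-p})$ makes $\rho:=\rho_1\rho_0^{-(1-p)}<1$, giving the claim. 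This step is the technical heart: it needs the strict positivity and the \emph{lower} geometric bound $c_j(\theta)\ge\underline c\,\rho_0^{\,j}$ (trivial for GARCH$(1,1)$, where $c_j=\alpha_1\beta_1^{\,j-1}$, but requiring (A3) and a Perron--Frobenius argument for the nonnegative companion matrix $B(\theta)$ in the general case), which I expect to be the most delicate part of the whole lemma.

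Finally, (ii) and (iii) follow from (i) by the chain and product rules, since $\nabla_\theta^l\ell$ is a polynomial in $x/\sigma(y,\theta)^2$ (affine in $x$) and in the normalized derivatives $\nabla_\theta^m\sigma(y,\theta)^2/\sigma(y,\theta)^2$, $m\le l$. For (iii) I keep $x$ free, which produces the factor $(1+|x|)$, while the affine structure gives $\sigma(y,\theta)^2-\sigma(y',\theta)^2=\sum_j c_j(\theta)(y_j-y_j')$ and analogous identities for the derivatives; bounding these by $C|y-y'|_{(\rho^j)_j,1}$ and multiplying the $(1+|y|_{(\rho^j)_j,1}^{\cdots})$ factors from (i) yields the stated exponents $l+1$. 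For (ii) I write $\tilde\ell_{\tilde\theta}(\zeta,y,\theta)=\tfrac12(\zeta^2\,\sigma(y,\tilde\theta)^2/\sigma(y,\theta)^2+\log\sigma(y,\theta)^2)$, so that each $\theta$-derivative is a product of at most two normalized factors controlled by (i) (hence the power $2p$ appears after combining two seminorm bounds) times $(1+\zeta^2)$ coming from $x=R_\zeta(y,\tilde\theta)=\zeta^2\sigma(y,\tilde\theta)^2$; the $y\to y'$ and $\theta\to\theta'$ differences are then handled by the same difference identities for the coefficients, producing the factors $|y-y'|_{(\rho^j)_j,p}^{p}$ and $|\theta-\theta'|_1$ respectively. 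These reductions are mechanical once (i) is in hand, so essentially all the work concentrates on the $p<1$ estimate above.
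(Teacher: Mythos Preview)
Your overall architecture is right: the affine representation $\sigma(y,\theta)^2=c_0(\theta)+\sum_{j\ge1}c_j(\theta)y_j$ with geometrically decaying, smooth coefficients is exactly the structural fact on which everything rests, and your derivations of (ii) and (iii) from (i) match the paper's.

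The gap is in the heart of (i), precisely where you flagged it. Your two-sided bound $\underline c\,\rho_0^{\,j}\le c_j(\theta)\le C\rho_1^{\,j}$ with $\rho_0:=\sup_{\bar\Theta}\lambda_{\max}(B(\theta))$ cannot hold uniformly on $\bar\Theta$: for a fixed $\theta$ the coefficients behave like $\lambda_{\max}(\theta)^{j}$, and whenever $\lambda_{\max}(\theta)<\rho_0$ the ratio $c_j(\theta)/\rho_0^{\,j}\to0$, so no uniform $\underline c>0$ exists. If you repair this by using the pointwise rate $\lambda_{\max}(\theta)$ in the lower bound, the window $(\rho_0,\lambda_{\max}(\theta)^{1-p})$ in which $\rho_1$ must lie is empty whenever $\lambda_{\max}(\theta)^{1-p}\le\rho_0$, which happens on part of $\bar\Theta$; the Perron--Frobenius route therefore does not close. (It also fails outright if some $\alpha_j$ or $\beta_k$ vanish on $\bar\Theta$, which the assumptions do not exclude.)

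The paper avoids any lower bound on $c_j$ by observing a sharper \emph{relative} estimate on the derivatives: from $\partial_{\alpha_j}F\le\alpha_j^{-1}F$ and $\partial_{\beta_k}(B^n)\le n\beta_k^{-1}B^n$ one gets $\partial_{\theta_m}c_k(\theta)\le (k+1)\theta_m^{-1}c_k(\theta)$ and, more generally, $|\nabla_\theta^l c_k(\theta)|\lesssim (k+1)^l c_k(\theta)$ on $\bar\Theta$. This puts the \emph{same} $c_k(\theta)$ in the numerator and the denominator, and the ratio is then controlled termwise by
\[
    \frac{(k+1)^l c_k(\theta)y_k}{c_0(\theta)+c_k(\theta)y_k}\ \le\ (k+1)^l\Big(\frac{c_k(\theta)y_k}{c_0(\theta)}\Big)^{p}\ \le\ C'(k+1)^l\rho_1^{\,pk}y_k^{\,p},
\]
using only $x/(1+x)\le x^{p}$ and the \emph{upper} bound $c_k\le C\rho_1^{\,k}$. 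The polynomial factor is absorbed into a slightly larger $\rho$. Your own interpolation $\sigma^2\ge c_0^{\,p}(c_jy_j)^{1-p}$ is in fact equivalent to this inequality; the missing ingredient is to replace the crude $|\nabla_\theta^l c_j|\le C\rho_1^{\,j}$ by $|\nabla_\theta^l c_j|\lesssim j^{\,l}c_j$, after which your computation goes through verbatim without any lower bound or Perron--Frobenius argument. The same device handles the second inequality in (i): for $|\theta-\tilde\theta|_1<\iota$ one has $c_k(\tilde\theta)\le(1+\delta)^k c_k(\theta)$ (componentwise monotonicity of $B(\theta)^k$), and then $\frac{c_k(\tilde\theta)y_k}{c_0+c_k(\theta)y_k}\le(1+\delta)^k(c_k(\theta)/c_0)^p y_k^p$ is summable once $(1+\delta)\rho_1^{\,p}<1$.
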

\begin{proof}[Proof of Lemma \ref{lem:analytical_properties}]
    (i) From Proposition \reff{prop:existencegarch}(iii) we obtain that the following explicit representation holds, where $F(y,\theta) := (\alpha_0 + \sum_{j=1}^{r}\alpha_j y_j,0,...,0)'$:
\begin{equation}
    \sigma(y,\theta)^2 = \sum_{k=0}^{\infty}\big(B(\theta)^k F(y_{k\rightarrow},\theta)\big)_1,\label{garch:sigma_explicit}
\end{equation}
where $y_{k\to} = (y_{k+1},y_{k+2},...)$. 
We conclude that
\begin{eqnarray}
    \sigma(y,\theta)^2 &=& \alpha_0 \sum_{k=0}^{\infty}(B(\theta))_{11} + \sum_{j=1}^{r}\alpha_j \sum_{k=0}^{\infty}(B(\theta)^k)_{11} y_{k+j}\nonumber\\
    &\overset{k'=k+j}{=}& \alpha_0 \sum_{k=0}^{\infty}(B(\theta))_{11} + \sum_{k'=1}^{\infty}\Big(\sum_{j=1}^{r}\alpha_j (B(\theta)^{k'-j})_{11}\Ii_{k' \ge j}\Big) y_{k'}\nonumber\\
    &=:& c_0(\theta) + \sum_{k'=1}^{\infty}c_{k'}(\theta) y_j.\label{garch:sigma_explicit2}
\end{eqnarray}
From Proposition \reff{prop:existencegarch}(iii) we obtain that $c_j(\theta) \ge 0$ satisfies
    \begin{equation}
        \sup_{\theta\in\Theta}|c_k(\theta)| \le C \cdot \rho^k\label{eq:garchgeometricdecaycoefficients}
    \end{equation}
    with some $\rho \in (0,1)$, $C > 0$ and $c_0(\theta) \ge \sigma_{min}^2 > 0$ (due to $\alpha_0 \ge \alpha_{min} > 0$).\\
    Furthermore we conclude that $\sigma(y,\theta)^2$ is three times continuously differentiable w.r.t. $\theta$ with
    \begin{equation}
        \nabla_{\theta}^k(\sigma(y,\theta)^2) = \nabla_{\theta}^k c_0(\theta) + \sum_{k=1}^{\infty}\nabla_{\theta}^k c_k(\theta)\cdot y_{k},\quad\quad k \in \{0,1,2,3\},\label{eq:explicitsigmarepresentation}
    \end{equation}
    where $(\nabla_{\theta}^k c_k(\theta))_{k}$ is still geometrically decaying with $\sup_{\theta \in \Theta}|\nabla_{\theta}^l c_l(\theta)|_{\infty} \le C \cdot \rho^k$, say (enlarge $C > 0,\rho \in (0,1)$ if necessary).\\
    
    In the following we make use of some arguments that were already used in \cite{garch2004}. Note that for $j = 0,...,r$, we have $\partial_{\alpha_j}F(y,\theta) \le \frac{1}{\alpha_j}F(y,\theta)$ and thus
    \begin{equation}
        \partial_{\alpha_j}c_k(\theta) \le \frac{1}{\alpha_j}c_k(\theta).\label{eq:bound_derivative1}
    \end{equation}
    For $j = 1,...,s$, we have ('$\le$' is meant component-wise)
    \[
        \partial_{\beta_j}(B(\theta)^k) = \sum_{i=1}^{k}B(\theta)^{i-1} (\partial_{\beta_j}B(\theta)) B(\theta)^{k-i} \le \frac{1}{\beta_j}k B(\theta)^k.
    \]
    since $\partial_{\beta_j}B(\theta) \le \frac{1}{\beta_j}B(\theta)$. We therefore obtain
    \begin{equation}
        \partial_{\beta_j}c_k(\theta) \le \frac{1}{\beta_j}k \cdot c_k(\theta).\label{eq:bound_derivative2}
    \end{equation}
    From \reff{eq:bound_derivative1} and \reff{eq:bound_derivative2} we obtain the inequalities
    \[
        \partial_{\theta_j}c_k(\theta) \le \frac{k+1}{\theta_j}c_k(\theta).
    \]
    Similar argumentations lead to the bounds for higher order derivatives (cf. also \cite{garch2004}):
    \[
        \partial_{\theta_{j_1}}\partial_{\theta_{j_2}}c_k(\theta) \le \frac{(k+1)^2}{\theta_{j_1} \theta_{j_2}}c_k(\theta), \quad\quad \partial_{\theta_{j_1}}\partial_{\theta_{j_2}}\partial_{\theta_{j_3}}c_k(\theta) \le \frac{(k+1)^3}{\theta_{j_1} \theta_{j_2}\theta_{j_3}}c_k(\theta).
    \]
    
    
    If $\bar\Theta \subset int(\tilde \Theta)$ is some compact subspace, we therefore obtain with $C_1 := \max\{\frac{1}{\theta_j}:j=1,...,r+s+1,\theta \in \bar\Theta\}$ for arbitrary small $p > 0$:
    \begin{eqnarray*}
        \frac{\partial_{\theta_j}(\sigma(y,\theta)^2)}{\sigma(y,\theta)^2} &\le& C_1 \frac{\sum_{k=0}^{\infty}(k+1) c_k(\theta)}{\sum_{k=0}^{\infty}c_k(\theta)}\\
        &\le& \frac{C_1 c_0(\theta)}{\sigma_{min}^2} + C_1\sum_{k=1}^{\infty}(k+1) \frac{c_k(\theta) y_k}{c_0(\theta) + c_k(\theta)y_k}\\
        &\le& \frac{C_1 c_0(\theta)}{\sigma_{min}^2} + \sum_{k=1}^{\infty}(k+1) \Big(\frac{c_k(\theta)}{c_0(\theta)}\Big)^{p} y_k^p,
    \end{eqnarray*}
    where we have used $\frac{x}{1+x} \le x^s$ in the last inequality. Since $c_k(\theta)^s \le C^s (\rho^{s})^k$, we can find $\tilde C > 0, \tilde \rho \in (0,1)$ such that
    \[
        \sup_{\theta \in \bar \Theta}\frac{|\partial_{\theta_j}(\sigma(y,\theta)^2)|}{\sigma(y,\theta)^2} \le \tilde C (1 + |y|_{(\tilde \rho^j)_j,p}^{p}),
    \]
    and similarly for the higher order derivatives (component-wise):
    \[
        \sup_{\theta \in\bar\Theta}\frac{|\nabla_{\theta}^l(\sigma(y,\theta)^2)|}{\sigma(y,\theta)^2} \le \tilde C (1 + |y|_{(\tilde \rho^j)_j,p}^{p}), \quad l = 1,2,3.
    \]
    For $\theta, \tilde \theta \in \bar\Theta$ and arbitrary small $p > 0$, choose $\delta > 0$ such that $\bar \rho := (1+\delta)\rho^p < 1$. Then choose $\iota > 0$ such that $|\theta - \tilde \theta|_1 < \iota$ implies (component-wise) $B(\tilde \theta) \le (1+\delta) B(\theta)$. For $|\theta - \tilde \theta| < \iota$, it then holds that $c_k(\tilde \theta) \le (1+\delta)^k c_k(\theta)$. We conclude that
    \begin{eqnarray*}
        \frac{\sigma(y,\tilde \theta)^2}{\sigma(y,\theta)^2} &\le& \frac{c_0(\tilde \theta)}{\sigma_{min}^2} + \sum_{k=1}^{\infty}\frac{c_k(\tilde \theta)y_k}{c_0(\theta) + c_k(\theta) y_k}\\
        &\le& \frac{c_0(\tilde \theta)}{\sigma_{min}^2} +  \sum_{k=1}^{\infty}\frac{c_k(\tilde \theta)}{c_k(\theta)}\cdot \Big(\frac{c_k(\theta)}{c_0(\theta)}\Big)^p y_k^p\\
        &\le& \frac{c_0(\tilde \theta)}{\sigma_{min}^2} + \frac{C^p}{\sigma_{min}^{2p}}\sum_{k=1}^{\infty}((1+\delta) \rho^{p})^{k}y_k^p.
    \end{eqnarray*}
    We conclude that there exists $\bar C > 0$, $\bar \rho \in (0,1)$ such that
    \[
        \sup_{\theta,\tilde \theta \in \bar \Theta, |\theta - \tilde \theta|_1 < \iota}\frac{\sigma(y,\tilde \theta)^2}{\sigma(y,\theta)^2} \le \bar C (1 + |y|_{(\bar \rho^j)_j,p}^p).
    \]
    (ii) From the differentiability of $\theta \mapsto \sigma(y,\theta)$ we obtain that $\theta \mapsto \ell(x,y,\theta)$ is three times continuously differentiable and
\begin{eqnarray}
    \ell(x,y,\theta) &=& \frac{1}{2}\Big(\frac{x}{\sigma(y,\theta)^2} + \log(\sigma(y,\theta)^2)\Big),\label{eq:likelihood_der1}\\
    \nabla_{\theta}\ell(x,y,\theta) &=& \frac{\nabla_{\theta}(\sigma(y,\theta)^2)}{2\sigma(y,\theta)^2}\Big(1-\frac{x}{\sigma(y,\theta)^2}\Big),\label{eq:likelihood_der2}\\
    \nabla_{\theta}^2 \ell(x,y,\theta) &=& \Big[-\frac{\nabla_{\theta}(\sigma(y,\theta)^2) \nabla_{\theta}(\sigma(y,\theta)^2)'}{2 \sigma(y,\theta)^4} +\frac{\nabla_{\theta}^2(\sigma(y,\theta)^2)}{2\sigma(y,\theta)^2} \Big]\Big(1-\frac{x}{\sigma(y,\theta)^2}\Big)\nonumber\\
    &&\quad\quad\quad + \frac{\nabla_{\theta}(\sigma(y,\theta)^2) \nabla_{\theta}(\sigma(y,\theta)^2)'}{2\sigma(y,\theta)^4}\cdot \frac{x}{\sigma(y,\theta)^2}.\label{eq:likelihood_der3}
\end{eqnarray}


For the corresponding quantity $\tilde \ell_{\tilde \theta}$ we obtain
\[
    \nabla_{\theta}\tilde\ell_{\tilde \theta}(\zeta,y,\theta) = \frac{\nabla_{\theta}(\sigma(y,\theta)^2)}{2\sigma(y,\theta)^2}\Big(1-\frac{\sigma(y,\tilde \theta)^2}{\sigma(y,\theta)^2}\zeta^2\Big).
\]
By (i), we obtain that for $p > 0$, there exist constants $\iota > 0$, $C_2 > 0$, $\rho_2 \in (0,1)$ such that (component-wise):
\begin{equation}
    \sup_{|\theta - \tilde \theta|_1 < \iota}|\nabla_{\theta}\tilde \ell_{\tilde \theta}(\zeta,y,\theta)| \le C_2(1 + |y|_{(\rho_2^j)_j,{p/2}}^{p/2})(1 + (1 + |y|_{(\rho_2^j)_j,{p/2}}^{p/2}) \zeta^2).\label{eq:analysis_intermediate}
\end{equation}
By using
\begin{eqnarray}
    |y|_{(\rho_2^j)_j,p/2}^{p/2} &\le& \sum_{j=1}^{\infty}\rho_2^{j/2}\cdot \rho_2^{j/2}y_j^{p} \le (\sum_{j=1}^{\infty}\rho_2^j)^{1/2}(\sum_{j=1}^{\infty}\rho_2^{j}y_j^{p})^{1/2}\nonumber\\
    &=& (1-\rho_2)^{-1/2}|y|_{(\rho_2^j)_j,p}^{p/2},\label{eq:standardestimation}
\end{eqnarray}
we can obtain the more compact form
\[
    \sup_{|\theta - \tilde \theta|_1 < \iota}|\nabla_{\theta}\tilde \ell_{\tilde \theta}(\zeta,y,\theta)| \le C_3(1 + |y|_{(\rho_2^j)_j,p}^p)(1+ \zeta^2).
\]
with some new constant $C_3 > 0$. Due to the similar structure, we can use similar techniques to obtain (component-wise):
\begin{equation}
    \sup_{|\theta - \tilde \theta|_1 < \iota}|\nabla_{\theta}^l\tilde \ell_{\tilde \theta}(\zeta,y,\theta)| \le C_3(1 + |y|_{(\rho_2^j)_j,p}^p)(1+ \zeta^2)=: M_p(y,\zeta), \quad\quad l = 1,2,3.\label{eq:likelihood_upperbound}
\end{equation}

    From \reff{eq:explicitsigmarepresentation} we deduce that (component-wise) for $l = 0,1,2$ with some constant $C_4 > 0$, uniformly in $\theta,\theta' \in \bar\Theta$:
    \begin{eqnarray}
        |\nabla_{\theta}^l(\sigma(y,\theta)^2) - \nabla_{\theta}^l(\sigma(y',\theta)^2)| &\le& C_4 |y - y'|_{(\rho^j)_j,1},\label{eq:sigmagarch_xdiff}
    \end{eqnarray}
    
    By using $|\frac{1}{\sigma(y,\theta)^2}-\frac{1}{\sigma(y,\theta)^2}| \le \frac{1}{\sigma_{min}^4}|\sigma(y,\theta)^2 - \sigma(y',\theta)^2|$ and the very rough bounds $\sigma(y,\theta)^2 \ge \sigma_{min}^2$, \reff{eq:sigmagarch_xdiff} and \reff{eq:explicitsigmarepresentation}, we obtain (component-wise) with some constant $C_5 > 0$:
    \[
        \sup_{\theta \in \bar \Theta}|\nabla_{\theta} \tilde \ell_{\tilde \theta}(\zeta,y,\theta) - \nabla_{\theta} \tilde \ell_{\tilde \theta}(\zeta,y',\theta)| \le C_5(1 + |y|_{(\rho^j)_j,1} + |y'|_{(\rho^j)_j,1})^2 |y - y'|_{(\rho^j)_j,1}(1+\zeta^2)
    \]
    Similar results can be obtained for higher derivatives (component-wise), $l = 1,2$:
    \begin{eqnarray}
        &&\sup_{\theta \in \bar\Theta}|\nabla_{\theta}^l \tilde \ell_{\tilde \theta}(\zeta,y,\theta) - \nabla_{\theta} \tilde \ell_{\tilde \theta}(\zeta,y',\theta)|\nonumber\\
        &\le& C_5(1 + |y|_{(\rho^j)_j,1} + |y'|_{(\rho^j)_j,1})^{1+l} |y - y'|_{(\rho^j)_j,1}(1+\zeta^2) =: N_l(y,y',\zeta).\label{eq:likelihood_difference_x}
    \end{eqnarray}
    Using \reff{eq:likelihood_upperbound} and \reff{eq:likelihood_difference_x}, we have for $l = 1,2$ and arbitrary small $p' > 0$ (use $\min\{1,x\} \le x^{p'}$):
    \begin{eqnarray*}
        && \sup_{\theta \in \bar\Theta}|\nabla_{\theta}^l \tilde \ell_{\tilde \theta}(\zeta,y,\theta) - \nabla_{\theta} \tilde \ell_{\tilde \theta}(\zeta,y',\theta)|\\
        &\le& \min\{ M_p(y,\zeta) + M_p(y',\zeta), N_l(y,y',\zeta)\}\\
        &=& \{M_p(y,\zeta) + M_p(y',\zeta)\}\min\{1, \frac{N_l(y,y',\zeta)}{M_p(y,\zeta) + M_p(y',\zeta)}\}\\
        &\le& \{M_p(y,\zeta) + M_p(y',\zeta)\} \Big(\frac{N_l(y,\zeta)}{M_p(y,\zeta) + M_p(y',\zeta)}\Big)^{p'}\\
        &=& \{M_p(y,\zeta) + M_p(y',\zeta)\}^{1-p'} N_l(y,\zeta)^{p'}.
    \end{eqnarray*}
    Choosing $p' \in (0, \min\{1,p'(1+l)\}$, we obtain
    \begin{eqnarray*}
        \{M_p(y,\zeta) + M_p(y',\zeta)\}^{1-p'} &\le& C_3^{1-p'}(1+|y|_{(\rho_2^j)_j,p}^p+|y'|_{(\rho_2^j)_j,p}^p)(1+\zeta^2)^{1-p'},\\
        N_l(y,y',\zeta)^{p'} &\le& C_5^{p'}(1 + |y|_{(\rho^{pj}),p}^p+|y'|_{(\rho^{pj}),p}^p) |y-y'|_{(\rho^{pj})_j,p}^p (1 + \zeta^2)^{p'}.
    \end{eqnarray*}
    With \reff{eq:standardestimation}, $\rho_3 := \max\{\rho_2,\rho^{p}\}$ and some constant $C_6 > 0$
    \[
        \sup_{\theta, \tilde \theta \in \bar\Theta, |\theta - \tilde \theta|_1 < \iota}|\nabla_{\theta}^l \tilde \ell_{\tilde \theta}(\zeta,y,\theta) - \nabla_{\theta}^l \tilde \ell_{\tilde \theta}(\zeta,y',\theta)| \le C_6 (1 +  |y|_{(\rho_3^j)_j,2p}^{2p}+|y'|_{(\rho_3^j)_j,2p}^{2p})|y-y'|_{(\rho^{j}_3)_j,p}^p (1+\zeta^2).
    \]
    By using \reff{eq:likelihood_upperbound} and the mean value theorem, we obtain for $l = 1,2$:
    \begin{eqnarray*}
        &&\sup_{\theta, \theta', \tilde\theta \in \bar\Theta, |\theta - \tilde \theta|_1 < \iota, |\theta' - \tilde \theta|_1 < \iota}\frac{|\nabla_{\theta}^l \tilde \ell_{\tilde \theta}(\zeta,y,\theta) - \nabla_{\theta}^l \tilde \ell_{\tilde \theta}(\zeta,y,\theta')|}{|\theta - \theta'|_1}\\
        &\le& \sup_{|\bar\theta - \tilde \theta|_1 < \iota}|\nabla_{\theta}^{l+1}\tilde \ell_{\tilde \theta}(\zeta,y,\bar \theta)|_{\infty} \le M_p(y,\zeta),
    \end{eqnarray*}
    giving the result.\\
    (iii) Using the representations \reff{eq:likelihood_der2}, \reff{eq:likelihood_der3} and the inequalities \reff{eq:sigmagarch_xdiff}, \reff{eq:explicitsigmarepresentation} and $\sigma(y,\theta)^2 \ge \sigma_{min}^2$, this is an immediate consequence.
\end{proof}

\bibliographystyle{plain}
\bibliography{references}

\begin{thebibliography}{10}

\bibitem{berkes2003}
Istv\'an Berkes, Lajos Horv\'ath, and Piotr Kokoszka.
\newblock G{ARCH} processes: structure and estimation.
\newblock {\em Bernoulli}, 9(2):201--227, 2003.

\bibitem{bloom2007uncertainty}
Nick Bloom.
\newblock Uncertainty and the dynamics of {R}\&{D}.
\newblock {\em American Economic Review}, 97(2):250--255, 2007.

\bibitem{bollerslev1992quasi}
Tim Bollerslev and Jeffrey~M Wooldridge.
\newblock Quasi-maximum likelihood estimation and inference in dynamic models
  with time-varying covariances.
\newblock {\em Econometric reviews}, 11(2):143--172, 1992.

\bibitem{bougerol1992stationarity}
Philippe Bougerol and Nico Picard.
\newblock Stationarity of garch processes and of some nonnegative time series.
\newblock {\em Journal of econometrics}, 52(1-2):115--127, 1992.

\bibitem{gammaconditionpaper}
Philippe Bougerol and Nico Picard.
\newblock Strict stationarity of generalized autoregressive processes.
\newblock {\em Ann. Probab.}, 20(4):1714--1730, 1992.

\bibitem{engle1982autoregressive}
Robert~F Engle.
\newblock Autoregressive conditional heteroscedasticity with estimates of the
  variance of united kingdom inflation.
\newblock {\em Econometrica: Journal of the Econometric Society}, pages
  987--1007, 1982.

\bibitem{fan2014quasi}
Jianqing Fan, Lei Qi, and Dacheng Xiu.
\newblock Quasi-maximum likelihood estimation of {G}{A}{R}{C}{H} models with
  heavy-tailed likelihoods.
\newblock {\em Journal of Business \& Economic Statistics}, 32(2):178--191,
  2014.

\bibitem{garch2004}
Christian Francq and Jean-Michel Zako{\"\i}an.
\newblock Maximum likelihood estimation of pure {GARCH} and {ARMA}-{GARCH}
  processes.
\newblock {\em Bernoulli}, 10(4):605--637, 2004.

\bibitem{francq2012strict}
Christian Francq and Jean-Michel Zako{\"\i}an.
\newblock Strict stationarity testing and estimation of explosive and
  stationary generalized autoregressive conditional heteroscedasticity models.
\newblock {\em Econometrica}, 80(2):821--861, 2012.

\bibitem{hafner2018testing}
Christian Hafner.
\newblock Testing for bubbles in cryptocurrencies with time-varying volatility,
  2018.

\bibitem{harvey2018testing}
David~I Harvey, Stephen~J Leybourne, and Yang Zu.
\newblock Testing explosive bubbles with time-varying volatility.
\newblock {\em Econometric Reviews}, 2018.

\bibitem{jensen2004asymptotic}
S{\o}ren~Tolver Jensen and Anders Rahbek.
\newblock Asymptotic normality of the qmle estimator of arch in the
  nonstationary case.
\newblock {\em Econometrica}, 72(2):641--646, 2004.

\bibitem{jurado2015measuring}
Kyle Jurado, Sydney~C Ludvigson, and Serena Ng.
\newblock Measuring uncertainty.
\newblock {\em American Economic Review}, 105(3):1177--1216, 2015.

\bibitem{lee1994asymptotic}
Sang-Won Lee and Bruce~E Hansen.
\newblock Asymptotic theory for the garch (1, 1) quasi-maximum likelihood
  estimator.
\newblock {\em Econometric theory}, 10(1):29--52, 1994.

\bibitem{1990stationarity}
Daniel~B Nelson.
\newblock Stationarity and persistence in the {G}{A}{R}{C}{H} (1, 1) model.
\newblock {\em Econometric theory}, 6(3):318--334, 1990.

\bibitem{Pao:18}
Marc~S.\ Paolella.
\newblock {\em Linear Models and Time-Series Analysis: Regression, ANOVA, ARMA
  and GARCH}.
\newblock Chichester, 2018.

\bibitem{peng2003least}
Liang Peng and Qiwei Yao.
\newblock Least absolute deviations estimation for arch and garch models.
\newblock {\em Biometrika}, 90(4):967--975, 2003.

\bibitem{phillips2011explosive}
Peter~CB Phillips, Yangru Wu, and Jun Yu.
\newblock Explosive behavior in the 1990s nasdaq: When did exuberance escalate
  asset values?
\newblock {\em International economic review}, 52(1):201--226, 2011.

\bibitem{vandi}
A.~W. van~der Vaart.
\newblock {\em Asymptotic statistics}, volume~3 of {\em Cambridge Series in
  Statistical and Probabilistic Mathematics}.
\newblock Cambridge University Press, Cambridge, 1998.

\bibitem{existencegarch}
Wei~Biao Wu and Wanli Min.
\newblock On linear processes with dependent innovations.
\newblock {\em Stochastic Processes and their Applications}, 115(6):939 -- 958,
  2005.

\bibitem{wushao2004}
Wei~Biao Wu and Xiaofeng Shao.
\newblock Limit theorems for iterated random functions.
\newblock {\em J. Appl. Probab.}, 41(2):425--436, 2004.

\bibitem{zhouwu2011}
Wei~Biao Wu and Zhou Zhou.
\newblock Gaussian approximations for non-stationary multiple time series.
\newblock {\em Statist. Sinica}, 21(3):1397--1413, 2011.

\bibitem{zhangwu2017}
D.~{Zhang} and W.~B. {Wu}.
\newblock {Gaussian Approximation for High Dimensional Time Series}.
\newblock {\em Ann. Statist.}, 45(5):1895--1919, 2017.

\end{thebibliography}
\end{document}